\DeclarePairedDelimiter{\ceil}{\lceil}{\rceil}
\newcommand{\stkout}[1]{\ifmmode\text{\sout{\ensuremath{#1}}}\else\sout{#1}\fi}
\theoremstyle{remark}
\newtheorem{remark}{Remark}
\def\bma{{\bm a}}
\def\bme{{\bm e}}
\def\bmu{{\bm u}}
\def\bmzero{{\bm 0}}
\def\bmone{{\bm 1}}
\def\bmA{{\bm A}}
\def\bmB{{\bm B}}
\def\bmC{{\bm C}}
\def\bmD{{\bm D}}
\def\bmQ{{\bm Q}}
\def\bmX{{\bm X}}
\def\bmZ{{\bm Z}}
\def\bmeta{{\bm \eta}}
\def\bmphi{{\bm \phi}}
\def\bmsigma{{\bm \sigma}}
\def\bmPhi{{\bm \Phi}}
\def\bmpartial{{\bm \partial}}
\newcounter{mnotecount}
\newcommand{\mnotex}[1]
{\protect{\stepcounter{mnotecount}}$^{\mbox{\footnotesize $\bullet$\themnotecount}}$ 
\marginpar{
\raggedright\tiny\em
$\!\!\!\!\!\!\,\bullet$\themnotecount: #1} }
\title{Controlled regularity at future null infinity from past asymptotic initial data: massless fields}
\author{G. Taujanskas\footnote{Electronic address: \texttt{taujanskas@dpmms.cam.ac.uk}} ~and J. A. Valiente Kroon\footnote{Electronic address: \texttt{j.a.valiente-kroon@qmul.ac.uk}}}
\date{\today}
\newtheorem{theorem}{Theorem}[section]
\newtheorem{lemma}[theorem]{Lemma}
\newtheorem{proposition}[theorem]{Proposition}
\newcommand{\scri}[0]{\mathscr{I}}
\renewcommand{\d}{\operatorname{d}\!}
\newcommand{\la}{\lesssim}
\renewcommand{\leq}{\leqslant}
\renewcommand{\geq}{\geqslant}
\begin{document}

\maketitle


\begin{abstract}
We study the relationship between asymptotic characteristic initial data at past null infinity and the regularity of solutions at future null infinity for the massless linear spin-$s$ field equations on Minkowski space. By quantitatively controlling the solutions on a causal rectangle reaching the conformal boundary, we relate the (generically singular) behaviour of the solutions near past null infinity, future null infinity, and spatial infinity. Our analysis uses Friedrich's cylinder at spatial infinity together with a careful Gr\"onwall-type estimate that does not degenerate at the intersection of null infinity and the cylinder (the so-called critical sets).
\end{abstract}

\setcounter{tocdepth}{1}
\tableofcontents

\section{Introduction}
\label{Section:Introduction}

In the early 1960s, Penrose famously observed \cite{Pen63,Pen65a} that techniques from conformal geometry could be used to study the asymptotic regime of fields in general relativity. Since zero rest-mass fields carry no preferred length scale, their equations of motion exhibit an essential conformal invariance, making Penrose's technique particularly suited in the study of the asymptotics of massless fields. By attaching to a spacetime a conformal boundary ``at infinity", Penrose was able to translate questions regarding the asymptotic behaviour of physical fields into questions about the \emph{local} regularity of suitably rescaled (or \emph{unphysical}) fields, now evolving on a conformally rescaled (unphysical) spacetime. Spacetimes for which the procedure of attaching such a boundary was possible were termed \emph{asymptotically simple}, and since its inception the notion has been instrumental in the development of the current understanding of the asymptotic behaviour of massless fields, including gravity \cite{Ger76,PenRin86,Fri04}.

An important application of the idea of asymptotic simplicity has been to \emph{scattering problems}, the study of the relationship between the asymptotics of fields in the distant past and the distant future. Penrose's conformal boundary---called \emph{null infinity}\footnote{Null infinity, denoted $\scri$, derives its name as a result of being the locus of all endpoints of inextendible null geodesics in the spacetime. It is a co-dimension one hypersurface which may be spacelike, timelike, or null, corresponding to the sign of the cosmological constant $\Lambda$ being positive, negative, or zero, respectively. When $\Lambda \geq 0$, $\scri$ has a past component and a future component, denoted $\scri^-$ and $\scri^+$.}---fits very naturally here, and was first understood in relation to the classical Lax--Phillips scattering theory\footnote{Classical scattering theory has a long history going back to at least the works of Lax and Phillips \cite{LaxPhillips1964,LaxPhillips1967} and Friedlander \cite{Friedlander1962,Friedlander1964,Friedlander1967} in the 1960s. For a more complete bibliography, see for example \cite{Taujanskas2018} and references therein.} by Friedlander in \cite{Fried80}. In the case of zero cosmological constant, Friedlander showed that scattering problems could be formulated as \emph{characteristic initial value problems} with data at past or future null infinity, and the ideas of such \emph{conformal} formulations of scattering were later taken up by Baez, Segal and Zhou \cite{BaezSegalZhou1990} and subsequently many others (see e.g. \cite{NicTau22} and references therein). In cases when the field equations are linear or the data is highly regular, characteristic initial value problems are known to be well-posed \cite{Hor90,Ren90,ChrPae12,ChrPae13b,ChruscielCabetWafo2014}, giving rise to a \emph{scattering operator}\footnote{We gloss over the issue of asymptotic completeness here, i.e. that a scattering operator should be invertible.}: a map from, say, data on $\scri^-$ to induced data on $\scri^+$. A detailed understanding of this scattering operator, however, requires an understanding of how the structure of the data on $\scri^-$ is related to the structure of the solution near $\scri^+$.

In general, this is a non-trivial task, in part due to the behaviour of the fields near spatial infinity $i^0$. In Penrose's picture \cite{Pen65a} of smoothly compactified Minkowski space, the two components of null infinity meet at a point, $i^0$, where the integral curves of the generators of $\scri^\pm$ intersect. Spatial infinity $i^0$ is therefore a caustic, and, unless the data is supported away from $i^0$, solutions generally evolve non-smoothly near $i^0$. It is now understood that generic behaviour is at best \emph{polyhomogeneous} near spatial and null infinity \cite{Fri98a,HinVas20}, i.e. the asymptotic expansions of fields involve polynomial and logarithmic\footnote{It was observed in \cite{Val04a,Val04d} that there are two conceptually distinct classes of logarithmic terms. One is a consequence of the caustic nature of $i^0$, and is the focus of the present paper. The second arises from the singularity in the Weyl tensor at $i^0$ on spacetimes with non-zero ADM mass. In full general relativity these two classes are unavoidably intertwined, but the second is of course absent on Minkowski space.} terms of the form $x^n \log^m x$. In an effort to resolve the detailed structure of spatial infinity, in 1998 Friedrich \cite{Fri98a} introduced\footnote{Before that, Ashtekar and Hansen \cite{AshHan78} had put forward a similar construction of a \emph{hyperboloid at spatial infinity}, which turns out to be closely related to Friedrich's cylinder \cite{MagVal21}.} a different conformal representation of $i^0$, now known as the \emph{cylinder at spatial infinity}, or the \emph{F-gauge}. In general, the utility of this conformal gauge lies in the fact that it permits the formulation of a regular initial value problem in a neighbourhood of spatial infinity for Friedrich's \emph{conformal Einstein field equations} \cite{Fri98b}. In full generality, Friedrich's cylinder is constructed using conformal geodesics, and so carries a strong geometric underpinning. In the case of Minkowski space, however, the cylinder may be constructed using an ad hoc conformal transformation, which we recap in \Cref{Section:CylinderMinkowski}. Briefly, the key idea is to blow up the point $i^0$ to a $(1+2)$-dimensional submanifold $\mathcal{I}$ with topology $(-1,1) \times \mathbb{S}^2$, which then becomes a total characteristic of the field equations. The ends $\mathcal{I}^\pm = \{ \pm 1\} \times \mathbb{S}^2$ of this cylinder intersect $\scri^\pm$; on these spheres---called the \emph{critical sets}---the field equations lose symmetric hyperbolicity and therefore degenerate. It is this degeneracy that is at the heart of the logarithmic divergences which we study in this paper.

Friedrich's cylinder has been used to study the polyhomogeneity of spin-0, spin-1 and spin-2 fields starting from spacelike Cauchy data in \cite{Fri98a,Fri03b,Val03a,MinMacVal22,GasVal22,Val07b,Val09a}. The case of the characteristic initial value problem near $i^0$, for Friedrich's conformal Einstein equations, has been investigated by Paetz \cite{Pae18}, who studied conditions on the characteristic data which guarantee smoothness at the critical sets $\mathcal{I}^\pm$. The degeneracy of the spin-$s$ equations at $\mathcal{I}^\pm$ arises as a loss of rank in the matrix multiplying the time derivatives, destroying symmetric hyperbolicity. Since standard estimates for symmetric hyperbolic systems (which rely on very general properties of the principal part of the equations \cite{Joh91}) now fail, new estimates are needed. In this paper we study the linear massless spin-$s$ equations on Minkowski space, for\footnote{Here are throughout the paper, $s\neq 0$. Although results similar to the ones presented in this paper appear obtainable for spin-0 fields, we do not handle the case of the wave equation as it is not in the form of a zero rest-mass free field equation $\nabla_A{}^{A'} \phi_{AB\ldots M} = 0$, which manifestly exhibits structures on which our estimates rely.} $s \in \frac{1}{2} \mathbb{N}$, starting from characteristic data on $\scri^-$, and relate the structural properties of solutions in the regions near $\scri^-$, $i^0$, and $\scri^+$. Our starting point are the estimates of Friedrich \cite{Fri03b}, which we generalize to all non-zero spins\footnote{While the most attention in the literature appears to have been devoted to spin-2 fields (gravity), we believe that the validity of our results for all $s$ may find applications in related areas. In particular, the case of Dirac fields ($s=1/2$) may be relevant for Lorentzian index theory \cite{BarGauduchonMoroianu2005,BarStrohmaier2020}.}, this being an easy extension with the relevant structure present in the equations for any $s$.  Friedrich's estimates \cite{Fri03b}, valid near $i^0$ and $\scri^+$, exploit the specific lower order structure of the spin-$2$ equations, and are necessarily weaker in the sense that they allow polyhomogeneous solutions with logarithmic divergences at $\mathcal{I}^\pm$. Having extended these to all spins, we then proceed to adapt the ideas in \cite{Fri03b} to construct estimates near $\scri^-$. This is the main contribution of our paper. Specifically, we prescribe initial data for spin-$s$ fields on a portion of past null infinity which extends all the way to $i^0$, and on an outgoing null hypersurface $\underline{\mathcal{B}}_\varepsilon$ emanating from $\scri^-$ (see \Cref{OptimalExistenceDomain}), and follow Luk's strategy\footnote{Luk's approach, for the Einstein equations, relies on the well-known observation that when written in a double null gauge, these form a symmetric hyperbolic system with a hierarchical structure. A similar hierarchical structure, for the spin-$s$ equations, is key to obtaining the estimates in \cite{Fri03b} as well as the ones presented in this paper.} \cite{Luk12} to construct an optimal existence domain for the characteristic initial value problem. In this domain we prove Gr\"onwall-type estimates for the solution by carefully arranging the sign of a suitable constant, which we achieve by taking sufficiently many derivatives in the boundary-defining variable $\tau = t/r$.
\begin{figure}[h]
\begin{center}
\includegraphics[width=0.45\textwidth]{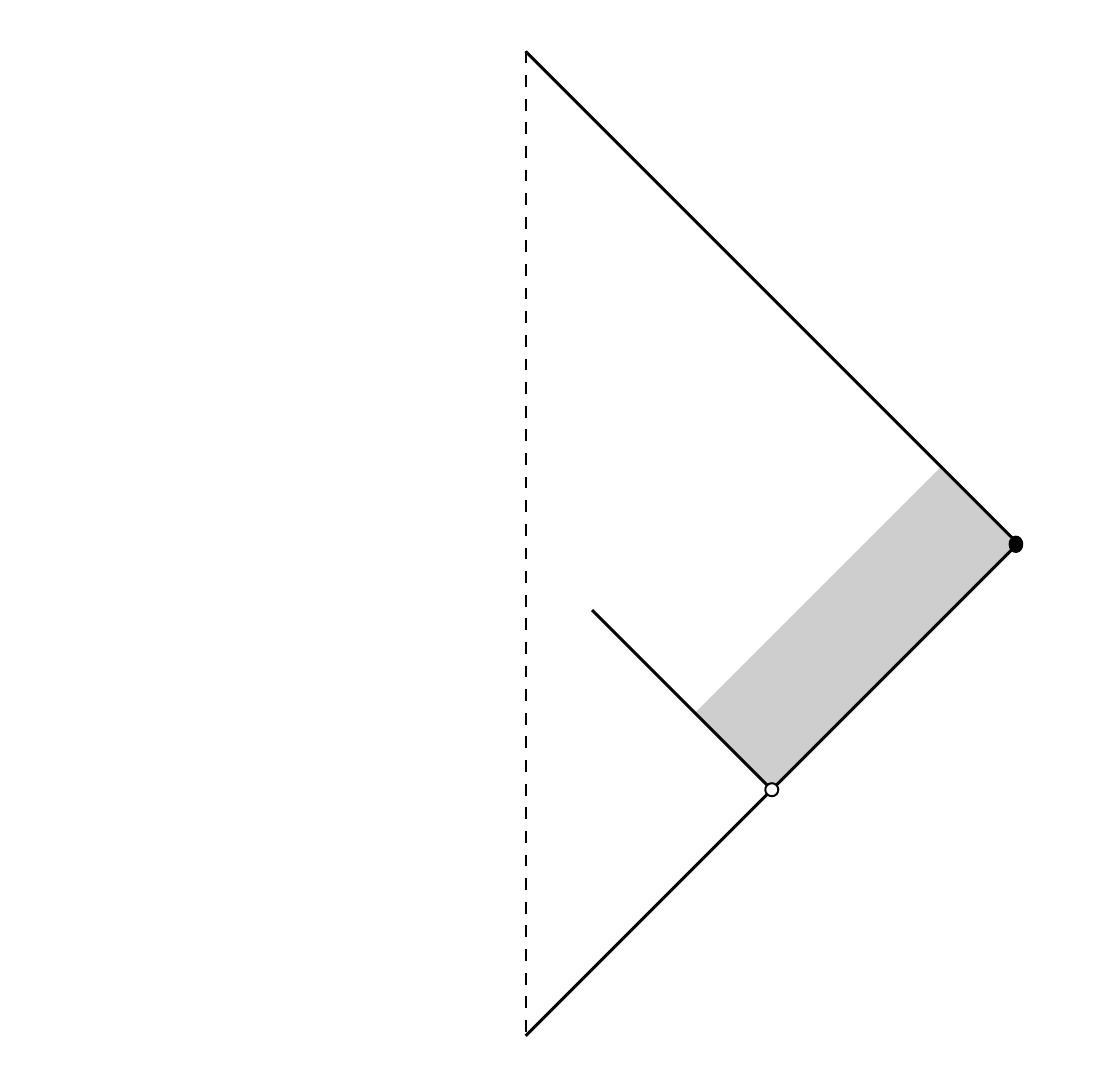}
\put(-50,135){$\mathscr{I}^+$}
\put(-50,53){$\mathscr{I}^-$}
\put(-10,95){$i^0$}
\put(-90,55){$\underline{\mathcal{B}}_\varepsilon$}
\put(-51,77){$\mathscr{D}$}
\end{center}
\caption{The Penrose diagram of Minkowski space showing the existence domain $\mathscr{D}$. The main technical challenge in this analysis is the degeneracy of the spin-$s$ equations at $i^0$.}
\label{OptimalExistenceDomain}
\end{figure}

The necessity for weaker estimates near $i^0$ reflects the general fact that solutions there acquire polyhomogeneous expansions \cite{Fri98b}. The computation of these expansions is done by postulating an Ansatz in terms of spin-weighted spherical harmonics, and then using the total characteristic nature of the cylinder $\mathcal{I}$ at spatial infinity. As a consequence, the spin-$s$ equations reduce to intrinsic transport equations on $\mathcal{I}$, and, at least formally, the terms in the expansion can be computed by solving Jacobi differential equations up and down the cylinder. The mentioned estimates then show that these expansions represent true solutions. The logarithmic terms in the solutions arise as logarithmic singularities in Jacobi polynomials of the second kind as one solves the ODEs on $\mathcal{I}$. An inspection of the computation of the asymptotic expansions reveals that, for our Ansatz\footnote{Our Ansatz, given in \Cref{Section:Expansions_In_Harmonics}, roughly postulates that the $p$-th derivative of the field contains no spherical harmonics of order higher than $p$. We make this assumption on the basis of simplifying the presentation. In the general case when all spherical harmonics are permitted, logarithmic terms will arise for each spherical harmonic with order $\geq p$.}, the logarithmic terms are, at each order in the expansion, associated to a specific ``top order" spherical harmonic. Moreover, at each order these logarithmic terms are regularized in a very specific way by multiplication by a polynomial expression which vanishes at $\mathcal{I}^\pm$. The higher
the order in the expansion, the higher the order of this smoothing polynomial---consequently, the logarithmic divergences become milder as one looks higher in the expansion.

\subsection*{Main result}

The \textbf{main result} of this article is as follows. Given asymptotic characteristic initial data on $\mathscr{I}^-$ for the massless spin-$s$ equations which possesses a regular asymptotic expansion towards $\mathcal{I}^-$, there exists a unique solution to the spin-$s$ equations in a neighbourhood of spatial infinity which contains pieces of both past and future null infinity, as shown in \Cref{OptimalExistenceDomain}, and moreover, one is able to control the regularity of the solution at $\scri^+$ in terms of specific properties of the data on $\scri^-$. The regularity of the solution at $\scri^+$ is determined not solely by the regularity of the data on $\scri^-$, but also by its multipolar structure at the critical set $\mathcal{I}^-$.  In particular, simply requiring additional regularity of the characteristic data does not necessarily result in improved behaviour at future null infinity. This is consistent with previous results obtained in the analysis of the corresponding Cauchy problem \cite{Fri03b,Val03a}. A detailed statement of our main result is given in \Cref{MainTheorem}.

\subsubsection*{Strategy of the proof}
As mentioned, in our analysis we use Friedrich's cylinder at spatial infinity, the F-gauge. In this gauge the causal domain depicted in \Cref{OptimalExistenceDomain} corresponds to the (light and dark) grey area in \Cref{Fig:CylinderIntro}. We divide this domain in two subdomains, the \emph{lower domain} $\underline{\mathcal{N}}_\epsilon$ in light grey, and the \emph{upper domain} $\mathcal{N}_1$ in dark grey, which are separated by a spacelike hypersurface $\mathcal{S}_{-1+\varepsilon}$ terminating at the cylinder at spatial infinity $\mathcal{I}$. \textbf{On the upper domain} $\mathcal{N}_1$ we look for solutions which can be written as a formal asymptotic expansion around $\mathcal{I}$ plus a remainder term. By using the total characteristic nature of $\mathcal{I}$, the terms in this formal expansion are, at least in principle, explicitly computable, and the regularity of the expansion can be controlled by fine-tuning the multipolar structure of the initial data. For the remainder, by generalizing the estimates of \cite{Fri03b}, we ensure its control all the way up to $\mathscr{I}^+$ in terms of the regularity of the Cauchy data. There is a connection here between the regularity of the remainder and the order of the formal expansion; the remainder becomes more regular as the order of the expansion increases. \textbf{On the lower domain} $\underline{\mathcal{N}}_\varepsilon$, we again look for solutions in the form of an asymptotic expansion plus a remainder. Here, however, the geometric/structural properties of the setting require an expansion with respect to the boundary defining function of past null infinity, $1+\tau = 1+ t/r$, rather than that of the cylinder $\mathcal{I}$. For simplicity, we assume that this expansion takes the form of a regular Taylor expansion in powers of $1+\tau$. In particular, we assume the data on $\scri^-$ is sufficiently regular. To control the remainder, we construct estimates using an adaptation of the techniques in \cite{Fri03b}. This construction proceeds in two stages: first, by making use of a bootstrap assumption on the radiation field, we estimate the other $2s-1$ components of the spin-$s$ field; in the second stage, we prove the bootstrap bound on the radiation field (recall that the radiation field encodes the freely specifiable data for the spin-$s$ equations on $\mathscr{I}^-$). As we mentioned in the introduction, these estimates are obtained by arranging the sign of a certain constant to be negative, which is achieved by taking derivatives in $\tau$ and carefully using the lower order structure of the equations. As in the case of the upper domain, the regularity of this remainder depends on the order of the asymptotic expansion in a similar way. \textbf{Finally}, we stitch together the solutions on the lower and upper domains by ensuring that the lower solution has enough control at the spatial hypersurface $\mathcal{S}_{-1 + \epsilon}$ to be able to apply our estimates in the upper domain. We thus obtain a statement controlling the solution up to future null infinity in terms of the asymptotic characteristic initial data on past null infinity. The existence and uniqueness of solutions in each domain is proved using standard last slice arguments.

\begin{figure}[h]
\begin{center}
\includegraphics[width=0.48\textwidth]{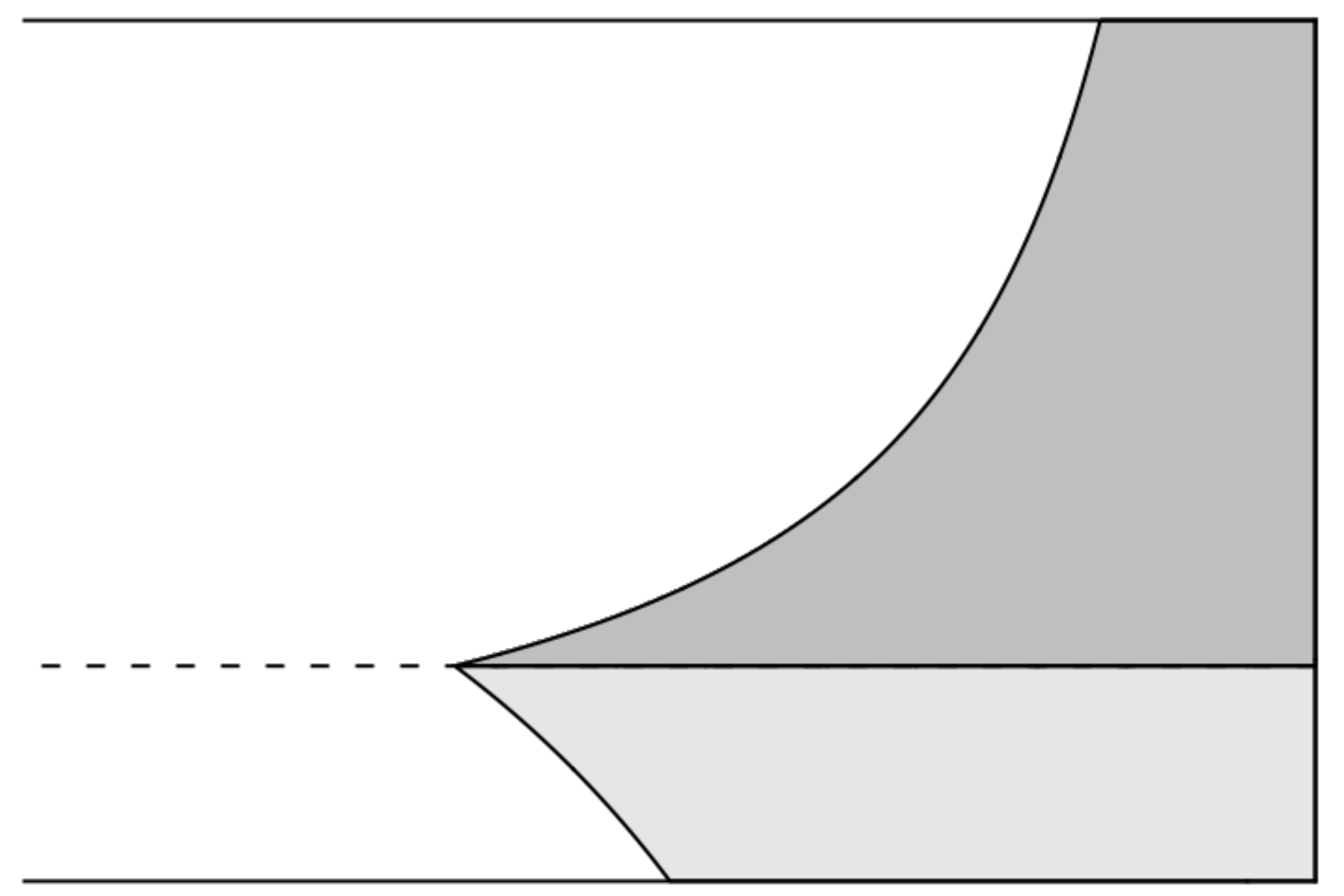}
\put(-100,147){$\mathscr{I}^+$}
\put(-100,-12){$\mathscr{I}^-$}
\put(0,70){$\mathcal{I}$}
\put(0,145){$\mathcal{I}^+$}
\put(-7,137){$\bullet$}
\put(0,-10){$\mathcal{I}^-$}
\put(-7,0){$\bullet$}
\put(-190,45){$\mathcal{S}_{-1+\varepsilon}$}
\put(-140,15){$\underline{\mathcal{B}}_\varepsilon$}
\put(-70,15){$\underline{\mathcal{N}}_\varepsilon$}
\put(-45,70){$\mathcal{N}_1$}
\put(-90,80){$\mathcal{B}_1$}

\end{center}
\caption{A depiction of the existence domain $\mathscr{D} = \underline{\mathcal{N}}_\epsilon \cup \mathcal{N}_1$ in the F-gauge. This is not a Penrose diagram but rather a coordinate diagram. The domain $\mathscr{D}$ consists of the union of two domains: the lower domain $\underline{\mathcal{N}}_\varepsilon$, which arises as the development of an asymptotic characteristic initial value problem, and which is bounded in the future by a spacelike hypersurface $\mathcal{S}_{-1+\varepsilon}$ which reaches the cylinder at spatial infinity $\mathcal{I}$; and the upper domain $\mathcal{N}_1$, which arises as the development of a standard Cauchy initial value problem from data on  $\mathcal{S}_{-1+\varepsilon}$. The estimates constructed in the main text will control the solution up to $\mathscr{I}^+$ despite the degeneracy of the equations at $\mathcal{I}^\pm$. }
\label{Fig:CylinderIntro}
\end{figure}

\subsection*{Outline of the article}

In \Cref{Section:GeometricSetUp} we provide a succinct discussion of Friedrich's framework of the cylinder at spatial infinity as well as lay out general properties of the spin-$s$ field equations. In \Cref{Section:ForwardEstimates} we construct the estimates in the upper domain, where, along the way, we extend Friedrich's estimates \cite{Fri03b} to arbitrary spins $s \in \frac{1}{2} \mathbb{N}$. In \Cref{Section:CharacteristicProblem} we discuss the set-up of the asymptotic characteristic initial value problem. Here we also discuss the difficulties that arise when attempting to analyse the behaviour of solutions near spatial infinity. \Cref{Section:BackwardEstimates} contains the main insight of our paper and provides the construction of our estimates in the lower domain. In \Cref{Section:FromPastToFuture} we combine the estimates in the lower and upper domains to establish the control of solutions at future null infinity in terms of past asymptotic initial data. We conclude in \Cref{Section:Conclusions} with some remarks and prospective directions for future research. In an effort to be accessible to a wider audience, we also include four appendices containing material essential for the analysis but whose inclusion in the main text would hinder the flow of the reading. Much (but not all) of this material is scattered throughout the various references noted in the introduction, and we hope that collecting it here will make the subject more approachable. \Cref{Appendix:SU2} discusses properties of $\mathrm{SU}(2)$ which are relevant to the construction of the estimates in the main text. \Cref{Appendix:SpinSEqns} provides a detailed derivation of the spin-$s$ field equations in the F-gauge. \Cref{Appendix:SolutionJets} gives a detailed account of the construction of asymptotic expansions in a neighbourhood of the cylinder at spatial infinity---these expansions are central for the construction of the solution in the upper domain. Finally, \Cref{Appendix:ExpansionsNullInfinity} presents the construction of asymptotic expansions in a neighbourhood of past null infinity---these expansions are key to the construction of the solution in the lower domain.

\subsection*{Conventions and notation}

Our conventions are consistent with \cite{CFEBook}. In particular, our metric signature is $(+, -, -, -)$, and the Riemann curvature tensor associated with the Levi-Civita connection of a metric $g_{ab}$ is defined by \mbox{$[\nabla_a, \nabla_b] u^d = R^d_{\phantom{d}cab} u^c$}. For a given spin dyad $\{ o^A, \iota^A \}$ with $o_A \iota^A = 1$, we write $\epsilon_\bmA{}^A$, $\bmA \in \{ 0, 1 \}$, to denote $\epsilon_0{}^A = o^A$ and $\epsilon_1{}^A = \iota^A$. Spinorial indices are raised and lowered using the antisymmetric $\epsilon$-spinor $\epsilon_{AB} = o_A \iota_B - \iota_A o_B$ (with inverse $\epsilon^{AB} = o^A \iota^B - \iota^A o^B$), e.g. $\xi_B = \xi^A \epsilon_{AB}$, using the convention that contracted indices should be ``adjacent, descending to the right". As usual, the spacetime metric $g_{ab}$ decomposes as $g_{ab} = \epsilon_{AB} \bar{\epsilon}_{A' B'}$, where $\bar{\epsilon}_{A'B'} = \overline{\epsilon_{AB}}$. The spin dyad $\epsilon_\bmA{}^A$ gives rise to a tetrad of null vectors $\bme_{\bmA \bmA'} = \bme_{\bmA \bmA'}{}^{AA'} \partial_{AA'} = \epsilon_\bmA{}^A \bar{\epsilon}_{\bmA'}{}^{A'} \partial_{A A'}$. The spin connection coefficients $\Gamma_{\bmA \bmA'}{}^\bmB{}_\bmC$ are then defined as $\Gamma_{\bmA \bmA'}{}^\bmB{}_\bmC = - \epsilon_\bmC{}^Q \nabla_{\bmA \bmA'} \epsilon^\bmB{}_Q$, where $\nabla_{\bmA \bmA'} = \bme_{\bmA \bmA'}{}^{AA'} \nabla_{A A'}$. In Newman--Penrose language, the spin connection coefficients $\Gamma_{\bmA \bmA' \bmB \bmC} = \Gamma_{\bmA \bmA' \bmC \bmB}$ are exactly the spin coefficients $\gamma_{\bmA \bmA' \bmB \bmC}$ (see \cite{PenRin86}, Summary of Vol. 1). When integrating over $\mathrm{SU}(2)$, $\mu$ denotes the normalized Haar measure on $\mathrm{SU}(2)$. Throughout the majority of the paper we will be working on \emph{rescaled} Minkowski space (in the F-gauge), and we shall denote objects (fields, connections, spin dyad) on this spacetime plainly. When working on physical Minkowski space, we will denote objects with a tilde, e.g. $\tilde{\phi}$.

\section{Geometric setup}
\label{Section:GeometricSetUp}
In this section we discuss the general geometric setup for our analysis in a neighbourhood of spatial and null infinity.

\subsection{Point representation of spatial infinity: the Penrose gauge}

Let $(y^\mu)$ be Cartesian coordinates on the Minkowski spacetime
$(\mathbb{R}^4,\tilde{\bmeta})$ with the standard metric  $\tilde{\bmeta} = \tilde{\eta}_{\mu \nu} \mathbf{d} y^\mu \otimes \mathbf{d} y^\nu$, where $\tilde{\eta}_{\mu \nu} = \operatorname{diag}(1, -1, -1, -1)$. In spherical coordinates we have that 
\[ 
\tilde{\bmeta} = \mathbf{d} t\otimes \mathbf{d} t- \mathbf{d} r\otimes \mathbf{d} r - r^2 \bmsigma, 
\]
where $t = y^0$, $r^2 = \sum_{i=1}^3 y_i^2$, and $\bmsigma$ denotes the standard metric on $\mathbb{S}^2$. On the  region
\[
\mathcal{N} \equiv \{ (y^\mu) \in \mathbb{R}^4 \; | \; \tilde{\eta}_{\mu\nu}
y^\mu y^\nu <0 \}
\]
which includes the asymptotic region around spatial infinity, we consider the coordinate inversion
\[
y^\mu \mapsto x^\mu = - \frac{y^\mu }{y_\nu y^\nu},
\]
and formally extend the domain of validity of the coordinates
$(x^\mu)$ to include the set $\mathscr{I}' = \{
x^\mu x_\mu =0 \} = \{ y^\mu y_\mu = - \infty \}$. The set $\mathscr{I}'$ formally forms the part of the boundary of $\mathcal{N}$ at infinity. Defining $\rho^2 \equiv \sum_{i=1}^3 x_i^2 $, the metric $\tilde{\bmeta}$ reads
\begin{equation}
\label{inverted_Minkowski_metric}
 \tilde{\bmeta} = \frac{1}{(x^\sigma x_\sigma)^2}
   \tilde{\eta}_{\mu\nu} \mathbf{d} x^\mu \otimes \mathbf{d} x^\nu = \frac{1}{\left( \rho^2 -(x^0)^2
   \right)^2}\left( \mathbf{d} \, (x^0)\otimes \mathbf{d} \, (x^0)  -
   \mathbf{d} \rho \otimes \mathbf{d} \rho  -\rho^2 \bmsigma \right),
\end{equation}
where on $\mathcal{N}$ the inverted coordinates $x^0$ and $\rho$ are given in terms of $t$ and $r$ by
\[ 
\rho = \frac{r}{r^2 - t^2}, \qquad x^0 = \frac{t}{r^2 - t^2}. 
\]
The right-hand side of \eqref{inverted_Minkowski_metric} is a conformally rescaled Minkowski metric in the new coordinates $(x^\mu)$, where one notices that on $\mathcal{N}$ the conformal factor 
\[
\Xi = - x^\mu x_\mu = - \frac{1}{y^\mu y_\mu} = \rho^2 - (x^0)^2
\]
extends smoothly to $\scri'$. Moreover, the metric $\eta'_{\mu \nu} \equiv \Xi^2 \tilde{\eta}_{\mu \nu}$ (which is just the Minkowski metric again) does too, and now contains at its origin the point
\[
i^0 \equiv \{ (x^\mu)\in \mathbb{R}^4 \; | \; x^\mu=0  \}
\]
called \emph{spatial infinity}, formally a point on the $\scri'$ part of the boundary of $\mathcal{N}$. The sets
\[
\mathscr{I}^\pm \equiv \mathscr{I}' \cap \{ (x^\mu) \in \mathbb{R}^4
\; |\; \eta_{\mu\nu}x^\mu x^\nu =0, \; \pm x^0 >0 \}
\]
are called the future and past \emph{null infinity}, respectively, of Minkowski space (near $i^0$), and are the hypersurfaces which null geodesics in $\mathcal{N}$ approach asymptotically (see \Cref{AsymptoticRegion}).

\begin{figure}[t]
\begin{center}
\includegraphics[width=0.6\textwidth]{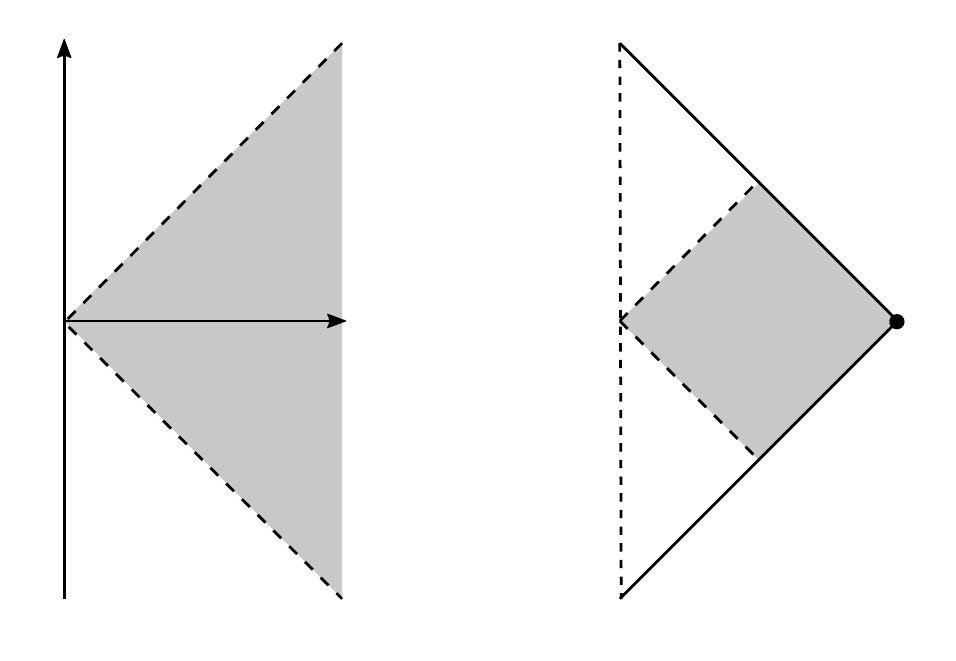}
\put(-58,92){$\mathcal{N}$}
\put(-35,120){$\scri^+$}
\put(-35,65){$\scri^-$}
\put(-10,92){$i^0$}
\put(-200,110){$\mathcal{N}$}
\put(-165,93){$r$}
\put(-250,180){$t$}
\end{center}
\caption{The point representation of spatial infinity $i^0$ with its neighbourhood $\mathcal{N}$.  The left diagram depicts the region $\mathcal{N}$ in physical space. The right diagram depicts a conformal extension of the region which includes spatial infinity and a portion of null infinity.}\label{AsymptoticRegion}
\end{figure}

\subsection{Cylinder representation of spatial infinity: the F-gauge}
\label{Section:CylinderMinkowski}
In the construction above, the endpoint $i^0$ of all \emph{spacelike} geodesics in Minkowski space is collapsed to a point, which for our purposes turns out to be somewhat unnatural (compare this, for example, to $\scri^\pm$, the end\emph{surfaces} of null geodesics). Instead, it is useful to blow up $i^0$ to a cylinder as follows. Set\footnote{Note that in terms of the physical coordinates $t$ and $r$, the new coordinate $\tau$ is given by $\tau = t/r$.} $x^0 = \rho \tau$, and define a new conformal factor\footnote{In terms of physical coordinates $t$ and $r$, this new conformal factor is simply $\Theta = 1/r$. This conformal scale in fact appears in the literature on the theory of conformal scattering \cite{MasNic04,MasNic12,NicTau22}, albeit without the use of the F-coordinates $\tau$ and $\rho$. It is the definition of these coordinates that is at the core of the blow-up of $i^0$ to a cylinder.}
\[ 
\Theta \equiv \frac{1}{\rho} \Xi = \rho \left( 1 - \tau^2 \right). 
\]
We then define the unphysical metric
\begin{align*}
    \bmeta & \equiv \Theta^2 \tilde{\bmeta} \\
    & = \frac{1}{\rho^2} \left( \rho^2 \mathbf{d} \tau \otimes \mathbf{d}\tau + \tau \rho \left( \mathbf{d} \tau \otimes \mathbf{d} \rho + \mathbf{d} \rho \otimes \mathbf{d} \tau \right) - (1- \tau^2) \, \mathbf{d} \rho\otimes\mathbf{d} \rho  - \rho^2 \bmsigma \right).
\end{align*}
We call the coordinates $\tau$ and $\rho$ the \emph{F-coordinates} on Minkowski space. In terms of $\tau$ and $\rho$, the region $\mathcal{N}$ then reads
\[
\mathcal{N} =\left\{ (\tau, \rho) \times \mathbb{S}^2 \, | \, -1 <\tau < 1, \; \rho>0 \right\} \simeq (-1, 1)_\tau \times (0, \infty)_\rho \times \mathbb{S}^2
\]
and 
\[
\mathscr{I}^\pm = \left\{ (\tau, \rho) \times \mathbb{S}^2 \, | \, \tau = \pm 1, \; \rho>0 \right\} \simeq (0, \infty)_\rho \times \mathbb{S}^2. 
\]
It is also convenient to introduce the sets
\begin{align*}
&\mathcal{I} \equiv \left\{ (\tau, \rho) \times \mathbb{S}^2 \, | \,  |\tau| <1, \; \rho=0  \right\} \simeq (-1,1)_\tau \times \mathbb{S}^2, \\
&\mathcal{I}^0 \equiv \left\{(\tau, \rho) \times \mathbb{S}^2 \, | \,  \tau =0, \; \rho=0 \right\} \simeq \mathbb{S}^2, \\
&\mathcal{I}^\pm \equiv \left\{(\tau, \rho) \times \mathbb{S}^2 \, | \, \tau =\pm 1, \; \rho=0  \right\} \simeq \mathbb{S}^2.
\end{align*}
In this representation the point $i^0 = \{ x^\mu = 0 \} = \{\rho = 0, \, x^0 = 0\}$ has therefore been blown to the cylinder $\mathcal{I}$. Note that although $\mathcal{I}$ is at a finite $\rho$-coordinate, it is still at infinity with respect to the metric $\bmeta$. We call $\mathcal{I}$ the \emph{cylinder at spatial infinity} and $\mathcal{I}^\pm$ the \emph{critical sets}.

Let us denote the standard initial hypersurface by
\[
\tilde{\mathcal{S}} \equiv \{ (y^\mu)\in \mathbb{R}^4 \; |\;  t=y^0=0 \}
\]
and set
\[
\mathcal{S} \equiv \tilde{\mathcal{S}}\cup \mathcal{I}, \qquad
\overline{\mathcal{I}} \equiv \mathcal{I}\cup \mathcal{I}^+\cup \mathcal{I}^-, \qquad
\overline{\mathcal{N}} \equiv \mathcal{N} \cup \mathscr{I}^+\cup
\mathscr{I}^-\cup \overline{\mathcal{I}},
\]
and consider $\rho$, $\tau$ and the spherical coordinates $\bmsigma$ as
coordinates on 
\[
\overline{\mathcal{N}} \simeq [-1,1]_\tau \times [0,\infty)_\rho \times \mathbb{S}^2_\bmsigma.
\]
In these coordinates the
expressions for $\Theta$ and $\Xi=\rho \Theta$, and the coordinate
expression for the inverse metric $\bmeta^\sharp$ all extend smoothly to all points of
$\overline{\mathcal{N}}$. Observe, however, that $\bmeta$ itself degenerates at $\rho = 0$, so that $\bmeta$ is only smooth on
$\mathcal{N}\cup\mathscr{I}^+\cup\mathscr{I}^-$. We call this conformal completion of $\mathcal{N}$, together with the F-coordinates $\tau$ and $\rho$, the \emph{F-gauge} \cite{Fri98a,Fri98b}.

\begin{figure}[t]
\begin{center}
\includegraphics[width=\textwidth]{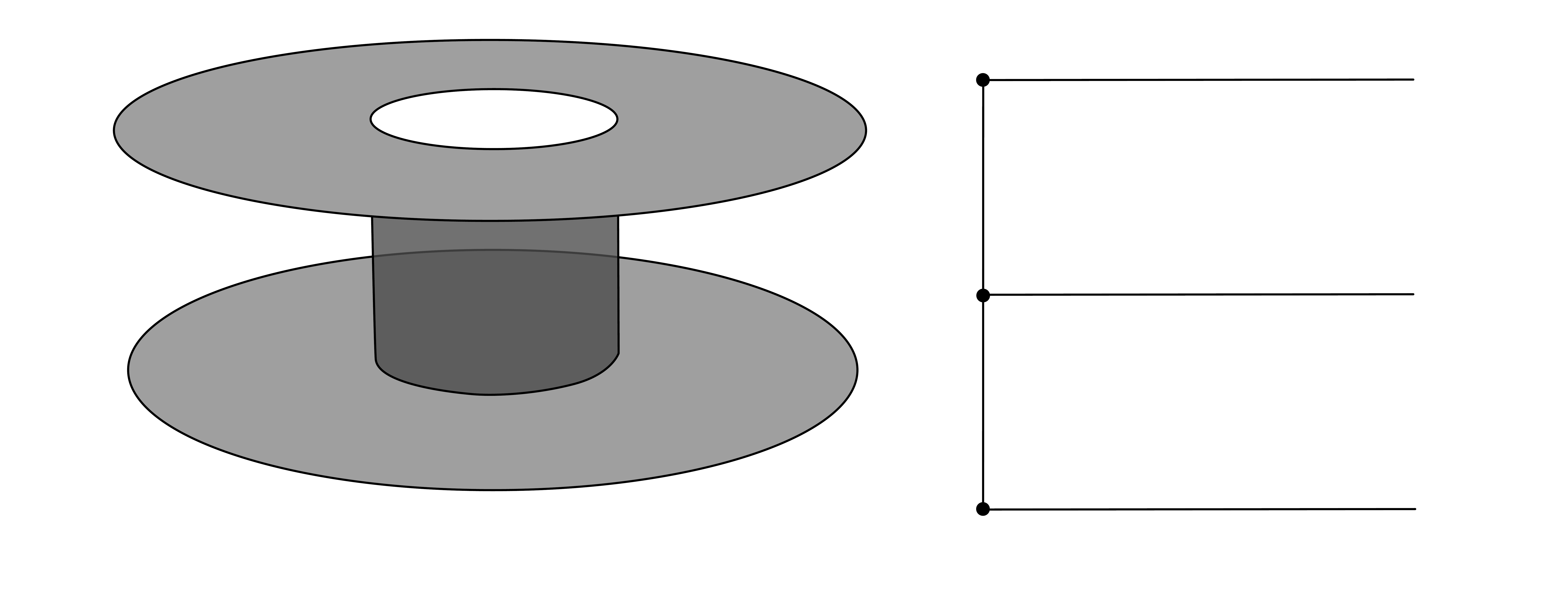}
\put(-181,142){$\mathcal{I}^+$}
\put(-181,19){$\mathcal{I}^-$}
\put(-181,80){$\mathcal{I}^0$}
\put(-100,150){$\scri^+$}
\put(-100,10){$\scri^-$}
\put(-100,88){$\tau = 0$}
\put(-250,125){$\scri^+$}
\put(-250,55){$\scri^-$}
\put(-320,65){$\mathcal{I}$}
\end{center}
\caption{The cylinder at spatial infinity in the F-gauge. On the left, a depiction suppressing one angular dimension. On the right, a cross section. This type of 2-dimensional diagram is a convenient way of depicting the locations of various hypersurfaces in the F-gauge. Observe that this is a \emph{coordinate diagram} and not a Penrose diagram. In particular, null geodesics do not correspond to lines with slope of $\pm 45^\circ$. }
\label{Fig:CylinderBackground}
\end{figure}

\subsection{Null frame near $\mathcal{I}^0$}
\label{Section:NullFrameNearI0}

In order to perform estimates, we will write the field equations in terms of a null frame $\{ \bme_\bma = \bme_\bma^\mu \bmpartial_\mu \}$. We make use of the Infeld-van der Waerden symbols to \emph{spinorise} the frame indices, $\bme_\bma \mapsto \bme_{\bmA\bmA'}^\mu \bmpartial_\mu$, and introduce a frame $\bme_{\bmA\bmA'}$ satisfying
\[
\bmeta(\bme_{\bmA\bmA'},\bme_{\bmB\bmB'}) =
\epsilon_{\bmA\bmB}\epsilon_{\bmA'\bmB'}.
\]
Specifically, we choose
\begin{equation}
\bme_{\bmzero\bmzero'} = \frac{1}{\sqrt{2}}\left(
(1-\tau)\partial_\tau + \rho \partial_\rho  \right), \qquad \bme_{\bmone\bmone'} =\frac{1}{\sqrt{2}}\left(
(1+\tau)\partial_\tau - \rho \partial_\rho  \right)
\label{FramesMinkowski} 
\end{equation}
and complex vector fields $\bme_{\bmzero\bmone'}$ and $\bme_{\bmone\bmzero'}$ which are tangent to the spheres $\mathbb{S}^2_{\tau, \rho}$ of constant $\tau$ and $\rho$. However, non-vanishing smooth vector fields on $\mathbb{S}^2$ cannot be defined globally, so we lift the dimension of the spheres by considering \emph{all possible choices} of such vector fields. Specifically, for each $(\tau, \rho)$ we consider the Hopf map $p : \mathrm{SU}(2) \simeq \mathbb{S}^3 \to \mathbb{S}^2 \simeq \mathrm{SU}(2) / \mathrm{U}(1)$, which defines a principal $\mathrm{U}(1)$-bundle over $\mathbb{S}^2$. This map induces a group of rotations $\bme_{\bmzero\bmone'} \mapsto e^{i \alpha} \bme_{\bmzero\bmone'}$, $ \alpha \in \mathbb{R}$, which leaves the tangent bundle of $\mathbb{S}^2_{\tau, \rho}$ invariant. By lifting each $2$-sphere along the Hopf map, we obtain a 5-dimensional manifold $[-1, 1]_\tau \times (0, \infty)_\rho \times \mathrm{SU}(2)$; note that this is a 5-dimensional submanifold of the bundle of normalised spin frames on $\overline{\mathcal{N}} \setminus \overline{\mathcal{I}}$.

The geometric structures on $\overline{\mathcal{N}}\setminus \overline{\mathcal{I}}$ can be naturally lifted to the 5-dimensional manifold introduced in the previous paragraph. Considering $\tau$, $\rho$ and $t^\bmA{}_\bmB \in \mathrm{SU}(2)$ as coordinates on this manifold, the lifted vector fields in \eqref{FramesMinkowski} have the same coordinate expressions as before. Allowing $\rho$ to take the value $0$, one can extend all geometric structures to include this value. Thus, one considers $(\tau,\rho,t^\bmA{}_\bmB)$ as coordinates on the extended manifold, which we again denote by
\[  
\overline{\mathcal{N}} \equiv [-1,1]_\tau \times [0,\infty)_\rho \times
  \mathrm{SU}(2)_{t^\bmA_{\phantom{\bmA}\bmB}}.
\]
In this setting now
\[
\mathcal{I} \simeq (-1,1)_\tau \times \mathrm{SU}(2), \quad
\mathcal{I}^0 \simeq \mathrm{SU}(2),\quad \mathcal{I}^\pm \simeq
\mathrm{SU}(2), \quad \text {and} \quad \mathscr{I}^\pm \simeq (0, \infty)_\rho \times
\mathrm{SU}(2)
\]
as subsets of $\overline{\mathcal{N}}$. To complete this construction, it remains to choose a frame of complex vector fields on $\mathrm{SU}(2)$. We set
\[
\bme_{\bmzero\bmone'} = - \frac{1}{\sqrt{2}} \bmX_+ \quad \text{and} \quad
\bme_{\bmone\bmzero'} =- \frac{1}{\sqrt{2}} \bmX_-,
\]
where $\bmX_\pm$ are vector fields on $\mathrm{SU}(2)$ defined
in \Cref{Appendix:SU2} (briefly, they are complex linear combinations of two of the three left-invariant vector fields on $\mathrm{SU}(2)$). The connection form on the bundle of normalised spin
frames then defines connection coefficients $\Gamma_{\bmA \bmA' \bmB \bmC } = \Gamma_{\bmA \bmA' \bmC \bmB }$ with respect to the frame
$\bme_{\bmA\bmA'}$, the only independent non-zero values of which are
\[
     \Gamma_{\bmzero\bmzero'\bmzero\bmone} = \Gamma_{\bmone\bmone'\bmzero\bmone} = - \frac{1}{2 \sqrt{2}}.
\]
In Newman--Penrose language, these correspond to the spin coefficients $\varepsilon$ and $\gamma$, which are gauge quantities with respect to tetrad rescalings in the compacted spin coefficient formalism---see §4.12, \cite{PenRin84}.

\subsection{Spin-$s$ equations in the F-gauge}
Let
\[
\phi_{A_1\dots A_{2s}} =\phi_{(A_1\dots A_{2s})}
\]
be a totally symmetric spinor of valence $2s$, with $s \in \frac{1}{2} \mathbb{N}$, where we take $\mathbb{N} = \{ 1, \, 2, \, 3, \, \ldots \}$, i.e. we do not consider the spin zero case (the wave equation). The massless spin-$s$ equations for $\phi$ on $(\mathcal{N}, \boldsymbol{\eta})$ then read
\begin{equation}
\nabla^Q{}_{A'} \phi_{QA_1\dots A_{2s-1}}=0.
    \label{MasslessSpinS}
\end{equation}
It is worth recalling that the spin-$s$ equations are conformally invariant. That is, $\phi_{A_1\ldots A_{2s}}$ satisfies \eqref{MasslessSpinS} on $(\mathcal{N}, \boldsymbol{\eta})$, where $\boldsymbol{\eta} = \Theta^2 \tilde{\boldsymbol{\eta}}$ is the unphysical Minkowski metric, if and only if the \emph{physical} spin-$s$ field
\[ \tilde{\phi}_{A_1 \ldots A_{2s}} = \Theta \phi_{A_1 \ldots A_{2s}} \]
satisfies
\[ \tilde{\nabla}^Q{}_{A'} \tilde{\phi}_{QA_1 \ldots A_{2s-1}} = 0 \]
on $(\mathcal{N}, \tilde{\boldsymbol{\eta}})$ \cite{PenRin84}. Conclusions about $\phi$ can therefore be easily translated into conclusions about the physical field $\tilde{\phi}$.

In the F-gauge equation \eqref{MasslessSpinS} is equivalent to the system
\begin{subequations}
\begin{eqnarray}
&& A_k \equiv (1+\tau)\partial_\tau \phi_{k+1} -\rho \partial_\rho \phi_{k+1} -\bmX_+ \phi_{k} +(k+1-s)\phi_{k+1} =0, \label{AEqnMain} \\
&& B_k \equiv (1-\tau)\partial_\tau \phi_k + \rho\partial_\rho\phi_k -\bmX_-\phi_{k+1} +(k-s)\phi_k =0, \label{BEqnMain}
\end{eqnarray}
\end{subequations}
for $k=0, \, \ldots, \, 2s-1$, where $\phi_k$ are the $2s$ independent components of the spinor $\phi_{A_1\cdots A_{2s}}$ with respect to a spin dyad $\{ o^A, \iota^A \}$ corresponding to the frame $\{ \bme_{\bmA \bmA'} \}$. The derivation of equations \eqref{AEqnMain} and \eqref{BEqnMain} is given in \Cref{Appendix:SpinSEqns}. We note here that the component $\phi_{2s} = \phi_{A_1 \ldots A_{2s}} \iota^{A_1} \ldots \iota^{A_{2s}}$ is the \emph{outgoing radiation field} on $\scri^+$, while $\phi_0 = \phi_{A_1 \ldots A_{2s}} o^{A_1} \ldots o^{A_{2s}}$ is the \emph{incoming radiation field} on $\scri^-$.

\section{Estimates near $\scri^+$}
\label{Section:ForwardEstimates}

In this section we construct estimates for the massless spin-$s$ equations which allow us to control the solutions up to and including $\mathscr{I}^+$ in terms of suitable norms on initial data prescribed on a Cauchy hypersurface. In doing so we recap the estimates of Friedrich \cite{Fri03b} and extend them to fields of arbitrary spin $s$.

\subsection{Geometric setup}
We use the F-gauge as discussed in \Cref{Section:CylinderMinkowski,Section:NullFrameNearI0}. Given $t \in (-1,1]$, $t > \tau_\star \in (-1, 1) $, and $\rho_\star>0$, one defines the sets
\begin{eqnarray*}
&& \mathcal{N}_t \equiv \bigg\{ (\tau, \rho, t^\bmA{}_\bmB) \, | \, \tau_\star \leq \tau \leq t, \; 0\leq \rho \leq
   \frac{\rho_\star}{1+\tau}, \; t^\bmA{}_\bmB \in \mathrm{SU}(2) \bigg\} \subset [-1, 1]_\tau \times [0, \infty)_\rho \times \mathrm{SU}(2), \\
&& \mathcal{S}_t \equiv \bigg\{ (\tau, \rho, t^\bmA{}_\bmB) \, | \, \tau=t, \; 0\leq \rho \leq \frac{\rho_\star}{1+t}, \; t^\bmA{}_\bmB \in \mathrm{SU}(2) \bigg\} , \\
&& \mathcal{B}_t \equiv \bigg\{(\tau, \rho, t^\bmA{}_\bmB) \, | \, \tau_\star \leq \tau \leq t, \; \rho = \frac{\rho_\star}{1+\tau}, \; t^\bmA{}_\bmB \in \mathrm{SU}(2) \bigg\}, \\
&& \mathcal{I}_t \equiv \bigg\{ (\tau, \rho, t^\bmA{}_\bmB) \, | \, \tau_\star \leq \tau \leq t, \; \rho=0, \; t^\bmA{}_\bmB \in \mathrm{SU}(2) \bigg\} ,
\end{eqnarray*}
and $\mathcal{S}_\star \equiv \mathcal{S}_{\tau_\star}$. A schematic depiction of these sets is given in \Cref{Fig:Cylinder_Forward_Estimates}.

\begin{figure}[H]
\begin{center}
\includegraphics[scale=0.4]{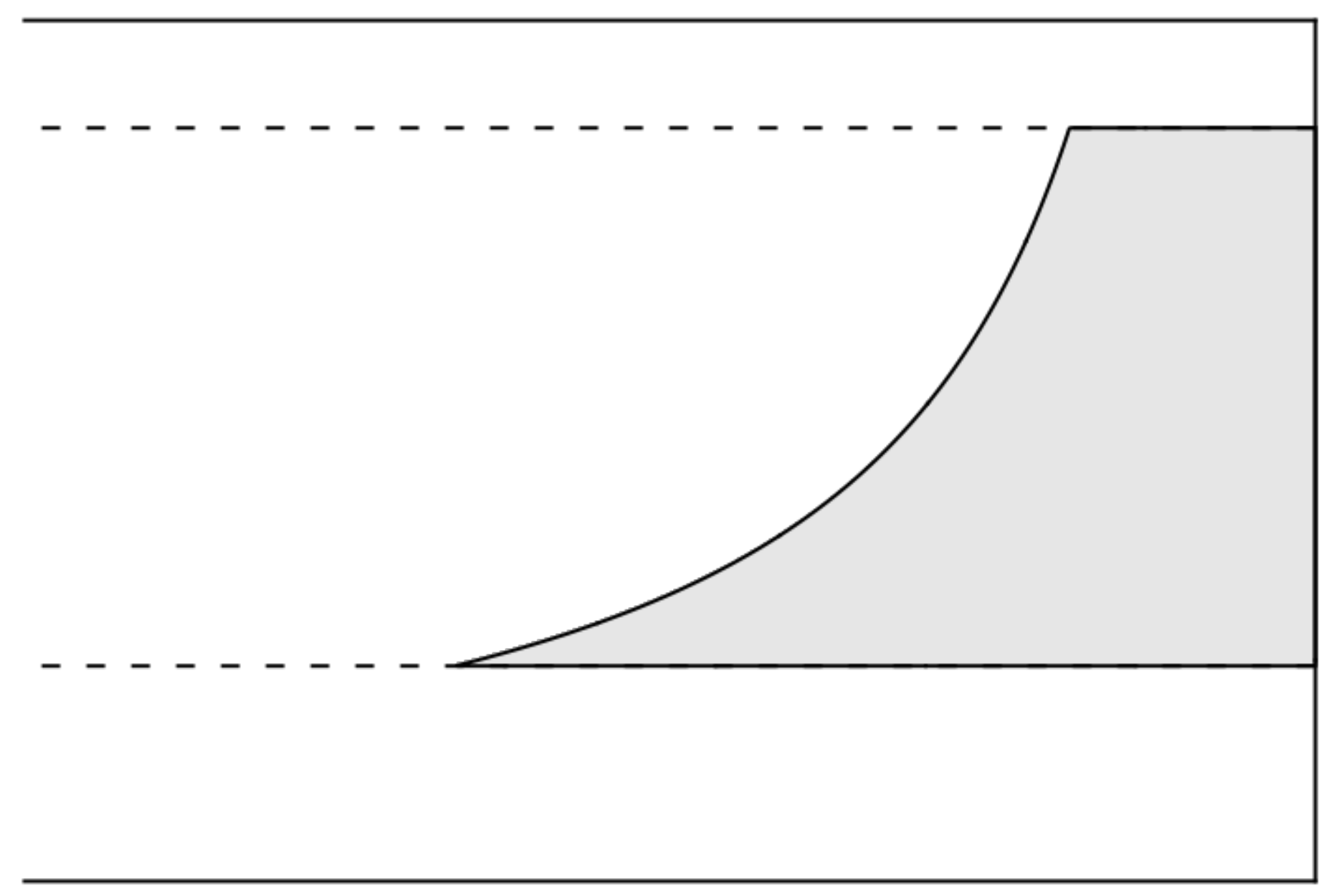}
\put(-120, 148){$\scri^+$}
\put(-120, -10){$\scri^-$}
\put(-40, 75){$\mathcal{N}_t$}
\put(2, 75){$\mathcal{I}_t$}
\put(2, 145){$\mathcal{I}^+$}
\put(-7, 139){$\bullet$}
\put(2, -7){$\mathcal{I}^-$}
\put(-7, 0){$\bullet$}
\put(-70, 25){$\mathcal{S}_\star$}
\put(-88, 75){$\mathcal{B}_t$}
\put(-28, 128){$\mathcal{S}_t$}
\end{center}
\caption{Schematic depiction of the domain used in the construction of estimates controlling the solutions to the spin-$s$ equations near $\mathscr{I}^+$ in terms of Cauchy initial data.}
\label{Fig:Cylinder_Forward_Estimates}
\end{figure}

\subsection{Construction of estimates}
Given non-negative integers $p$, $p'$, $q$, $q'$ and a multiindex $\alpha=(\alpha_1,\alpha_2,\alpha_3)$, we consider the differential operators 
\[
D \equiv D^{q,p,\alpha} \equiv \partial^q_\tau \partial^p_\rho \bmZ^\alpha \quad \text{and} \quad D' \equiv D^{q',p',\alpha} \equiv \partial^{q'}_\tau \partial^{p'}_\rho \bmZ^\alpha,
\]
where $\bmZ^\alpha$ denotes the vector fields on
$\mathrm{SU}(2)$ introduced in \Cref{Appendix:TechnicalLemmaSU2C}. We then have the following estimates controlling the solutions to the spin-$s$ equations up to $\mathscr{I}^+$.

\begin{proposition}
\label{thm:forward_estimates}
Let $\tau_\star \in (-1, 1)$, $t > \tau_\star$, and consider the field $\phi_{A_1\cdots A_{2s}}$ satisfying equations \eqref{AEqnAppendix} and \eqref{BEqnAppendix} in the region $\mathcal{N}_t$. Let $(p, m) \in \mathbb{N} \times \mathbb{N} $ be such that 
\[ p > m + s,\]
and suppose that 
\[ \sum_{k=0}^{2s} \int_{\mathcal{S}_\star} \sum_{ q' + p' + |\alpha| \leq m } |D'(\partial_\rho^p \phi_k)|^2 \, \d \rho \wedge \d \mu < \infty.  \]
Then there exists a constant $C_{p,m,s} > 0$ which is independent of $t$ such that for all $0 \leq k \leq 2s$
\begin{equation}
\label{Spin2BasicEstimate}
\| \partial_\rho^p \phi_k \|^2_{H^m(\mathcal{N}_t)} \equiv \int_{\mathcal{N}_t} \sum_{ q' + p' + |\alpha| \leq m } |D'(\partial^p_\rho \phi_k)|^2 \, \mathrm{dv} \leq C_{p,m,s} \sum_{k=0}^{2s} \int_{\mathcal{S}_\star} \sum_{ q' + p' + |\alpha| \leq m } |D'(\partial^p_\rho \phi_k)|^2 \, \d \rho \wedge \d \mu.
\end{equation}
\end{proposition}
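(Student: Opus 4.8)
The plan is to run a weighted energy estimate for the system \eqref{AEqnMain}--\eqref{BEqnMain} after commuting through $\partial_\rho^p$, exploiting the fact that each such commutation shifts the zeroth-order coefficients in a sign-favourable way. Write $\psi_k = \partial_\rho^p \phi_k$. Since $[\partial_\rho^p, \rho\partial_\rho] = p\,\partial_\rho^p$ and $\partial_\rho$ commutes with $\bmX_\pm$, applying $\partial_\rho^p$ to the equations replaces $\phi$ by $\psi$ and the zeroth-order coefficients $(k+1-s)$, $(k-s)$ by $(k+1-s-p)$, $(k-s+p)$ respectively. First I would pair the $k$-th $A$-equation with $\bar\psi_{k+1}$ and the $k$-th $B$-equation with $\bar\psi_k$, take real parts, sum over $k=0,\dots,2s-1$, and integrate over $\mathrm{SU}(2)$ against the Haar measure $\mu$. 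Using that $\bmX_+$ and $\bmX_-$ are mutually (formally) skew-adjoint on $\mathrm{SU}(2)$ (\Cref{Appendix:SU2}), the angular cross-terms combine into a total divergence that integrates to zero. What survives is a one-dimensional conservation law
\[ \partial_\tau e + \partial_\rho g + Z = 0, \]
where, after absorbing the $\tau$-derivatives of the weights into the $\rho$-flux,
\[ e = \tfrac12(1-\tau)|\psi_0|^2 + \sum_{j=1}^{2s-1}|\psi_j|^2 + \tfrac12(1+\tau)|\psi_{2s}|^2, \qquad g = \tfrac12\rho\big(|\psi_0|^2 - |\psi_{2s}|^2\big), \]
\[ Z = (p-s)\big(|\psi_0|^2 - |\psi_{2s}|^2\big) + \sum_{j=1}^{2s-1}2(j-s)|\psi_j|^2, \]
and here $|\cdot|^2$ already denotes the $\mathrm{SU}(2)$-integrated square.

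Next I would integrate this identity over $\mathcal{N}_t$ and apply the divergence theorem with respect to $\mathrm{dv} = \d\tau\wedge\d\rho\wedge\d\mu$. Two features make the boundary terms benign. On $\mathcal{I}_t$ ($\rho=0$) the flux $g$ vanishes, reflecting the total characteristic nature of $\mathcal{I}$, so there is no contribution there. On the outgoing null boundary $\mathcal{B}_t = \{\rho(1+\tau)=\rho_\star\}$ a direct computation shows the outward flux is proportional to $\rho\sum_{j=0}^{2s-1}|\psi_j|^2\ge0$, i.e. energy only leaves the domain, so this term may be discarded. What remains is the slice energy $\mathcal{E}(\tau)=\int_{\mathcal{S}_\tau}e$ together with the bulk term $\int_{\mathcal{N}_t}Z$. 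The crucial observation is the sign in $Z$: the coefficient of $|\psi_0|^2$ is $p-s$, so choosing $p>s$ makes the $\psi_0$-contribution coercive, while the dangerous $-(p-s)|\psi_{2s}|^2$ and the negative intermediate terms ($j<s$) are controlled by $\mathcal{E}$ since their weights $\tfrac12(1+\tau)$, $1$ stay bounded below on $[\tau_\star,t]$. This yields a differential inequality of the form $\tfrac{d}{d\tau}\mathcal{E} \le \big(\tfrac{C}{1+\tau}+C\big)\mathcal{E}$, where the $\tfrac{1}{1+\tau}$ arises from $|\psi_{2s}|^2\le\tfrac{2}{1+\tau}e$. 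Gr\"onwall then gives $\mathcal{E}(\tau)\le C\,\mathcal{E}(\tau_\star)$ with a constant independent of $t\le1$, because $\int_{\tau_\star}^t(1+\tau)^{-1}\,\d\tau \le \log\frac{2}{1+\tau_\star}<\infty$ for $\tau_\star>-1$. Feeding this back into the coercive part of $\int_{\mathcal{N}_t}Z$ produces $\int_{\mathcal{N}_t}|\psi_0|^2\,\mathrm{dv}\le C\,\mathcal{E}(\tau_\star)$ — the only way to control $\psi_0$ up to $\tau=1$, since the slice weight $(1-\tau)$ degenerates there — while the remaining components are bounded by $\int_{\tau_\star}^t\mathcal{E}(\tau)\,\d\tau$. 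Since $\mathcal{E}(\tau_\star)$ is comparable to $\sum_k\int_{\mathcal{S}_\star}|\psi_k|^2$, this settles the $m=0$ case under $p>s$.

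For general $m$ I would commute a general operator $D'=\partial_\tau^{q'}\partial_\rho^{p'}\bmZ^\alpha$ with $q'+p'+|\alpha|\le m$ through the $\psi$-system and repeat the above for the vector $(D'\psi_0,\dots,D'\psi_{2s})$, summing the resulting identities over all such $D'$. The point is to track how $D'$ modifies the zeroth-order coefficients: $\partial_\rho^{p'}$ commuted through $\pm\rho\partial_\rho$ shifts them by $\pm p'$, while $\partial_\tau^{q'}$ commuted through $(1\mp\tau)\partial_\tau$ shifts them by $\mp q'$ (using $[\partial_\tau,(1-\tau)\partial_\tau]=-\partial_\tau$). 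Consequently the coercive coefficient attached to the $D'\psi_0$-component is $(p-s)+p'-q'\ge p-s-m$, which is strictly positive precisely when $p>m+s$; this is the role of the hypothesis and the reason one must take sufficiently many $\rho$-derivatives. The commutators $[\bmZ^\alpha,\bmX_\pm]$ generate only same- or lower-order vector fields on $\mathrm{SU}(2)$ (\Cref{Appendix:SU2,Appendix:TechnicalLemmaSU2C}), whose contributions are absorbed into the Gr\"onwall constant, and the boundary terms on $\mathcal{I}_t$ and $\mathcal{B}_t$ retain their favourable structure since they depend only on the ($D'$-independent) principal part.

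The main obstacle is precisely the degeneracy of the energy at the critical set $\mathcal{I}^+$ (i.e. $\tau\to1$): the natural slice energy loses all control of the $\psi_0$-type components there, so no slice estimate alone can yield their $L^2$-control up to $\scri^+$. The resolution, which is the heart of Friedrich's mechanism and the reason for the counting $p>m+s$, is that the lower-order term $Z$ supplies coercive spacetime control of exactly those degenerate components, with coefficient $p-s-m>0$. A secondary technical point is ensuring the Gr\"onwall constant is uniform as $t\to1$, which rests on the integrability of $(1+\tau)^{-1}$ away from $\tau_\star$, and on the choice of $\mathcal{B}_t$ as an outgoing null cone so that its flux carries the correct sign.
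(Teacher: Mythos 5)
Your proposal is correct and follows essentially the same route as the paper's proof: the same energy identity obtained by pairing the $A_k$-equation with $\overline{D\phi_{k+1}}$ and the $B_k$-equation with $\overline{D\phi_k}$, the same boundary analysis (vanishing flux on $\mathcal{I}_t$ due to the factor of $\rho$, non-negative outgoing flux on $\mathcal{B}_t$), the angular cancellation of \Cref{Lemma:IntegrationSU2C}, the sign arrangement of the zeroth-order coefficients under $p>m+s$, and the bulk coercive term rescuing $\phi_0$ where the slice weight $(1-\tau)$ degenerates as $\tau \to 1$. The only cosmetic difference is that you sum over $k$ into a single energy $\mathcal{E}$ before applying Gr\"onwall, whereas the paper keeps the per-component pairs $(\phi_k,\phi_{k+1})$ and chains the resulting estimates together; both give the same $t$-independent constant.
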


\begin{proof} Applying $D$ to the equations \eqref{AEqnAppendix} and \eqref{BEqnAppendix}, multiplying by $\overline{D \phi_k}$ and $\overline{D \phi_{k+1}}$ respectively, and summing the real parts, one obtains the identity
\[  2 \operatorname{Re} \left( \overline{D \phi_{k+1}} D A_k[\phi] + \overline{D \phi_k} D B_k[\phi] \right) = 0.  \]
Commuting $D$ into the operators $A_k[\phi]$ and $B_k[\phi]$, this may be rewritten as
\begin{align}
\label{differentialidentity}
\begin{split}
    0 &= \partial_\tau \left( (1+ \tau) | D \phi_{k+1} |^2 + (1-\tau) |D \phi_k |^2 \right) + \partial_\rho \left( - \rho | D \phi_{k+1} |^2 + \rho |D \phi_k |^2 \right) \\
    &\phantom{=} - \bmZ^\alpha \bmX_+ ( \partial_\tau^q \partial_\rho^p \phi_k ) \bmZ^\alpha ( \partial_\tau^q \partial_\rho^p \bar{\phi}_{k+1} ) - \bmZ^\alpha (\partial_\tau^q \partial_\rho^p \phi_k ) \bmZ^\alpha \bmX_+ (\partial_\tau^q \partial_\rho^p \bar{\phi}_{k+1} ) \\
    &\phantom{=} - \bmZ^\alpha \bmX_- ( \partial_\tau^q \partial_\rho^p \bar{\phi}_k ) \bmZ^\alpha (\partial_\tau^q \partial_\rho^p \phi_{k+1} ) - \bmZ^\alpha (\partial_\tau^q \partial_\rho^p \bar{\phi}_k ) \bmZ^\alpha \bmX_- (\partial_\tau^q \partial_\rho^p \phi_{k+1} ) \\
    &\phantom{=} + 2(k+1-s+q-p) | D \phi_{k+1} |^2 + 2(k-s-q+p) |D\phi_k|^2.
\end{split}
\end{align}
It is worth noting at this stage that the numerical factors in the last two terms in \eqref{differentialidentity}, arising from the explicit appearance of the coordinates $\tau$ and $\rho$ in equations \eqref{AEqnAppendix} and \eqref{BEqnAppendix}, are a key aspect of the calculation. In particular, their signs may be manipulated by choosing the values of $q$ and $p$ appropriately. We now integrate \cref{differentialidentity} over $\mathcal{N}_t$ (cf. \Cref{Fig:Cylinder_Forward_Estimates}) with respect to the \emph{Euclidean} volume element $\mathrm{dv} = \d \tau \wedge \d \rho \wedge \d \mu$, where $\d \mu$ is the Haar measure on $\mathrm{SU}(2)$. We have
\begin{align*}
    0 & = \int_{\mathcal{N}_t} \left( \begin{array}{c} \partial_\tau \\ \partial_\rho \end{array} \right) \cdot \left( \begin{array}{c} (1+\tau) |D \phi_{k+1}|^2 + (1-\tau) |D \phi_k|^2 \\ -\rho | D \phi_{k+1}|^2 + \rho | D \phi_k |^2 \end{array} \right) \mathrm{dv} \\
    &\phantom{=} - \int_{\mathcal{N}_t} \Big( \bmZ^\alpha \bmX_+ ( \partial_\tau^q \partial_\rho^p \phi_k ) \bmZ^\alpha ( \partial_\tau^q \partial_\rho^p \bar{\phi}_{k+1} ) + \bmZ^\alpha (\partial_\tau^q \partial_\rho^p \phi_k ) \bmZ^\alpha \bmX_+ (\partial_\tau^q \partial_\rho^p \bar{\phi}_{k+1} ) \\
    &\phantom{=} \hspace{2em} + \bmZ^\alpha \bmX_- ( \partial_\tau^q \partial_\rho^p \bar{\phi}_k ) \bmZ^\alpha (\partial_\tau^q \partial_\rho^p \phi_{k+1} ) + \bmZ^\alpha (\partial_\tau^q \partial_\rho^p \bar{\phi}_k ) \bmZ^\alpha \bmX_- (\partial_\tau^q \partial_\rho^p \phi_{k+1} ) \Big) \, \mathrm{dv} \\
    & \phantom{=} + 2(k+1-s+q-p) \int_{\mathcal{N}_t} |D \phi_{k+1} |^2 \, \mathrm{dv} + 2(k-s-q + p) \int_{\mathcal{N}_t} |D \phi_k |^2 \, \mathrm{dv}.
\end{align*}
Using \Cref{Lemma:IntegrationSU2C}, it is straightforward to check that the second integral in the above equality will vanish when summed over $|\alpha| \leq m'$ for a given $m'$; as we will employ such a summation shortly, we simply write $\text{angular}(\alpha)$ to denote the second integral in the above. Next, using the Euclidean divergence theorem, one obtains
\begin{align*}
0 &= \int_{\mathcal{S}_t} \left( (1+t) |D \phi_{k+1}|^2 + (1-t)|D \phi_k|^2 \right) \d \rho \wedge \d \mu \\
&\phantom{=} - \int_{\mathcal{S}_\star} \left( (1+ \tau_\star) |D \phi_{k+1}|^2 + (1-\tau_\star) |D \phi_k |^2 \right) \d \rho \wedge \d \mu \\
&\phantom{=} + \int_{\mathcal{I}_t} \rho \left( |D\phi_k|^2 - |D \phi_{k+1} |^2 \right) \d \tau \wedge \d \mu \\
&\phantom{=} + \int_{\mathcal{B}_t} \left( \begin{array}{c} (1+\tau) |D \phi_{k+1}|^2 + (1-\tau) |D \phi_k|^2 \\ -\rho | D \phi_{k+1}|^2 + \rho | D \phi_k |^2 \end{array} \right) \cdot \nu \left( \begin{array}{c} \rho \\ 1+ \tau \end{array} \right) \d \mathcal{B} + \text{angular}(\alpha) \\
&\phantom{=} + 2 (p-q+k-s) \int_{\mathcal{N}_t} | D \phi_k |^2 \, \mathrm{dv} - 2 (p-q-k-1+s) \int_{\mathcal{N}_t} | D \phi_{k+1} |^2 \, \mathrm{dv},
\end{align*}
where $\d \mathcal{B}$ is the induced measure on $\mathcal{B}_t$, $\nu \equiv (\rho^2 + (1+\tau)^2 )^{-1/2}$ is a normalisation factor for the outward normal to $\mathcal{B}_t$, and we note that the integral over $\mathcal{I}_t$ vanishes due to the factor of $\rho$ in the integrand. Further, the integral over $\mathcal{B}_t$ simplifies to just $\int_{\mathcal{B}_t} 2 \nu \rho | D \phi_k |^2 \d \mathcal{B}$, and in particular is manifestly non-negative. Altogether therefore
\begin{align*}
&\int_{\mathcal{S}_t} \left( (1+t) |D \phi_{k+1}|^2 + (1-t)|D \phi_k|^2 \right) \d \rho \wedge \d \mu + 2 (p-q+k-s) \int_{\mathcal{N}_t} | D \phi_k |^2 \, \mathrm{dv}\\
& \leq \int_{\mathcal{S}_\star} \left( (1+ \tau_\star) |D \phi_{k+1}|^2 + (1-\tau_\star) |D \phi_k |^2 \right) \d \rho \wedge \d \mu + 2 (p-q-k-1+s) \int_{\mathcal{N}_t} | D \phi_{k+1} |^2 \, \mathrm{dv} \\
& - \text{angular}(\alpha).
\end{align*}

We now perform the relabelling
\[ q \longrightarrow q', \qquad p \longrightarrow p' + p, \]
under which $D \longrightarrow D' \partial_\rho^p$, where $D' = \partial_\tau^{q'} \partial_\rho^{p'} \bmZ^\alpha$, and sum both sides over 
\[ \mho \equiv \left\{ (q', p', \alpha) \, | \, q' + p' + |\alpha| \leq m \right\}. \]
Then $\sum_\mho \text{angular}(\alpha) = 0$, and we obtain
\begin{align*} (1+t) & \sum_\mho \int_{\mathcal{S}_t} |D'(\partial_\rho^p \phi_{k+1}) |^2 \, \d \rho \wedge \d \mu + (1-t) \sum_\mho \int_{\mathcal{S}_t} |D'(\partial_\rho^p \phi_k)|^2 \, \d \rho \wedge \d \mu \\
& + 2\sum_\mho (p'+p-q'+k-s) \int_{\mathcal{N}_t} |D'(\partial_\rho^p \phi_k)|^2 \, \mathrm{dv} \\
& \leq  (1+\tau_\star) \sum_\mho \int_{\mathcal{S}_\star} |D'(\partial_\rho^p \phi_{k+1}|^2 \, \d \rho \wedge \d \mu + (1-\tau_\star) \sum_\mho \int_{\mathcal{S}_\star} |D'(\partial_\rho^p \phi_k)|^2 \, \d \rho \wedge \d \mu \\
& +2 \sum_\mho (p' + p - q' - k -1 +s ) \int_{\mathcal{N}_t} |D'(\partial_\rho^p \phi_{k+1})|^2 \, \mathrm{dv}.
\end{align*}
We now choose $p$ and $m$ such that
\[ p > m+s  \]
and use simple uniform bounds for the numerical factors in the above,
\begin{align}
\label{Spin2IntermmediateInequality1}
\begin{split}
(1+t) & \sum_\mho \int_{\mathcal{S}_t} |D'(\partial_\rho^p \phi_{k+1}) |^2 \, \d \rho \wedge \d \mu + (1-t) \sum_\mho \int_{\mathcal{S}_t} |D'(\partial_\rho^p \phi_k)|^2 \, \d \rho \wedge \d \mu \\
& + 2(p - m -s + k) \sum_\mho \int_{\mathcal{N}_t} |D'(\partial_\rho^p \phi_k)|^2 \, \mathrm{dv} \\
& \leq  (1+\tau_\star) \sum_\mho \int_{\mathcal{S}_\star} |D'(\partial_\rho^p \phi_{k+1})|^2 \, \d \rho \wedge \d \mu + (1-\tau_\star) \sum_\mho \int_{\mathcal{S}_\star} |D'(\partial_\rho^p \phi_k)|^2 \, \d \rho \wedge \d \mu \\
& +2 (p+m+s-1) \sum_\mho  \int_{\mathcal{N}_t} |D'(\partial_\rho^p \phi_{k+1})|^2 \, \mathrm{dv}.
\end{split}
\end{align}
With the restriction $p>m+s$ on $p$ and $m$, the bulk integral on the left-hand side is positive, so we deduce for $0 \leq k \leq 2s -1$ that
\begin{align}
    \label{Spin2IntermmediateInequality2}
    \begin{split}
    \int_{\mathcal{S}_t} \sum_\mho |D'(\partial_\rho^p \phi_{k+1})|^2 \, \d \rho \wedge \d \mu & \la \int_{\mathcal{S}_\star} \sum_\mho \left( |D'(\partial_\rho^p \phi_{k+1})|^2 + |D'(\partial_\rho^p \phi_{k})|^2 \right) \d \rho \wedge \d \mu \\
    & \phantom{\la} + 2(p+m+s-1) \int_{\tau_\star}^t \left( \int_{\mathcal{S}_\tau} \sum_\mho |D'(\partial_\rho^p \phi_{k+1})|^2 \, \d \rho \wedge \d \mu \right) \d \tau.
    \end{split}
\end{align}
The inequality \eqref{Spin2IntermmediateInequality2} provides an estimate for all components of $\phi_{A_1\cdots A_{2s}}$ except $\phi_0$, the incoming radiation field. For this, we note that \eqref{Spin2IntermmediateInequality1} also implies
\begin{align}
\label{Spin2IntermmediateInequality3}
\begin{split}
2(p-m-s) \int_{\mathcal{N}_t} \sum_\mho | D'(\partial_\rho^p \phi_0) |^2 \, \mathrm{dv} & \la \int_{\mathcal{S}_\star} \sum_\mho \left( |D'(\partial_\rho^p \phi_0 ) |^2 + | D'(\partial_\rho^p \phi_1 ) |^2 \right) \d \rho \wedge \d \mu \\
& \phantom{\la} + \int_{\mathcal{N}_t} \sum_\mho | D' (\partial_\rho^p \phi_1 ) |^2 \, \mathrm{dv},
\end{split}
\end{align}
so that the norm $\| \partial_\rho^p \phi_0 \|^2_{H^m(\mathcal{N}_t)}$ will be controlled if the norm $\| \partial_\rho^p \phi_1 \|^2_{H^m(\mathcal{N}_t)}$ is controlled. Now applying Gr\"onwall's inequality to \eqref{Spin2IntermmediateInequality2}, we obtain
\[ \int_{\mathcal{S}_t} \sum_\mho | D'(\partial_\rho^p \phi_{k+1} )|^2 \, \d \rho \wedge \d \mu \la_{p,m,s} \int_{\mathcal{S}_\star} \sum_\mho \left( |D'(\partial_\rho^p \phi_{k+1})|^2 + |D'(\partial_\rho^p \phi_{k})|^2 \right) \d \rho \wedge \d \mu, \]
which, upon integration, gives
\[ \int_{\mathcal{N}_t} \sum_\mho |D'(\partial_\rho^p \phi_{k+1} )|^2 \, \mathrm{dv} \la_{p,m,s} \int_{\mathcal{S}_\star} \sum_\mho \left( |D'(\partial_\rho^p \phi_{k+1})|^2 + |D'(\partial_\rho^p \phi_k )|^2 \right) \d \rho \wedge \d \mu \]
for $0 \leq k \leq 2s -1$. This, combined with \eqref{Spin2IntermmediateInequality3}, then gives the estimate for $\phi_0$,
\[ \int_{\mathcal{N}_t} \sum_\mho |D'(\partial_\rho^p \phi_{0} )|^2 \, \mathrm{dv} \la_{p,m,s} \int_{\mathcal{S}_\star} \sum_\mho \left( |D'(\partial_\rho^p \phi_{1})|^2 + |D'(\partial_\rho^p \phi_0 )|^2 \right) \d \rho \wedge \d \mu. \]
We therefore conclude that for all $0 \leq k \leq 2s$ and for $p>m+s$
\begin{equation}
\tag{\ref{Spin2BasicEstimate}}
\| \partial_\rho^p \phi_k \|^2_{H^m(\mathcal{N}_t)} \leq C_{p,m,s} \sum_{k=0}^{2s} \int_{\mathcal{S}_\star} \sum_\mho |D'(\partial_\rho^p \phi_k)|^2 \, \d \rho \wedge \d \mu,
\end{equation}
where in particular the constant $C_{p,m,s}$ is independent of $t$.
\end{proof}

\begin{remark} Observe that the right-hand side of \eqref{Spin2BasicEstimate} in principle contains large numbers of derivatives in $\tau$. Using the spin-$s$ equations \eqref{AEqnAppendix} and \eqref{BEqnAppendix}, however, these can be eliminated entirely at the expense of producing derivatives in $\rho$ and along $\mathrm{SU}(2)$. That is, the right-hand side may be estimated by the initial data and its tangential derivatives on $\mathcal{S}_\star$.
\end{remark}

\begin{remark} As the right-hand side of \eqref{Spin2BasicEstimate} does not depend on $t$, one has that the left-hand side is uniformly bounded as $t \to 1$. Therefore for any $t \in(-1, 1]$, as the boundary $\partial \mathcal{N}_t$ of the 5-dimensional domain $\mathcal{N}_t$ is Lipschitz, the Sobolev embedding theorem gives for $m, r \in \mathbb{N}$, $\alpha \in [0,1)$ such that $m \geq r + \alpha + \frac{5}{2}$ the continuous embedding
\[ H^m(\mathcal{N}_t) \hookrightarrow C^{r, \alpha}(\mathcal{N}_t). \]
In particular, if $\alpha \leq \frac{1}{2}$ we have
\[ H^{r+ 3}(\mathcal{N}_t) \hookrightarrow C^{r, \alpha}(\mathcal{N}_t). \]
Note that it is important here that $\mathcal{N}_t$ is an open set, i.e. we consider $\mathcal{N}_t$ without its boundary.
\end{remark}

\subsection{Asymptotic expansions near $\mathcal{I}$}
\label{Section:ExpansionsUpperDomain}

Using estimate \eqref{Spin2BasicEstimate} and the Sobolev embedding theorem as described above, one finds that for any $0 \leq k \leq 2s$, $r \in \mathbb{N}$, and $0 \leq \alpha \leq \frac{1}{2}$,
\[
\partial^p_\rho \phi_k \in C^{r,\alpha} (\mathcal{N}_1) \quad
\text{for} \quad p\geq r + s + 4,
\]
provided the data on $\mathcal{S}_\star$ is sufficiently regular, as per \Cref{thm:forward_estimates}. This allows one to ensure the desired regularity of the solution near $\scri^+$ by providing sufficiently regular data on $\mathcal{S}_\star$. Now, using the wave equations of \Cref{Appendix:SolutionJets} on $\mathcal{I}$, one may solve for the functions
\[ \phi_k^{(p')} (\tau, t^\bmA{}_\bmB) \equiv \partial_\rho^{p'} \phi_k |_{\mathcal{I}}, \]
$0 \leq k \leq 2s$, on $\mathcal{I}$ for any finite order $p' \in \mathbb{N} \cup \{ 0 \}$.  The functions $\phi_k^{(p')}$ are computable explicitly as a consequence of the fact that the cylinder at spatial infinity $\mathcal{I}$ is a total characteristic of the spin-$s$ equations. Regarding these functions as functions on $\mathcal{N}_1$ which are independent of $\rho$, one may integrate $\partial^p_\rho \phi_k$ with respect to $\rho$ to obtain for $p \geq r +s +4$ the expansion
\begin{equation}
\phi_k = \sum_{p'=|k-2s|}^{p-1} \frac{1}{p'!} \phi^{(p')}_k \rho^{p'} + J^p(\partial_\rho^p \phi_k ) 
\label{FExpansion}
\end{equation}
on $\mathcal{N}_1$, where $J$ denotes the operator
\[ J \, : \, f \mapsto J(f) = \int_0^\rho f(\tau, \rho', t^\bmA{}_\bmB) \d \rho'. \]
It follows then that
\begin{equation}
\label{Taylor_expansion_in_rho_in_bulk}
J^p(\partial^p_\rho \phi_k) = \phi_k - \sum_{p'=0}^{p-1} \frac{1}{p'!} \phi^{(p')}_k \rho^{p'} \in C^{r,\alpha}(\mathcal{N}_1) \quad \text{for} \quad p \geq r + s + 4
\end{equation}
for any given $r \in \mathbb{N}$. One therefore obtains an expansion of the solution to the spin-$s$ equations in terms of the explicitly known functions $\phi_k^{(p')}$, $0 \leq k \leq 2s$, $0 \leq p' \leq p-1$, and a remainder of prescribed smoothness. We call such expansions $\emph{F-expansions}$. The details of the explicit computation of the terms in the F-expansions is given in \Cref{Appendix:SolutionJets}.

\begin{remark} Notice that the functions $\phi_k^{(p')}$ completely control whether the solution $\phi$ extends smoothly to $\scri^+$ or acquires logarithmic singularities there. Since the gauge in our construction is defined in terms of smooth conformally invariant structures (conformal geodesics and associated conformal Gaussian coordinates, see e.g. \cite{MagVal21}) and extends smoothly through $\scri^\pm$, these singularities are not a result of the choice of gauge. Rather, they are generated at all orders by the interplay between the structure of the data on $\mathcal{S}_\star$ near $\mathcal{I}$ and the nature of the evolution along $\mathcal{I}$.\end{remark}

\begin{remark} Inspecting the proof of \Cref{thm:forward_estimates}, one notices that the bound in \eqref{Taylor_expansion_in_rho_in_bulk} can in fact be improved to
\[ p \geq r + s + 4 - k, \]
provided one is prepared to consider the regularity of the solution component-by-component. In particular, setting $p=0=r$ shows that the components $\phi_{k\geq s +4}$ are always continuous. This set of components is non-empty for $s \geq 4$.
\end{remark}

\subsection{Existence of solutions}
In this subsection we provide a brief argument showing the existence of solutions to the spin-$s$ equations on the domain $\mathcal{N}_1$. The proof is a classical \emph{last slice argument} which combines the local existence result for symmetric hyperbolic systems with a contradiction argument.

As we are looking for solutions of the form \eqref{FExpansion} and the formal expansions are explicitly known, we only need to argue the existence of the derivatives $\partial_\rho^p\phi_k$. Given that the components $\phi_k$ satisfy a symmetric hyperbolic system (cf. \cref{StandardEvolutionFirst,StandardEvolutionLast}), it follows then that $\partial_\rho^p\phi_k$ also satisfy a symmetric hyperbolic system with the same structural properties as \cref{StandardEvolutionFirst,StandardEvolutionLast}.  Now, assume that on $\mathcal{S}_\star$ one has
\[
(\partial_\rho^p\phi_k)_\star \in H^{m}(\mathcal{S}_\star), \qquad p>m+s.
\]
The existence theorem for symmetric hyperbolic systems (see e.g. \cite{Kat75b})  then shows that there exists a $T>0$ such that
\[
\partial_\rho^p \phi_k (\tau, \cdot) \in H^{m}(\mathcal{S}_\tau), \qquad 0\leq \tau <T,
\]
and either $\partial_\rho^p \phi_k (\tau, \cdot) \not\in H^m(S_\tau)$ for $\tau >T$, or $T$ is not maximal, the dependence on $\tau$ being continuous. Assume now the existence of a time $\tau_\ast < 1$ (i.e. a \emph{last slice}) such that
\[
\partial_\rho^p \phi_k (\tau, \cdot) \not\in H^{m}(\mathcal{S}_{\tau}) \quad \mbox{for} \quad \tau\in[\tau_\star,1].
\]
It follows then that
\[
\int_{\tau_\ast}^1 \| \partial_\rho^p \phi_k \|^2_{H^m(\mathcal{S}_t)} \, \mathrm{d}t = \infty.
\]
As this integral is bounded by $\| \partial_\rho^p \phi_k\|_{H^m(\mathcal{N}_1)}$, the existence of a last slice contradicts the fact that $\partial_\rho^p\phi_k\in H^{m}(\mathcal{N}_1)$, which is true by \Cref{thm:forward_estimates}.

\section{Asymptotic characteristic initial value problem}
\label{Section:CharacteristicProblem}

The purpose of this section is to provide a formulation of the \emph{asymptotic characteristic initial value problem }for the massless spin-$s$ equations \eqref{MasslessSpinS}. By this we understand a setting in which suitable initial data is prescribed on a portion of past null infinity $\mathscr{I}^-$ and an incoming null hypersurface $\underline{\mathcal{B}}_\varepsilon$ intersecting $\mathscr{I}^-$ at a cut 
\begin{eqnarray*}
&& \mathcal{C}_\star \equiv  \mathscr{I}^-\cap \underline{\mathcal{B}}_\varepsilon\\
&& \phantom{\mathcal{C}_\star}= \{ (\tau, \rho) \times \mathbb{S}^2 \in \mathcal{N} \; | \; \tau =-1,\;\; \rho=\rho_\star \},
\end{eqnarray*}
with $\rho_\star$ a constant. 

\subsection{Freely specifiable data on $\scri^-$}
Equations \eqref{AEqnMain} and \eqref{BEqnMain} are, respectively, transport equations along outgoing and incoming null geodesics in the Minkowski spacetime, written in the F-gauge. In particular, at $\mathscr{I}^-$ one has that 
\begin{equation}
(A_k)|_{\scri^-} = -\rho \partial_\rho (\phi_{k+1})|_{\mathscr{I}^-} - \bmX_+(\phi_k)|_{\mathscr{I}^-} + (k+1-s) (\phi_{k+1})|_{\mathscr{I}^-}=0, \qquad k=0, \, \ldots, \, 2s-1,
\label{TransportEqnNullInfty}
\end{equation}
where $(\phi_k)|_{\mathscr{I}^-}$ denotes the restriction of $\phi_k$ to $\mathscr{I}^-$; \eqref{TransportEqnNullInfty} is a set of $2s$ transport equations along the null generators of $\mathscr{I}^-$. The key observation is that only $(\phi_0)|_{\mathscr{I}^-}$ is not fixed by these transport equations. This allows one to identify $\phi_0$ as the freely specifiable data on $\mathscr{I}^-$. That is, given $(\phi_0)|_{\mathscr{I}^-}$, one can solve the ODEs \eqref{TransportEqnNullInfty} one by one, at least in a neighbourhood of $\mathcal{C}_\star$, starting with the equation $A_0|_{\mathscr{I}^-}=0$. One obtains $(\phi_{k+1})|_{\mathscr{I}^-}$ for $k=0, \, \ldots, \, 2s-1$, provided that the initial values at the cut $(\phi_{k+1})_\star\equiv (\phi_{k+1})|_{\mathcal{C}_\star}$ are also known. 

Next, let $\underline{\mathcal{B}}_\varepsilon$ be the incoming null hypersurface whose generators are tangent to the vector $\bme_{\bmzero\bmzero'}$ and which intersects $\mathscr{I}^-$ at $\mathcal{C}_\star$. On $\underline{\mathcal{B}}_\varepsilon$ one has that
\begin{equation}
B_k|_{\underline{\mathcal{B}}_\varepsilon}=\bme_{\bmzero\bmzero'} (\phi_k)|_{\underline{\mathcal{B}}_\varepsilon} - \bmX_-(\phi_{k+1})|_{\underline{\mathcal{B}}_\varepsilon} +(k-s) (\phi_k)|_{\underline{\mathcal{B}}_\varepsilon}=0
    \label{TransportEqnBepsilon}
\end{equation}
for $k = 0, \, \ldots, \, 2s -1$. These are transport equations along the null generators of $\underline{\mathcal{B}}_\varepsilon$, in which only the component $(\phi_{2s})|_{\underline{\mathcal{B}}_\varepsilon}$ satisfies no differential condition. Accordingly, it can be identified with the freely specifiable data on $\underline{\mathcal{B}}_\varepsilon$. Similarly, one can then solve the ODEs \eqref{TransportEqnBepsilon} in sequence (starting from $B_{2s-1}$) to obtain $(\phi_k)|_{\underline{\mathcal{B}}_\varepsilon}$ for all $0\leq k \leq 2s$, provided the initial values $(\phi_k)|_\star \equiv (\phi_k)|_{\mathcal{C}_\star}$, $0 \leq k \leq 2s-1$, at the cut $\mathcal{C}_\star$ are known. 

We therefore have the following:

\begin{lemma}
\label{Lemma:FreData}
 The full set of initial data for the asymptotic characteristic initial value problem for the massless spin-$s$ equations can be computed from the reduced data set consisting of
 \begin{eqnarray*}
 && \phi_0 \quad \mbox{on}\quad \mathscr{I}^-, \\
 && \phi_{2s} \quad \mbox{on}\quad  \underline{\mathcal{B}}_\varepsilon, \quad \text{and} \\
 && \phi_1, \, \ldots, \,\phi_{2s-1}\quad \text{on} \quad \mathcal{C}_\star = \scri^- \cap \underline{\mathcal{B}}_\varepsilon. 
 \end{eqnarray*}
 \end{lemma}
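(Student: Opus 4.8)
The plan is to give an explicit constructive procedure recovering every component $\phi_k$, $0 \leq k \leq 2s$, on $\scri^- \cup \underline{\mathcal{B}}_\varepsilon$ from the reduced data, by exploiting the triangular (hierarchical) structure of the two families of transport equations \eqref{TransportEqnNullInfty} and \eqref{TransportEqnBepsilon}. The essential observation, already recorded in the preceding discussion, is that each family couples only two neighbouring components, $\phi_k$ and $\phi_{k+1}$, through a first-order differential operator along the generators of the relevant hypersurface, the remaining terms being algebraic. This is exactly what permits solving the systems one equation at a time.

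First I would treat $\scri^-$, where $\tau = -1$ and the generators are the integral curves of $\rho\partial_\rho$, so that \eqref{TransportEqnNullInfty} reads, for each fixed point of $\mathrm{SU}(2)$,
\[ \rho \partial_\rho (\phi_{k+1})|_{\scri^-} - (k+1-s)(\phi_{k+1})|_{\scri^-} = -\bmX_+(\phi_k)|_{\scri^-}, \qquad k = 0, \ldots, 2s-1. \]
Taking $\phi_0|_{\scri^-}$ as prescribed, the source $\bmX_+(\phi_0)|_{\scri^-}$, an angular ($\mathrm{SU}(2)$) derivative, is determined on all of $\scri^-$, so $A_0|_{\scri^-}=0$ is a linear first-order ODE in $\rho$ for $\phi_1|_{\scri^-}$, uniquely solvable once the cut value $(\phi_1)_\star$ at $\mathcal{C}_\star$ (i.e. at $\rho=\rho_\star$) is fixed. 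Iterating, each $A_k|_{\scri^-}=0$ yields $\phi_{k+1}|_{\scri^-}$ from the now-known $\phi_k|_{\scri^-}$ and the cut datum $(\phi_{k+1})_\star$, producing $\phi_1, \ldots, \phi_{2s}$ on $\scri^-$ near $\mathcal{C}_\star$.

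Next I would treat $\underline{\mathcal{B}}_\varepsilon$, whose generators are tangent to $\bme_{\bmzero\bmzero'}$. Here \eqref{TransportEqnBepsilon} is a first-order ODE along these generators in which $\phi_k$ is differentiated and $\phi_{k+1}$ enters only through the angular source $\bmX_-(\phi_{k+1})$. Taking $\phi_{2s}|_{\underline{\mathcal{B}}_\varepsilon}$ as prescribed and descending, $B_{2s-1}=0$ determines $\phi_{2s-1}|_{\underline{\mathcal{B}}_\varepsilon}$ from $\phi_{2s}|_{\underline{\mathcal{B}}_\varepsilon}$ and $(\phi_{2s-1})_\star$, and continuing down to $B_0=0$ recovers $\phi_0, \ldots, \phi_{2s-1}$ on $\underline{\mathcal{B}}_\varepsilon$, the last step requiring $(\phi_0)_\star$. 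The bookkeeping that closes the argument is to check that all invoked cut values are available: $(\phi_0)_\star$ is the restriction of the prescribed $\phi_0|_{\scri^-}$, $(\phi_{2s})_\star$ the restriction of the prescribed $\phi_{2s}|_{\underline{\mathcal{B}}_\varepsilon}$, and the intermediate $(\phi_1)_\star, \ldots, (\phi_{2s-1})_\star$ are precisely the freely specified cut data in the statement, so the two hierarchies meet consistently at $\mathcal{C}_\star$ with no over- or under-determination.

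The only genuinely delicate point, which I would flag explicitly, is the singular coefficient $\rho\partial_\rho$ in the $\scri^-$ hierarchy: this Euler-type operator degenerates at $\rho=0$, i.e. at the critical set $\mathcal{I}^-$, so solvability and uniqueness of the ODEs for $\phi_{k+1}|_{\scri^-}$ are only guaranteed away from $\rho=0$, whence the qualifier ``in a neighbourhood of $\mathcal{C}_\star$''. Since $\mathcal{C}_\star$ sits at $\rho=\rho_\star>0$ and the remaining terms are regular there, standard ODE theory applies and the construction goes through; the behaviour of the extended data as $\rho\to 0$ is the subtle issue deferred to the estimates of \Cref{Section:BackwardEstimates}. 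The transport along $\underline{\mathcal{B}}_\varepsilon$, governed by the regular operator $\bme_{\bmzero\bmzero'}$, presents no such difficulty.
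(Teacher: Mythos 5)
Your proposal is correct and follows essentially the same route as the paper: it exploits the hierarchical structure of the transport equations $A_k|_{\scri^-}=0$ (solved upward from the prescribed $\phi_0|_{\scri^-}$) and $B_k|_{\underline{\mathcal{B}}_\varepsilon}=0$ (solved downward from the prescribed $\phi_{2s}|_{\underline{\mathcal{B}}_\varepsilon}$), with the intermediate components supplied at the cut $\mathcal{C}_\star$. Your additional remarks on the consistency of the cut data and on the Euler-type degeneracy of $\rho\partial_\rho$ at $\rho=0$ are accurate and consistent with the paper's qualifier ``in a neighbourhood of $\mathcal{C}_\star$''.
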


\subsection{Symmetric hyperbolicity and the local existence theorem}
For reasons that will become clear shortly, in order to discuss the existence of solutions to the asymptotic characteristic initial value problem in a neighbourhood of $\mathcal{C}_\star$, it is convenient to use the evolution equations in a slightly different conformal gauge than the one discussed in \Cref{Section:NullFrameNearI0}. More precisely, we make use of the evolution equations \Cref{StandardEvolutionFirstGeneral,StandardEvolutionLastGeneral} as given in \Cref{Appendix:MoreGeneralGauge} and assume that the smooth function $\mu$ is no longer identically equal to $1$. A long but straightforward calculation shows that in this gauge the spin-$s$ equations \eqref{MasslessSpinS} are equivalent to an evolution system of the form
\begin{equation}
\mathbf{A}^\mu \partial_\mu \bmPhi =\mathbf{B}\cdot \bmPhi
    \label{MasslessSpinSSHS}
\end{equation}
where
\begin{eqnarray*}
&& \mathbf{A}^\mu \partial_\mu = \left( \begin{array}{cccccc} \sqrt{2} \hat{\bme}_{\bmzero \bmzero'} & - \mu \bmX_- & 0 & \cdots & \cdots & 0 \\
-\mu \bmX_+ & 2 \partial_{\hat{\tau}} & - \mu \bmX_- & 0 & \cdots & 0 \\
0 & -\mu \bmX_+ & 2 \partial_{\hat{\tau}} & - \mu \bmX_- & \ddots & \vdots \\
\vdots & \ddots & \ddots & \ddots & \ddots & 0 \\
0 & \cdots &  0 &- \mu \bmX_+ & 2 \partial_{\hat{\tau}} & - \mu \bmX_- \\
0 & \cdots & \cdots & 0 & - \mu \bmX_+ & \sqrt{2} \hat{\bme}_{\bmone \bmone'}
\end{array} \right), \\
&& \bmPhi = \big(\phi_0, \, \ldots, \, \phi_{2s}  \big)^t,
\end{eqnarray*}
and $\mathbf{B}$ is a constant $2s\times 2s$ matrix. It can be readily verified that the above matrix is Hermitian. Moreover,
\[
\mathbf{A}^0 \equiv \mbox{diag}\big(1-\kappa'\hat{\tau}, \, 2, \, \ldots, \, 2, \, 1+\kappa'\hat{\tau} \big).
\]
This matrix is positive definite away from $\mathcal{I}^\pm$. As $\kappa'=\mu'\rho +\mu$ with $\mu(0)=1$, but not identically $1$ away from $\rho=0$, it follows that
\begin{equation}
\mathbf{A}^0|_{\mathcal{I}^-}= \mbox{diag}\big(2, \, 2, \, \ldots, \, 2, \, 0 \big).
\label{DegenerateMatrix}
\end{equation}
An analogous expression holds on $\mathcal{I}^+$. Thus, the system \eqref{MasslessSpinSSHS} is symmetric hyperbolic away from the critical sets $\mathcal{I}^\pm$, but degenerates at $\mathcal{I}^\pm$.

\begin{remark}
For the choice $\mu=1$ corresponding to the basic F-gauge discussed in \Cref{Section:CylinderMinkowski}, one has that
\[
\mathbf{A}^0|_{\mathscr{I}^-}= \mbox{diag}\big(2, \, 2, \, \ldots,\, 2, \, 0 \big).
\]
That is, in this representation the corresponding symmetric hyperbolic system degenerates on the \emph{whole} of $\mathscr{I}^-$, which creates problems when attempting to solve the system \eqref{MasslessSpinSSHS}. The use of the more general gauge of \Cref{Appendix:MoreGeneralGauge} shows that the degeneration of $\mathbf{A}^0$ on the whole of $\scri^-$ in the case $\mu =1$ is a \emph{coordinate} singularity. While this more singular representation ($\mu=1$) does not lend itself naturally to the discussion of local existence in a neighbourhood of $\mathcal{C}_\star=\mathscr{I}^-\cap \underline{\mathcal{B}}_\varepsilon$, it gives rise to a straightforward construction of asymptotic initial data. Moreover, as we do in \Cref{Section:ForwardEstimates,Section:BackwardEstimates}, it allows us to construct robust estimates.  
\end{remark}

Using the general theory set up in \cite{Ren90} (see also \cite{CFEBook}, \S12.5.3), one can therefore obtain the following local existence theorem.

\begin{proposition}
\label{Proposition:CVIPRendall}
Given a smooth choice of reduced initial data as in \Cref{Lemma:FreData}, there exists a neighbourhood $\mathcal{U}\subset D^+(\mathscr{I}^-\cup \underline{\mathcal{B}}_\varepsilon)$ of $\mathcal{C}_\star$ in which the asymptotic characteristic initial value problem for the massless spin-$s$ equation has a unique smooth solution $\phi_k$ for $k=0,\ldots,2s$, where $D^+(S)$ denotes the future domain of dependence of the set $S \subset \mathcal{N}$.
\end{proposition}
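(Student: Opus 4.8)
The plan is to obtain Proposition \ref{Proposition:CVIPRendall} as an application of the local existence theory for the characteristic initial value problem for symmetric hyperbolic systems developed by Rendall \cite{Ren90} (see also \cite{CFEBook}, \S12.5.3); the substance of the argument lies in arranging the field equations into the form required by that theory near the cut $\mathcal{C}_\star$. The decisive preliminary step is the choice of conformal gauge. As the preceding remark makes clear, in the basic F-gauge ($\mu = 1$) the symbol $\mathbf{A}^0$ degenerates on the whole of $\mathscr{I}^-$, so that the level sets of the time coordinate $\tau$ are themselves characteristic there, obstructing a clean Cauchy reduction. I would therefore first pass to the more general gauge of \Cref{Appendix:MoreGeneralGauge}, with $\mu$ not identically equal to $1$, in which \eqref{MasslessSpinSSHS} is symmetric hyperbolic with $\mathbf{A}^\mu$ Hermitian and $\mathbf{A}^0$ positive definite away from the critical sets $\mathcal{I}^\pm$.

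Next I would verify the three hypotheses of the characteristic existence theorem. First, since $\mathcal{C}_\star$ lies at $\rho = \rho_\star > 0$ it is separated from $\mathcal{I}^-$, and continuity of $\mathbf{A}^0$ furnishes an open neighbourhood of $\mathcal{C}_\star$ on which $\mathbf{A}^0$ is positive definite; there the slices of constant (rescaled) time are spacelike and the system is genuinely symmetric hyperbolic. Second, $\mathscr{I}^-$ and $\underline{\mathcal{B}}_\varepsilon$ (whose generators are tangent to $\bme_{\bmzero\bmzero'}$) are null hypersurfaces of $\bmeta$, hence characteristic for the spin-$s$ equations, and they intersect transversally in the cut $\mathcal{C}_\star$ --- precisely the configuration to which \cite{Ren90} applies. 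Third, the reduced data set of \Cref{Lemma:FreData} determines the full array of characteristic components $(\phi_0, \ldots, \phi_{2s})$ on each of the two hypersurfaces: one integrates the intrinsic transport equations \eqref{TransportEqnNullInfty} along the generators of $\mathscr{I}^-$ starting from the freely prescribed $\phi_0$, and \eqref{TransportEqnBepsilon} along the generators of $\underline{\mathcal{B}}_\varepsilon$ starting from $\phi_{2s}$, with the values $(\phi_k)|_{\mathcal{C}_\star}$ supplying the consistent initial conditions for these ODEs at the corner.

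With the system in standard form and the characteristic data assembled, I would invoke Rendall's theorem to produce a unique solution $\bmPhi = (\phi_0, \ldots, \phi_{2s})$ on some neighbourhood $\mathcal{U} \subset D^+(\mathscr{I}^- \cup \underline{\mathcal{B}}_\varepsilon)$ of $\mathcal{C}_\star$. Smoothness of $\bmPhi$ then follows from the smoothness of the reduced data together with the smooth dependence of $\mathbf{A}^\mu$ and $\mathbf{B}$ on the coordinates throughout this region, and uniqueness is part of the same theorem.

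I expect the main obstacle to be conceptual rather than computational: one must confirm that the modification $\mu \not\equiv 1$ simultaneously renders the time slices spacelike near $\mathcal{C}_\star$ (so that $\mathbf{A}^0$ is positive definite and the Cauchy reduction underlying \cite{Ren90} is available) while leaving $\mathscr{I}^-$ and $\underline{\mathcal{B}}_\varepsilon$ characteristic for the field equations. This turns out to be harmless because $\mu$ enters only through the angular operators $\mu \bmX_\pm$ and does not affect the principal null directions $\bme_{\bmzero\bmzero'}$ and $\bme_{\bmone\bmone'}$ which govern the characteristics; once this is checked, existence and uniqueness follow immediately from the cited results.
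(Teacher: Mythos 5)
Your proposal is correct and follows essentially the same route as the paper: the authors likewise pass to the more general gauge of \Cref{Appendix:MoreGeneralGauge} with $\mu\not\equiv 1$ precisely because $\mathbf{A}^0$ degenerates on all of $\mathscr{I}^-$ when $\mu=1$, note that in the new gauge the system \eqref{MasslessSpinSSHS} is symmetric hyperbolic with $\mathbf{A}^0$ positive definite away from $\mathcal{I}^\pm$ (hence near $\mathcal{C}_\star$ since $\rho_\star>0$), assemble the full characteristic data from the reduced set of \Cref{Lemma:FreData} via the transport equations, and then invoke Rendall's theorem \cite{Ren90}. The only difference is one of emphasis: the paper treats the verification of Rendall's hypotheses as immediate from the preceding setup, whereas you spell out the transversality and characteristic-hypersurface checks explicitly.
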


In fact, using a strategy similar to that in \cite{Luk12}, it is possible to ensure the existence of solutions along a narrow causal rectangle along $\mathscr{I}^-$ bounded away from $\mathcal{I}$ (see \Cref{Fig:CausalDiamonds}). However, the degeneracy in the matrix $\mathbf{A}^0$ at $\mathcal{I}^-$ as given in \eqref{DegenerateMatrix} precludes one from obtaining a domain of existence which includes the past critical point (see \Cref{Remark:range_causal_diamond}). In order to make this domain extension statement more precise, let us introduce some further notation. Given $(\tau_\bullet,\rho_\bullet)\in [-1,1]\times [0,\infty)$, let 
\[
\mathcal{S}_{\tau_\bullet,\rho_\bullet}\equiv \big\{ (\tau,\rho,t^\bmA{}_\bmB) \in \mathcal{N}\;|\; \tau=\tau_\bullet, \; \rho=\rho_\bullet, \;t^\bmA{}_\bmB\in \mathrm{SU}(2)   \big\}.
\]
For a given choice of $(\tau_\bullet,\rho_\bullet)$, the set $\mathcal{S}_{\tau_\bullet,\rho_\bullet}$ naturally projects to a 2-sphere via the Hopf map. Suppose that $(\tau_\bullet,\rho_\bullet)$ are chosen so that $\mathcal{S}_{\tau_\bullet,\rho_\bullet}$ lies in the chronological future of $\mathcal{C}_\star =\mathcal{S}_{-1,\rho_\star}$. The \emph{causal diamond} defined by the sets $\mathcal{S}_{\tau_\bullet,\rho_\bullet}$ and $\mathcal{C}_\star$ is given by
\[
\mathscr{D}_{\tau_\bullet,\rho_\bullet} = J^+(\mathcal{C}_\star)\cap J^-(\mathcal{S}_{\tau_\bullet,\rho_\bullet}).
\]
A schematic representation of an example of such a causal diamond $\mathscr{D}_{\tau_\bullet,\rho_\bullet}$ is provided in Figure \ref{Fig:CausalDiamonds}. The set $\mathcal{S}_{\tau_\bullet,\rho_\bullet}$ can be thought of as the set of points of intersection of future-directed null geodesics emanating from $\mathcal{S}_{-1,\rho_\circ}\subset \mathscr{I}^-$ with the future-directed null geodesics emanating from $\mathcal{S}_{\tau_\ast,\rho_\ast}\subset \underline{\mathcal{B}}_\varepsilon$, where $\rho_\circ$, $\tau_\ast$, and $\rho_\ast$ depend on $(\tau_\bullet,\rho_\bullet)$. Conversely, given points $(-1,\rho_\circ,t^\bmA{}_\bmB)\in \mathscr{I}^-$ and $(\tau_\ast,\rho_\ast,t^\bmA{}_\bmB)\in \underline{\mathcal{B}}_\varepsilon$, there exist coordinates $(\tau_\bullet,\rho_\bullet)$ depending on $(\rho_\circ,\tau_\ast,\rho_\ast)$ such that $\mathcal{S}_{\tau_\bullet,\rho_\bullet}$ is the set of intersections of the future-directed null geodesics emanating from $\mathcal{S}_{-1,\rho_\circ}$ and $\mathcal{S}_{\tau_\ast,\rho_\ast}$. In this notation one then has the following.

\begin{proposition}
\label{Proposition:CIVPLuk}
Given $\rho_\circ\in (0,\rho_\star)$, consider smooth initial data for the asymptotic characteristic initial value problem for the spin-$s$ equations on $\underline{\mathcal{B}}_\varepsilon\cup \mathscr{I}^-_{[\rho_\circ,\rho_\star]}$ where
\[
\mathscr{I}^-_{[\rho_\circ,\rho_\star]}\equiv \{ p\in \mathscr{I}^- \;|\; \rho(p) \in [\rho_\circ,\rho_\star] \}.
\]
Then there exists a unique smooth solution to the massless spin-$s$ field equations on the causal diamond $\mathscr{D}_{\tau_\bullet,\rho_\bullet}$. 
\end{proposition}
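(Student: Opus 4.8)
The plan is to upgrade the local existence result of \Cref{Proposition:CVIPRendall}, which controls the solution only in a small neighbourhood of the corner $\mathcal{C}_\star$, to the full causal diamond $\mathscr{D}_{\tau_\bullet,\rho_\bullet}$ by adapting the marching scheme of Luk \cite{Luk12}. The decisive structural observation is that, because $\rho_\circ>0$, the closed diamond $\overline{\mathscr{D}_{\tau_\bullet,\rho_\bullet}}=\overline{J^+(\mathcal{C}_\star)\cap J^-(\mathcal{S}_{\tau_\bullet,\rho_\bullet})}$ is a compact set which is disjoint from the critical sets $\mathcal{I}^\pm$: its intersection with $\mathscr{I}^-$ lies at $\rho\in[\rho_\circ,\rho_\star]$, bounded away from $\rho=0$, and it is capped to the future by $\mathcal{S}_{\tau_\bullet,\rho_\bullet}$ at some $\tau_\bullet<1$. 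Since the matrix $\mathbf{A}^0$ of the reduced system \eqref{MasslessSpinSSHS} degenerates only on $\mathcal{I}^\pm$ (cf. \eqref{DegenerateMatrix}), it is by continuity uniformly positive definite on $\overline{\mathscr{D}_{\tau_\bullet,\rho_\bullet}}$. Thus \eqref{MasslessSpinSSHS} is a genuinely symmetric hyperbolic system with smooth, uniformly bounded coefficients on the diamond, and the delicate degenerate estimates of \Cref{Section:BackwardEstimates} are \emph{not} needed here.

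First I would establish a uniform a priori energy estimate, closely parallel to the proof of \Cref{thm:forward_estimates}. Contracting \eqref{MasslessSpinSSHS} with $\overline{\bmPhi}$ and integrating the divergence of the resulting current over a sub-diamond, the divergence theorem produces boundary contributions on the two characteristic initial hypersurfaces $\underline{\mathcal{B}}_\varepsilon$ and $\mathscr{I}^-_{[\rho_\circ,\rho_\star]}$ and on a later slice of constant $\tau$. Uniform positive definiteness of $\mathbf{A}^0$ makes the slice flux coercive, the null-boundary fluxes have a definite sign exactly as the $\mathcal{B}_t$ term did in \Cref{thm:forward_estimates}, and since $\mathbf{B}$ is constant a Gr\"onwall argument closes to bound the energy on each $\tau$-slice by suitable norms of the reduced data of \Cref{Lemma:FreData}, with a constant depending on $s$ and on $\rho_\circ$ but \emph{not} on the position of the slice. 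Commuting $\bmZ^\alpha$, $\partial_\rho$ and $\partial_\tau$ through \eqref{MasslessSpinSSHS}, which preserves the symmetric hyperbolic structure, promotes this to a uniform estimate at arbitrary Sobolev order.

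With the uniform estimate in hand I would carry out Luk's iteration along $\mathscr{I}^-$. Applying \Cref{Proposition:CVIPRendall} at $\mathcal{C}_\star$ yields a solution on a small diamond; its outgoing null boundary (ruled by $\bme_{\bmzero\bmzero'}$) together with the next segment of $\mathscr{I}^-$ furnishes characteristic data for a fresh problem posed at the shifted corner $\mathcal{S}_{-1,\rho_1}$, which I solve again by \Cref{Proposition:CVIPRendall}, and so on while decreasing $\rho$ from $\rho_\star$ towards $\rho_\circ$. The essential point, and the reason the uniform estimate is needed, is that the $\bme_{\bmzero\bmzero'}$-extent of each step is bounded below uniformly: the a priori bound keeps the data fed into successive applications uniformly controlled, and uniform hyperbolicity keeps the existence width from collapsing, so only finitely many steps reach $\rho_\circ$. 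Uniqueness on overlaps (from \Cref{Proposition:CVIPRendall}) lets the pieces glue into a single smooth solution on a uniform slab along $\mathscr{I}^-$; a further marching/last-slice argument of the type used for $\mathcal{N}_1$ in \Cref{Section:ForwardEstimates} then propagates the solution to the future and fills $\mathscr{D}_{\tau_\bullet,\rho_\bullet}$ up to $\mathcal{S}_{\tau_\bullet,\rho_\bullet}$, the a priori estimate ruling out a last slice before the cap is reached. Global uniqueness follows by applying the same energy estimate to the difference of two solutions, which solves \eqref{MasslessSpinSSHS} with vanishing data and hence vanishes identically.

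The hard part will be purely the \emph{uniformity} of the construction: one must verify that the Gr\"onwall constant and the per-step existence width remain bounded throughout the diamond, and this is precisely where $\rho_\circ>0$ is indispensable. As $\rho_\circ\to 0$ the construction approaches $\mathcal{I}^-$, the coercivity of $\mathbf{A}^0$ is lost by \eqref{DegenerateMatrix}, the energy estimate degenerates, and the marching width shrinks to zero; the scheme therefore genuinely cannot be pushed to include the past critical point, consistent with \Cref{Remark:range_causal_diamond}. A secondary bookkeeping point is the matching regularity at the successive shifted corners, but this is handled entirely by the uniqueness clause of \Cref{Proposition:CVIPRendall} on the overlaps.
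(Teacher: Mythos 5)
Your proposal is correct and follows essentially the same route the paper intends: the paper does not write out a proof of \Cref{Proposition:CIVPLuk} but explicitly attributes it to the marching strategy of \cite{Luk12} built on Rendall's local result (\Cref{Proposition:CVIPRendall}), made possible precisely because the diamond stays at $\rho\geq\rho_\circ>0$ so that $\mathbf{A}^0$ in the general gauge of \Cref{Appendix:MoreGeneralGauge} remains uniformly positive definite. Your fleshing-out of the uniform energy estimate, the uniform per-step existence width, and the breakdown as $\rho_\circ\to 0$ is consistent with the paper's \Cref{Remark:range_causal_diamond}.
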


\begin{figure}[H]
\centering
     \begin{subfigure}[b]{0.5\textwidth}
     \centering
\includegraphics[width=\textwidth]{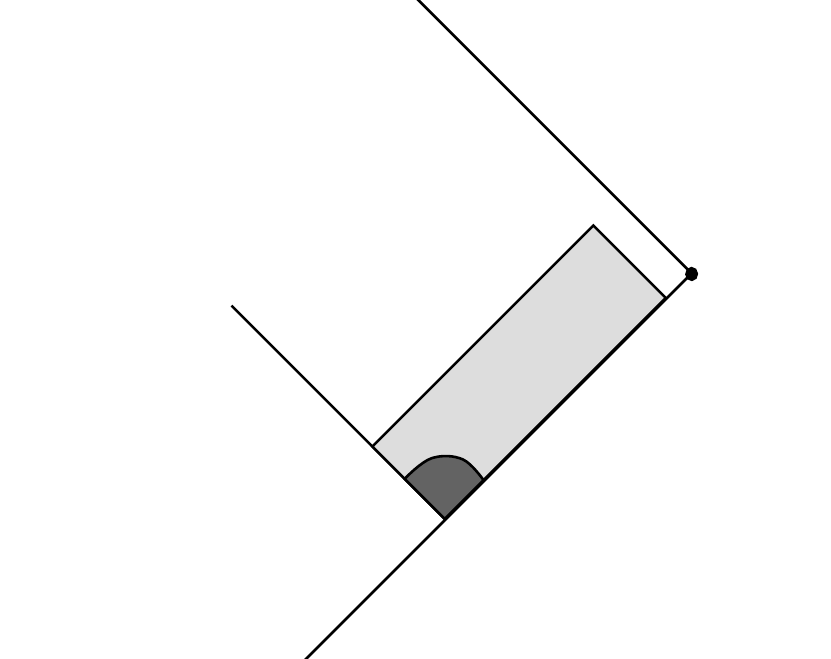}
\put(-75, 150){$\scri^+$}
\put(-75, 50){$\scri^-$}
\put(-95,75){$\mathscr{D}_{\tau_\bullet,\rho_\bullet}$}
\put(-106,43){\textcolor{white}{$\mathcal{U}$}}
\put(-30, 101){$i^0$}
\put(-65.5, 112){$\bullet$}
\put(-92, 117){$\mathcal{S}_{\tau_\bullet, \rho_\bullet}$}
\put(-122, 36){$\underline{\mathcal{B}}_\varepsilon$}
\caption{The domain of existence in the Penrose gauge.}
\end{subfigure} 
\begin{subfigure}[b]{0.45\textwidth}
\centering
\includegraphics[width=\textwidth]{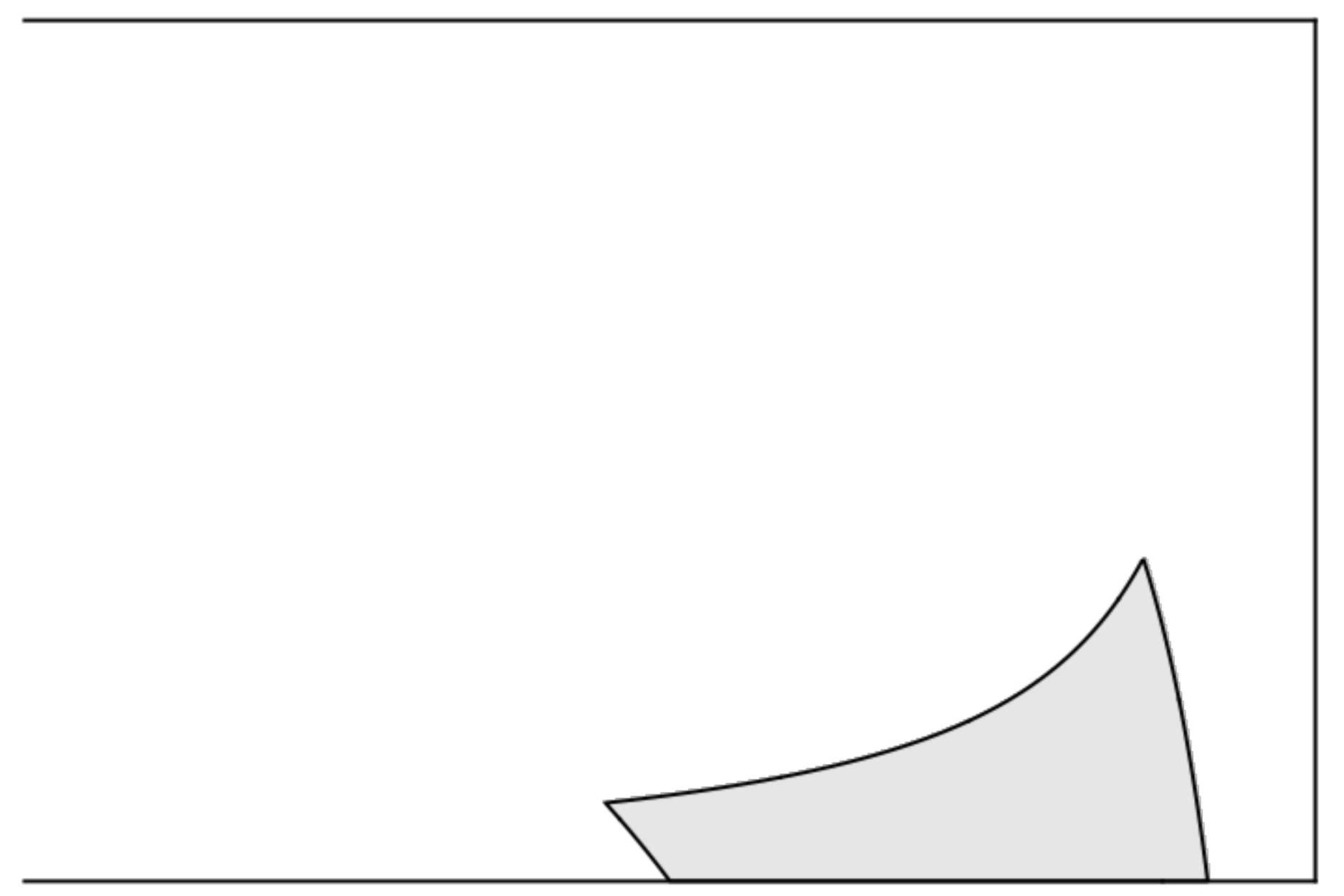}
\put(-150, 135){$\scri^+$}
\put(-150, -10){$\scri^-$}
\put(0,60){$\mathcal{I}$}
\put(0,-10){$\mathcal{I}^-$}
\put(-6.5,0){$\bullet$}
\put(0,135){$\mathcal{I}^+$}
\put(-6.5,128){$\bullet$}
\put(-70,7){$\mathscr{D}_{\tau_\bullet,\rho_\bullet}$}
\caption{The domain of existence in the F-gauge.}
\end{subfigure}
\caption{The domains of existence of Propositions \ref{Proposition:CVIPRendall} and \ref{Proposition:CIVPLuk}. Observe that these domains do not include spatial infinity.}\label{Fig:CausalDiamonds}
\end{figure}

\begin{remark} \label{Remark:range_causal_diamond} The conclusion of \Cref{Proposition:CIVPLuk} ensures the existence of solutions to the spin-$s$ equations on a causal diamond which can get arbitrarily close to spatial infinity, but cannot actually reach it. The development of estimates to deal with this degeneracy is the objective of the next section.
\end{remark}

\section{Estimates near \texorpdfstring{$\scri^-$}{scriminus}}
\label{Section:BackwardEstimates}
In this section we construct estimates which allow us to control the solutions to the spin-$s$ equations in terms of asymptotic characteristic initial data on past null infinity. At the core of these estimates is a combination of the methods developed in \cite{Luk12} to obtain an optimal existence result for the characteristic initial value problem, and the strategy adopted in \Cref{Section:ForwardEstimates} for the standard initial value problem near spatial infinity. We recall once more that the degeneracy of the equations at the critical point $\mathcal{I}^-$ prevents a direct application of the results of \cite{Luk12}, necessitating a different treatment of the region which includes spatial infinity.

\subsection{Estimating the radiation field}
The component $\phi_{2s}$ (the \emph{outgoing radiation field}) is the most problematic to estimate from $\scri^-$ in our setting, as the na\"ive energy estimate for equation \eqref{AEqnAppendix} for $k=2s-1$ degenerates at past null infinity $\scri^- = \{ \tau = -1 \}$. However, we may produce estimate analogous to those of \Cref{Section:ForwardEstimates}.

Given $\rho_\star >0$, we consider the following hypersurfaces in  $\overline{\mathcal{N}}$:
\begin{align*}
\scri^-_{\rho_\star} &\equiv \scri^- \cap \{ 0 \leq \rho \leq \rho_\star \} , \\
\underline{\mathcal{B}}_\varepsilon &\equiv  \left\{ (\tau, \rho, t^\bmA{}_{\bmB}) \, | \, -1 \leq \tau \leq -1 + \varepsilon, ~ \rho = \frac{\rho_\star}{1-\tau}, ~ t^\bmA{}_{\bmB} \in \mathrm{SU}(2) \right\}, \\
\mathcal{S}_{-1+\varepsilon} &\equiv  \left\{ (\tau, \rho, t^\bmA{}_{\bmB}) \, | \, \tau = -1 + \varepsilon, ~ 0 \leq \rho \leq \frac{\rho_\star}{2-\epsilon}, ~ t^\bmA{}_\bmB \in \mathrm{SU}(2) \right\},
\\
\mathcal{I}_\varepsilon &\equiv \bigg\{ (\tau, \rho, t^\bmA{}_{\bmB}) \, | \, -1 \leq \tau \leq -1 + \varepsilon, ~ \rho = 0, ~ t^\bmA{}_\bmB \in \mathrm{SU}(2) \bigg\}.
\end{align*}
Observe that the set $\underline{\mathcal{B}}$ is a short incoming null hypersurface intersecting $\scri^-$ at $\rho = \rho_\star$. Moreover, let $\underline{\mathcal{N}}_\varepsilon$ denote the spacetime slab bounded by the hypersurfaces $\underline{\mathcal{B}}_\varepsilon$, $\mathscr{I}^-_{\rho_\star}$, $\mathcal{S}_{-1+\varepsilon}$ and $\mathcal{I}_\varepsilon$ (see \Cref{Fig:Cylinder_Backward_Estimates}). As in the previous section, given $m\geq 0$ we let
\[
\mho \equiv \{ (q', p', \alpha) \in \mathbb{N} \times \mathbb{N} \times \mathbb{N}^3 \, : \, q' + p' + |\alpha| \leq m \}.
\]

\begin{figure}[H]
\begin{center}
\includegraphics[scale=0.4]{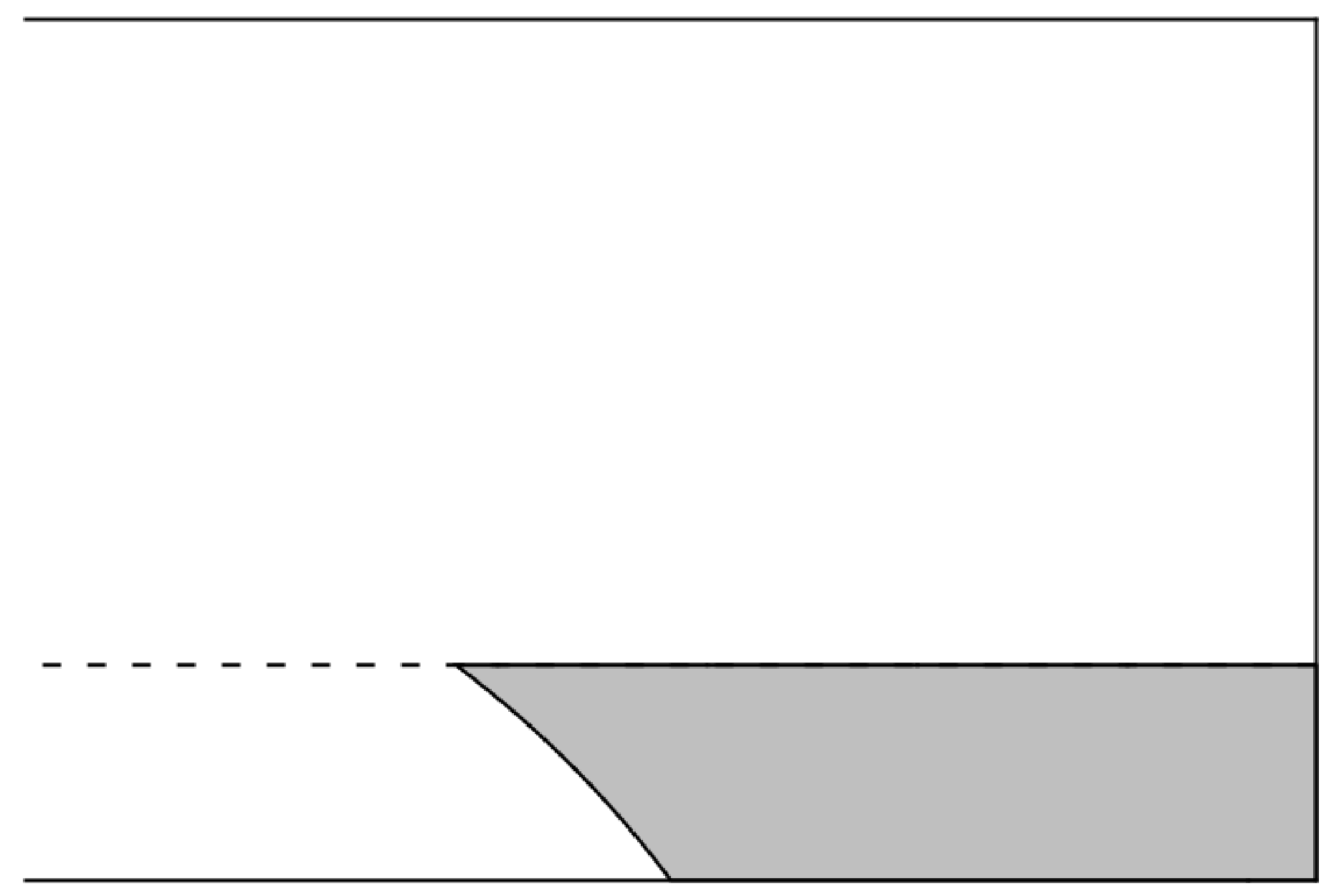}
\put(-120, 148){$\scri^+$}
\put(-65, -10){$\scri^-_{\rho_\star}$}
\put(2, 15){$\mathcal{I}_\varepsilon$}
\put(-65, 43){$\mathcal{S}_{-1+\varepsilon}$}
\put(-140, 15){$\underline{\mathcal{B}}_\varepsilon$}
\put(-65, 15){$\underline{\mathcal{N}}_\varepsilon$}
\end{center}
\caption{We perform energy estimates from $\scri^-_{\rho_\star} \cup \underline{\mathcal{B}}_\varepsilon$ to a Cauchy surface $\mathcal{S}_{-1+\varepsilon}$ for some $\varepsilon \ll 1$.}
\label{Fig:Cylinder_Backward_Estimates}
\end{figure}

\noindent In terms of the above we have the following estimate for the outgoing radiation field.

\begin{proposition} \label{thm:radiation_field_estimate} Let $\varepsilon > 0$. Suppose that for $m, \, q, \, p \in \mathbb{N}$ and $(q', p', \alpha) \in \mathbb{N} \times \mathbb{N} \times \mathbb{N}^3$ satisfying
\[ q' + p' + |\alpha| \leq m \quad \text{and} \quad m + p < s + q  \]
we have the bound 
\begin{equation}
    \label{characteristic_data_assumption}
    \sum_\mho \int_{\scri^-_{\rho_\star}} |D'( \partial_\tau^q \partial^p_\rho \phi_k)|^2 \d \rho \wedge \d \mu + \sum_\mho \int_{\underline{\mathcal{B}}_\varepsilon} |D'(\partial_\tau^q \partial^p_\rho\phi_k)|^2 \d \underline{\mathcal{B}} \leq \Omega_\star
\end{equation}
on the characteristic data for $0 \leq k \leq 2s$,
for some $\Omega_\star > 0$. Additionally, assume that the bootstrap bound 
\begin{equation} 
\label{bootstrap_bound}
    \| \partial_\tau^q \partial_\rho^p \phi_k \|^2_{H^m(\underline{\mathcal{N}}_\varepsilon)} < \Omega_\star
\end{equation}
holds for $0 \leq k \leq 2s -1$. Then there exists a constant $C>0$ such that
\begin{equation}
    \label{radiation_field_bound_scri_minus}
    \| \partial_\tau^q \partial_\rho^p \phi_{2s} \|^2_{H^m(\underline{\mathcal{N}}_\varepsilon)} \leq C \Omega_\star.
\end{equation}
\end{proposition}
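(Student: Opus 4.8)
The plan is to run the energy identity \eqref{differentialidentity} for the single pair of equations with $k=2s-1$, namely $A_{2s-1}[\phi]$ and $B_{2s-1}[\phi]$, which couples the radiation field $\phi_{2s}$ to $\phi_{2s-1}$, and to integrate it over the slab $\underline{\mathcal{N}}_\varepsilon$. Exactly as in \Cref{thm:forward_estimates}, I would apply the operator $D = D'\partial_\tau^q\partial_\rho^p$ with $D' = \partial_\tau^{q'}\partial_\rho^{p'}\bmZ^\alpha$, form the combination $2\operatorname{Re}\big(\overline{D\phi_{2s}}\,DA_{2s-1}[\phi] + \overline{D\phi_{2s-1}}\,DB_{2s-1}[\phi]\big)=0$, and commute $D$ through the operators to obtain \eqref{differentialidentity} with $k=2s-1$. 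The conceptual point that makes the estimate possible \emph{despite} the degeneracy is that the $\partial_\tau$-flux coefficient multiplying $|D\phi_{2s}|^2$ is $(1+\tau)$, which \emph{vanishes} on $\scri^-=\{\tau=-1\}$. Thus the energy identity requires no control of $\phi_{2s}$ on the past boundary, consistent with the fact (\Cref{Lemma:FreData}) that $\phi_{2s}$ is not freely specifiable on $\scri^-$; there only the data term $2|D\phi_{2s-1}|^2$ survives.

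Next I would fix the sign of the bulk terms. With the full operator carrying $q+q'$ derivatives in $\tau$ and $p+p'$ in $\rho$, the bulk coefficient of $|D\phi_{2s}|^2$ in \eqref{differentialidentity} is $2(s+q+q'-p-p')$, and the hypothesis $m+p<s+q$ together with $q'+p'+|\alpha|\leq m$ forces this to be bounded below by $2(s+q-p-m)>0$ for every $(q',p',\alpha)\in\mho$. This is the precise analogue of the condition $p>m+s$ in \Cref{thm:forward_estimates}, with the roles of $\rho$- and $\tau$-derivatives interchanged, and reflects that one must differentiate sufficiently many times in the boundary-defining variable $\tau$ to coerce the bulk. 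Applying the divergence theorem over $\underline{\mathcal{N}}_\varepsilon$, the boundary contributions are: on $\mathcal{S}_{-1+\varepsilon}$ a non-negative flux $\varepsilon|D\phi_{2s}|^2 + (2-\varepsilon)|D\phi_{2s-1}|^2$; on $\scri^-_{\rho_\star}$ the data term $-2|D\phi_{2s-1}|^2$; on $\mathcal{I}_\varepsilon$ nothing, since the $\rho$-flux carries a factor of $\rho$ and vanishes at $\rho=0$; and on the null hypersurface $\underline{\mathcal{B}}_\varepsilon = \{\rho(1-\tau)=\rho_\star\}$ a contribution which, upon contracting the flux vector with the conormal proportional to $(-\rho,\,1-\tau)$, collapses to $-2\underline{\nu}\rho|D\phi_{2s}|^2$. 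Since $\phi_{2s}$ is precisely the freely specifiable datum on $\underline{\mathcal{B}}_\varepsilon$, this term is controlled by \eqref{characteristic_data_assumption}.

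Rearranging to isolate the coercive bulk term, keeping the harmless $\varepsilon\int_{\mathcal{S}_{-1+\varepsilon}}|D\phi_{2s}|^2$ on the left and discarding the non-positive $-(2-\varepsilon)\int_{\mathcal{S}_{-1+\varepsilon}}|D\phi_{2s-1}|^2$ on the right, I would then sum over $\mho$; the angular terms cancel exactly as in the forward case by \Cref{Lemma:IntegrationSU2C}. Using the uniform lower bound $2(s+q-p-m)$ on the $\phi_{2s}$ coefficient and a uniform upper bound on the (now large and negative) bulk coefficient of $|D\phi_{2s-1}|^2$, one is left with
\[ \| \partial_\tau^q \partial_\rho^p \phi_{2s} \|^2_{H^m(\underline{\mathcal{N}}_\varepsilon)} \la \Omega_\star + \| \partial_\tau^q \partial_\rho^p \phi_{2s-1} \|^2_{H^m(\underline{\mathcal{N}}_\varepsilon)}, \]
where the implicit constant collects the data integrals over $\scri^-_{\rho_\star}$ and $\underline{\mathcal{B}}_\varepsilon$ bounded by $\Omega_\star$ through \eqref{characteristic_data_assumption}, and the final bulk term is bounded by $\Omega_\star$ through the bootstrap \eqref{bootstrap_bound}. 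Dividing by the fixed, $\varepsilon$-independent coercivity constant yields \eqref{radiation_field_bound_scri_minus}. Notably, no Gr\"onwall argument is needed here: unlike the forward estimate, the only bulk source is $\phi_{2s-1}$, which is supplied externally by the bootstrap rather than being the quantity under estimation.

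The step I expect to be most delicate is the sign bookkeeping, and in particular the null-flux computation on $\underline{\mathcal{B}}_\varepsilon$ — verifying that after contraction with the conormal it reduces to a pure $|D\phi_{2s}|^2$ term of the correct sign, so that it is absorbable as characteristic data, together with checking that discarding the $\mathcal{S}_{-1+\varepsilon}$ flux is legitimate (i.e. that $2-\varepsilon>0$ and the $\varepsilon|D\phi_{2s}|^2$ piece is genuinely non-negative and hence harmless). Conceptually, the subtle reversal relative to \Cref{thm:forward_estimates} is the crux: there the degeneracy sits at $\scri^+$ and one controls all components from a Cauchy slice, whereas here the degeneracy at $\scri^-$ is exactly what \emph{removes} the obstruction, allowing $\phi_{2s}$ to be estimated provided one commits sufficiently many derivatives in $\tau$.
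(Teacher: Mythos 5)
Your proposal is correct and follows essentially the same route as the paper: the same commuted energy identity \eqref{differentialidentity} for $k=2s-1$, the same boundary-flux analysis (the $(1+\tau)$-flux vanishing on $\scri^-$, the $\rho$-flux vanishing on $\mathcal{I}_\varepsilon$, and the $\underline{\mathcal{B}}_\varepsilon$ flux collapsing to a pure $|D\phi_{2s}|^2$ term controlled by the data), and the same sign condition $m+p<s+q$ together with the bootstrap on $\phi_{2s-1}$. The only difference is cosmetic: you keep the coercive bulk term $2(s+q+q'-p-p')\int|D\phi_{2s}|^2$ on the left and divide by its uniform lower bound, whereas the paper moves it to the right and recasts the resulting inequality as $\varepsilon f'(\varepsilon)\leq C\big(\Omega_\star + C^{(2)}f(\varepsilon)\big)$ with $C^{(2)}<0$ before integrating --- both arguments rest on the identical sign condition and produce the identical constant $\Omega_\star/\big(2(s+q-p-m)\big)$.
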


\begin{remark}
The bootstrap bound \eqref{bootstrap_bound} is proven in \Cref{prop:bootstrap_bound} assuming only the bound \eqref{characteristic_data_assumption}.
\end{remark}

\begin{proof} As before, we write
\[ D \equiv D^{q, p, \alpha} \equiv \partial_\tau^q \partial_\rho^p \bmZ^\alpha, \]
and commute $D$ into the equations \eqref{AEqnAppendix} and \eqref{BEqnAppendix} to arrive at the identity
\begin{align}
\tag{\ref{differentialidentity}}
\begin{split}
    0 &= \partial_\tau \left( (1+ \tau) | D \phi_{k+1} |^2 + (1-\tau) |D \phi_k |^2 \right) + \partial_\rho \left( - \rho | D \phi_{k+1} |^2 + \rho |D \phi_k |^2 \right) \\
    &\phantom{=} - \bmZ^\alpha \bmX_+ ( \partial_\tau^q \partial_\rho^p \phi_k ) \bmZ^\alpha ( \partial_\tau^q \partial_\rho^p \bar{\phi}_{k+1} ) - \bmZ^\alpha (\partial_\tau^q \partial_\rho^p \phi_k ) \bmZ^\alpha \bmX_+ (\partial_\tau^q \partial_\rho^p \bar{\phi}_{k+1} ) \\
    &\phantom{=} - \bmZ^\alpha \bmX_- ( \partial_\tau^q \partial_\rho^p \bar{\phi}_k ) \bmZ^\alpha (\partial_\tau^q \partial_\rho^p \phi_{k+1} ) - \bmZ^\alpha (\partial_\tau^q \partial_\rho^p \bar{\phi}_k ) \bmZ^\alpha \bmX_- (\partial_\tau^q \partial_\rho^p \phi_{k+1} ) \\
    &\phantom{=} + 2(k+1-s+q-p) | D \phi_{k+1} |^2 + 2(k-s-q+p) |D\phi_k|^2,
\end{split}
\end{align}
where $0 \leq k \leq 2s-1$. Integrating \eqref{differentialidentity} over the region $\underline{\mathcal{N}}_\varepsilon$  against the 5-form $\mathrm{dv} \equiv \d \tau \wedge \d \rho \wedge \d \mu$, we have
\begin{align*}
    0 & = \int_{\underline{\mathcal{N}}_\varepsilon} \left( \begin{array}{c} \partial_\tau \\ \partial_\rho \end{array} \right) \cdot \left( \begin{array}{c} (1+\tau) |D \phi_{k+1}|^2 + (1-\tau) |D \phi_k|^2 \\ -\rho | D \phi_{k+1}|^2 + \rho | D \phi_k |^2 \end{array} \right) \mathrm{dv} \\
    &\phantom{=} + 2 (k-s-q+p) \int_{\underline{\mathcal{N}}_\varepsilon} | D \phi_k |^2 \, \mathrm{dv} + 2(k+1-s+q-p) \int_{\underline{\mathcal{N}}_\varepsilon} | D \phi_{k+1}|^2 \, \mathrm{dv} \\
    &\phantom{=} - \int_{\underline{\mathcal{N}}_\varepsilon} \Big( \bmZ^\alpha \bmX_+ ( \partial_\tau^q \partial_\rho^p \phi_k ) \bmZ^\alpha ( \partial_\tau^q \partial_\rho^p \bar{\phi}_{k+1} ) + \bmZ^\alpha (\partial_\tau^q \partial_\rho^p \phi_k ) \bmZ^\alpha \bmX_+ (\partial_\tau^q \partial_\rho^p \bar{\phi}_{k+1} ) \\
    &\phantom{=} \underbrace{\hspace{2em} + \bmZ^\alpha \bmX_- ( \partial_\tau^q \partial_\rho^p \bar{\phi}_k ) \bmZ^\alpha (\partial_\tau^q \partial_\rho^p \phi_{k+1} ) + \bmZ^\alpha (\partial_\tau^q \partial_\rho^p \bar{\phi}_k ) \bmZ^\alpha \bmX_- (\partial_\tau^q \partial_\rho^p \phi_{k+1} ) \Big) \, \mathrm{dv}}_{\text{angular}(\alpha)}.
\end{align*}

\noindent Using \Cref{Lemma:IntegrationSU2C}, as in \Cref{Section:ForwardEstimates} we will have $\sum_{|\alpha| \leq m'} \text{angular}(\alpha) = 0$ for any $m' \in \mathbb{N}$. We will therefore not write the angular terms out in detail in the following computations. Using the Euclidean divergence theorem, we then find
\begin{align}
\label{scri_minus_pre_estimate_identity}
\begin{split}
    0 & =\int_{\mathcal{S}_{-1+\varepsilon}} \varepsilon | D \phi_{k+1} |^2 + (2-\varepsilon) | D \phi_k|^2 \, \d \rho \wedge \d \mu - \int_{\scri^-_{\rho_*}} 2 | D \phi_k |^2 \, \d \rho \wedge \d \mu  \\
    &\phantom{=} - \int_{\mathcal{I}_\varepsilon} \rho \left( - | D \phi_{k+1} |^2 + |D \phi_k |^2  \right) \d \tau \wedge \d \mu \\
    &\phantom{=} + \int_{\underline{\mathcal{B}}_\varepsilon} \left( \begin{array}{c} (1+\tau) |D \phi_{k+1}|^2 + (1-\tau) |D \phi_k|^2 \\ -\rho | D \phi_{k+1}|^2 + \rho | D \phi_k |^2 \end{array} \right) \cdot \nu \left( \begin{array}{c} -\rho \\ 1-\tau \end{array} \right) \d \underline{\mathcal{B}} + \text{angular}(\alpha) \\
    & \phantom{=} + 2(k-s-q+p) \int_{\underline{\mathcal{N}}_\varepsilon} | D \phi_k |^2 \, \mathrm{dv} + 2(k+1-s+q-p) \int_{\underline{\mathcal{N}}_\varepsilon} | D \phi_{k+1} |^2 \, \mathrm{dv},
\end{split}
\end{align}
where $\nu \equiv (\rho^2 + (1-\tau)^2)^{-1/2}$ is a normalisation factor for the outward normal to $\underline{\mathcal{B}}_\varepsilon$, $\d \underline{\mathcal{B}}$ is the induced measure on $\underline{\mathcal{B}}_\varepsilon$, and we note that the integral over $\mathcal{I}_\varepsilon$ vanishes due to the factor of $\rho$ in the integrand. In a slight deviation from the computations of \Cref{Section:ForwardEstimates}, here we perform the relabeling
\[ q \longrightarrow q + q', \qquad p \longrightarrow p + p', \]
so that $D \longrightarrow D' \partial_\tau^q \partial_\rho^p$, where $D' = \partial_\tau^{q'} \partial_\rho^{p'} \bmZ^\alpha$, and sum both sides of the above equality over $\mho$. The integrals over angular terms vanish, and noting that the integral of $(2-\varepsilon)|D'(\partial_\tau^q \partial_\rho^p \phi_k )|^2$ over $\mathcal{S}_{-1 + \varepsilon}$ is non-negative, we have the estimate
\begin{align*}
    \varepsilon \sum_\mho \int_{\mathcal{S}_{-1+\varepsilon}} | D'(\partial_\tau^q \partial_\rho^p \phi_{k+1})|^2 \, \d \rho \wedge \d \mu & \leq 2 \sum_\mho \int_{\scri^-_{\rho_*}} | D'(\partial_\tau^q \partial_\rho^p \phi_k) |^2 \, \d \rho \wedge \d \mu \\
    &\phantom{=}+ 2 \sum_\mho \int_{\underline{\mathcal{B}}_\varepsilon} \nu \rho | D'(\partial_\tau^q \partial_\rho^p \phi_k)|^2 \, \d \underline{\mathcal{B}} \\
    &\phantom{=} + 2 \sum_\mho (q + q' - p' - p + s - k) \int_{\underline{\mathcal{N}}_\varepsilon} | D'(\partial_\tau^q \partial^p_\rho \phi_k)|^2 \, \mathrm{dv} \\
    &\phantom{=} + 2 \sum_\mho (- q - q' + p' + p + s - k - 1 ) \int_{\underline{\mathcal{N}}_\varepsilon} | D'(\partial_\tau^q \partial^p_\rho \phi_{k+1})|^2 \, \mathrm{dv} \\
    & \leq 2 \sum_\mho \int_{\scri^-_{\rho_*}} | D'(\partial_\tau^q \partial_\rho^p \phi_k) |^2 \, \d \rho \wedge \d \mu \\
    &\phantom{=}+ 2 \sum_\mho \int_{\underline{\mathcal{B}}_\varepsilon} \nu \rho | D'(\partial_\tau^q \partial_\rho^p \phi_k)|^2 \, \d \underline{\mathcal{B}} \\
    &\phantom{\leq} + C^{(1)}_{m,p,s,k} \| \partial_\tau^q \partial_\rho^p \phi_k \|^2_{H^m(\underline{\mathcal{N}}_\varepsilon)} + C^{(2)}_{m,p,s,k} \| \partial_\tau^q \partial^p_\rho \phi_{k+1} \|^2_{H^m(\underline{\mathcal{N}}_\varepsilon)},
\end{align*}
where 
\[
C^{(2)}_{m,p,s,k} \equiv 2 (m-q+p+s-1-k)
\]
and 
\[
\| \psi \|^2_{H^m(\underline{\mathcal{N}}_\varepsilon)} \equiv  \sum_\mho \int_{\mathcal{N}_\varepsilon} |D' \psi |^2 \, \mathrm{dv}.
\]

Recall now that we assumed that the bootstrap bound \eqref{bootstrap_bound} holds---that is, we have that 
\[
    \| \partial_\tau^q \partial^p_\rho \phi_k \|^2_{H^m(\underline{\mathcal{N}}_\varepsilon)} \leq \Omega_\star
\]
for $0 \leq k \leq 2s -1$. Plugging in this bootstrap bound and assumption \eqref{characteristic_data_assumption} into the above estimate, we obtain
\[ 
\varepsilon \sum_\mho \int_{\mathcal{S}_{-1+\varepsilon}} | D'(\partial_\tau^q \partial^p_\rho \phi_{k+1} ) |^2 \, \d \rho \wedge \d \mu \leq C \left( \Omega_\star + C^{(2)}_{m, p, s, k} \| \partial_\tau^q \partial^p_\rho \phi_{k+1} \|^2_{H^m(\underline{\mathcal{N}}_\varepsilon)} \right) .
\]
The surfaces $\mathcal{S}_{-1+t}$, $t \in (0, \varepsilon)$, foliate $\underline{\mathcal{N}}_\varepsilon$, so this estimate is of the form
\[ 
\varepsilon f'_k(\varepsilon) \leq C \left( \Omega_\star + C^{(2)}_{m,p,s,k} f_k(\varepsilon) \right), 
\]
where $f_k(\varepsilon) \equiv \| \partial_\tau^q \partial^p_\rho \phi_{k+1} \|^2_{H^m(\underline{\mathcal{N}}_\varepsilon)}$. This may be rewritten as
\begin{equation} 
\label{Gronwall_radiation_field_scri_minus}
\frac{\d}{\d \varepsilon} \left( \varepsilon^{-C C^{(2)}_{m,p,s,k}} f_k(\varepsilon) \right) \leq C \Omega_\star \varepsilon^{-C C^{(2)}_{m,p,s,k} - 1},
\end{equation}
which is integrable near $\varepsilon = 0$ if $C^{(2)}_{m,p,s,k} < 0$. This is satisfied for $k=2s-1$ provided
\begin{equation}
\label{m_p_bound_scri_minus}
m + p < s + q .
\end{equation}
Assuming \eqref{m_p_bound_scri_minus} holds and integrating \eqref{Gronwall_radiation_field_scri_minus}, we arrive at
\[ 
f_{2s-1}(\varepsilon) \leq \frac{\Omega_\star}{-C^{(2)}_{m,p,s,2s-1}} \la \Omega_\star. 
\]
This is the required estimate.
\end{proof}

\subsection{The boostrap bound}

To prove the bootstrap bound \eqref{bootstrap_bound}, we return to the identity \eqref{scri_minus_pre_estimate_identity}. 

\begin{proposition} \label{prop:bootstrap_bound} Let $\varepsilon > 0$. Suppose that for $m, \, q, \, p \in \mathbb{N}$ and all $(q', p', \alpha) \in \mathbb{N} \times \mathbb{N} \times \mathbb{N}^3$ satisfying
\[ q' + p' + |\alpha| \leq m \quad \text{and} \quad m + p < s + q \]
we have the bound \eqref{characteristic_data_assumption} on characteristic data for $0 \leq k \leq 2s$. Then there exists a constant $C>0$  depending on $m$, $p$, $s$ and $\varepsilon$ such that
\[ 
\| \partial_\tau^q \partial_\rho^p \phi_k \|^2_{H^m(\underline{\mathcal{N}}_\varepsilon)} \leq C \Omega_\star 
\]
for $0 \leq k \leq 2s-1$, where $\underline{\mathcal{N}}_\varepsilon$ is as in \Cref{thm:radiation_field_estimate}. In other words, the bootstrap estimate \eqref{bootstrap_bound} holds.

\end{proposition}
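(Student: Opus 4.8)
The plan is to reuse the energy identity \eqref{scri_minus_pre_estimate_identity}, but---in contrast to the proof of \Cref{thm:radiation_field_estimate}---to exploit the \emph{non-degenerate} boundary term $(2-\varepsilon)\int_{\mathcal{S}_{-1+\varepsilon}}|D\phi_k|^2$ on the Cauchy slice, which is the natural quantity controlling the lower components $\phi_0,\dots,\phi_{2s-1}$. The essential difference from \Cref{thm:radiation_field_estimate} is that the bootstrap must be closed \emph{without} any a priori control on the radiation field $\phi_{2s}$, since the estimate for $\phi_{2s}$ is precisely what consumes the bootstrap. Consequently every appearance of $\phi_{2s}$ in the identity must be removed by a favourable sign rather than absorbed into the right-hand side.

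First I would relabel $D\to D'\partial_\tau^q\partial_\rho^p$ and sum \eqref{scri_minus_pre_estimate_identity} over $\mho$, so that the angular terms drop out by \Cref{Lemma:IntegrationSU2C} and the $\mathcal{I}_\varepsilon$-integral vanishes through its explicit factor of $\rho$. The integral over $\scri^-_{\rho_\star}$ and the flux through $\underline{\mathcal{B}}_\varepsilon$ are pure characteristic data---the latter involving only $\phi_{k+1}$, which is freely prescribed on $\underline{\mathcal{B}}_\varepsilon$ by \Cref{Lemma:FreData}---and are both bounded by $C\Omega_\star$ using \eqref{characteristic_data_assumption}. Writing $E_k(\varepsilon)\equiv\|\partial_\tau^q\partial_\rho^p\phi_k\|^2_{H^m(\underline{\mathcal{N}}_\varepsilon)}$ and using that the slices $\{\mathcal{S}_{-1+t}\}_{0<t<\varepsilon}$ foliate $\underline{\mathcal{N}}_\varepsilon$, so that the sum over $\mho$ of the top-boundary integrals of $|D'(\partial_\tau^q\partial_\rho^p\phi_k)|^2$ equals $E_k'(\varepsilon)$, the summed identity rearranges into
\[
(2-\varepsilon)E_k'(\varepsilon)+\varepsilon E_{k+1}'(\varepsilon)+a_k E_k(\varepsilon)+b_{k+1}E_{k+1}(\varepsilon)\le C\Omega_\star,
\qquad 0\le k\le 2s-1,
\]
where $a_k,b_{k+1}$ are the bulk coefficients from \eqref{scri_minus_pre_estimate_identity}, bounded uniformly over $\mho$, and where $E_{k+1}'(\varepsilon)\ge0$ since $E_{k+1}$ is a non-decreasing function of $\varepsilon$.

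I would then close the estimate by downward induction on $k$. The base case $k=2s-1$ is where \eqref{m_p_bound_scri_minus} enters: the two terms carrying $\phi_{2s}$ are $\varepsilon E_{2s}'(\varepsilon)\ge0$ and $b_{2s}E_{2s}(\varepsilon)$, and since $b_{2s}$ arises from the coefficient $2(s+q+q'-p-p')$, the hypothesis $m+p<s+q$ forces $b_{2s}>0$ term-by-term over $\mho$. Both are therefore non-negative and may simply be discarded from the left-hand side---this is the precise point at which circularity with \Cref{thm:radiation_field_estimate} is avoided. What remains, $(2-\varepsilon)E_{2s-1}'(\varepsilon)+a_{2s-1}E_{2s-1}(\varepsilon)\le C\Omega_\star$, is a closed scalar differential inequality; since $E_{2s-1}(0)=0$ ($\underline{\mathcal{N}}_0$ being a null set), Gr\"onwall's inequality yields $E_{2s-1}(\varepsilon)\la\Omega_\star$, the sign of $a_{2s-1}$ being immaterial as Gr\"onwall tolerates either. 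For the inductive step $k<2s-1$ the term $\varepsilon E_{k+1}'(\varepsilon)\ge0$ is again discarded, while $b_{k+1}E_{k+1}(\varepsilon)\la\Omega_\star$ because $\phi_{k+1}$ was controlled at the previous stage; the same Gr\"onwall argument then gives $E_k(\varepsilon)\la\Omega_\star$. Iterating down to $k=0$ establishes \eqref{bootstrap_bound} for all $0\le k\le 2s-1$.

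The main obstacle is exactly this base case: because $\phi_{2s}$ is unavailable, its contributions cannot be moved to the right-hand side and must instead be killed by sign, which is possible only under the derivative condition $m+p<s+q$. This is the same device as in \Cref{thm:radiation_field_estimate}---forcing a numerical coefficient to a definite sign by taking sufficiently many $\tau$-derivatives---but applied now to the bulk coefficient of the radiation field rather than of the component being estimated. A minor point is that $a_k$ need not have a definite sign; this costs nothing in the Gr\"onwall step, but it does make the final constant depend on $m$, $p$, $s$ and, through the length of the $\varepsilon$-interval, on $\varepsilon$, exactly as in the statement.
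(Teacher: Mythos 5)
Your proposal is correct and follows essentially the same route as the paper: the same energy identity \eqref{scri_minus_pre_estimate_identity} summed over $\mho$, the same observation that $m+p<s+q$ forces the bulk coefficient of $\phi_{k+1}$ to a favourable sign at $k=2s-1$ (so that the radiation field never needs to be estimated), and the same downward induction in $k$ with a Gr\"onwall step on $g_k = E_k'$ at each stage. The only additions are expository (the explicit remark that $E_k(0)=0$ and the discussion of non-circularity), which the paper leaves implicit.
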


\begin{proof} Performing the shifts $q \to q + q'$ and $p \to p + p'$ as in \Cref{thm:radiation_field_estimate} and summing the identity \eqref{scri_minus_pre_estimate_identity} over $\mho$, we deduce (this time dropping the manifestly positive $\varepsilon | D'( \partial_\tau^q \partial^p_\rho \phi_{k+1}) |^2$ term) that 
\begin{equation}
    \label{scri_minus_bootstrap_estimate_pre_gronwall}
    \sum_\mho \int_{\mathcal{S}_{-1+\varepsilon}} |D'(\partial_\tau^q \partial^p_\rho \phi_k) |^2 \, \d \rho \wedge \d \mu \la \Omega_* + C^{(1)}_{m,p,s,k} \| \partial_\tau^q \partial^p_\rho \phi_k \|^2_{H^m(\mathcal{N}_\varepsilon)} + C^{(2)}_{m,p,s,k} \| \partial_\tau^q \partial^p_\rho \phi_{k+1} \|^2_{H^m(\underline{\mathcal{N}}_\varepsilon)},
\end{equation}
where as before
\[ 
C^{(2)}_{m,p,s,k} = 2 (m - q + p + s - 1 - k) 
\]
and $C^{(1)}_{m,p,s,k} = 2 (m + q - p + s - k)$.

We begin with $k = 2s-1$. Then $C^{(2)}_{m,p,s,2s-1} < 0$ is guaranteed by the assumption that $m + p < s + q$, and the estimate \eqref{scri_minus_bootstrap_estimate_pre_gronwall} reads
\[ 
g_{2s-1}(\varepsilon) \la \Omega_\star + C^{(1)}_{m,p,s,2s-1} \int_0^\varepsilon g_{2s-1}(t) \d t , 
\]
where 
\[
g_k(t) \equiv \sum_\mho \int_{\mathcal{S}_{-1+t}} | D'(\partial_\tau^q \partial^p_\rho \phi_{k}) |^2 \, \d \rho \wedge \d \mu.
\]
Using Gr\"onwall's lemma, we find
\[ \| \partial_\tau^q \partial^p_\rho \phi_{2s-1} \|^2_{H^m(\underline{\mathcal{N}}_\varepsilon)} = \int_0^\varepsilon g_{2s-1}(t) \d t \la \Omega_* e^{C_{m,p,s} \varepsilon} \la_{m,p,s,\varepsilon} \Omega_\star. \]
We can now return to \eqref{scri_minus_bootstrap_estimate_pre_gronwall} for $k=2s-2$. We have, using the above estimate for $\phi_{2s-1}$, 
\[ g_{2s-2}(\varepsilon) \la \Omega_\star + C^{(1)}_{m,p,s,2s-2} \int_0^\varepsilon g_{2s-2}(t) \d t, \]
and so deduce that
\[ \| \partial_\tau^q \partial_\rho^p \phi_{2s-2} \|^2_{H^m(\underline{\mathcal{N}}_\varepsilon)} \la \Omega_\star. \]
Proceeding inductively for $0 \leq k \leq 2s-3$, we conclude that there exists a constant $C = C(m,p,s,\varepsilon)>0$ such that
\[ \| \partial_\tau^q \partial^p_\rho \phi_k \|^2_{H^m(\underline{\mathcal{N}}_\varepsilon)} \leq C \Omega_\star \]
for $0 \leq k \leq 2s-1$.
\end{proof}

\subsection{Asymptotic expansions near $\scri^-$}
\label{sec:expansions_near_past_null_infinity}

The estimates obtained in \Cref{Section:BackwardEstimates} may now be used to control the solutions to the spin-$s$ field equations in a narrow causal diamond extending along $\scri^-$ and containing a portion of $\mathcal{I}$. The strategy is to decompose the solution into a formal expansion around $\scri^-$, the regularity of the terms of which will be known explicitly, and a remainder whose regularity is controlled by the estimates on $\partial_\tau^q  \phi_k$. 

It is clear from the condition $m+p < s + q$ in \Cref{thm:radiation_field_estimate} that requiring $\rho$-derivatives near $\scri^-$ hinders the regularity sought on $\mathcal{S}_{-1+\varepsilon}$. Therefore setting $p=0$ (and writing $m \to m + 3$ for convenience) in \Cref{thm:radiation_field_estimate}, we have $\forall \, 0 \leq k \leq 2s$
\begin{equation}
\partial_\tau^q \phi_k\in H^{m+3}(\underline{\mathcal{N}}_\varepsilon) \quad \text{if} \quad m+3<s+q.
\label{EstimateLowerDomain}
\end{equation}
Sobolev embedding therefore gives
\[
\partial_\tau^q \phi_k \in C^{m,\alpha}(\underline{\mathcal{N}}_\varepsilon) \quad \text{for} \quad 0< \alpha\leq \frac{1}{2}, \; \quad m+3<s+q.
\]
Integrating this $q$ times with respect to $\tau$, one finds 
\begin{equation}
\phi_k = \sum_{q'=0}^{q-1} \frac{1}{q'!} (\partial_\tau^{q'}\phi_k)|_{\mathscr{I}^-}(\tau+1)^{q'}+ I^q(\partial_\tau^q\phi_k),
\label{ExpansionLowerDomain}
\end{equation}
where $I^q(\partial_\tau^q \phi_k) \in C^{m,\alpha}(\underline{\mathcal{N}}_\varepsilon)$ and $I$ denotes the operator
\[
f\mapsto I(f)= \int_{-1}^\tau f(\tau',\rho,t^\bmA{}_\bmB) \, \mathrm{d}\tau'.
\]

\begin{remark}
\label{Remark:ComputationExpansionsScriMinus}
Each term (except for the remainder) in the expansion \eqref{ExpansionLowerDomain} can be computed explicitly from the characteristic initial data $\phi_k|_{\scri^-}$ on $\scri^-$: see \Cref{Appendix:ExpansionsNullInfinity}. In particular, the regularity of these terms is known explicitly. 
\end{remark}


\subsection{Last slice argument}

As in the case of the upper domain $\mathcal{N}_1$, the existence of solutions in the lower domain $\underline{\mathcal{N}}_{\varepsilon}$ can be shown by a last slice argument. We provide a sketch of the argument below.

\begin{figure}[h]
\begin{center}
\includegraphics[scale=0.55]{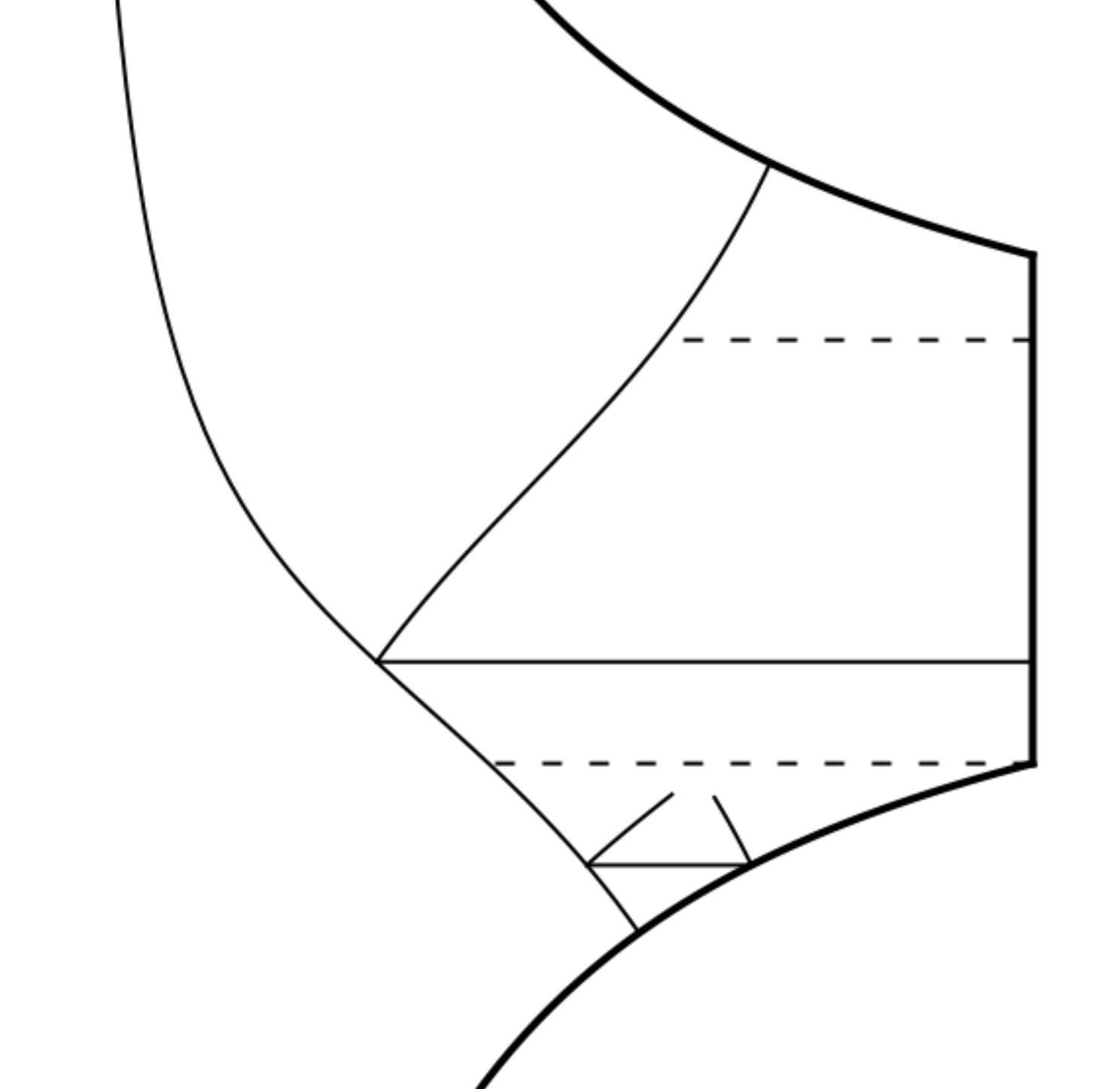}
\put(-70,200){$\mathscr{I}^+$}
\put(-70,40){$\mathscr{I}^-$}
\put(-10,120){$\mathcal{I}$}
\put(-10,175){$\mathcal{I}^+$}
\put(-10,65){$\mathcal{I}^-$}
\put(-130,50){$\underline{\mathcal{B}}_\varepsilon$}
\put(-120,150){$\mathcal{B}_1$}
\put(-70,95){$\hat{\mathcal{S}}_{-1+\varepsilon}$}
\put(-70,162){$\hat{\mathcal{S}}_{\hat{\tau}_\ast}$}
\put(-70,71){$\hat{\mathcal{S}}_{-1}$}
\put(-105,39){\small{$(a)$}}
\put(-98,52){\small{$(b)$}}
\put(-80,56){\small{$(b')$}}
\put(-118,58){\small{$(b'')$}}
\put(-50,162){\small{$(c)$}}
\put(-70,120){$\mathcal{N}_1$}
\put(-90,75){$\underline{\mathcal{N}}_\varepsilon$}
\end{center}
\caption{Schematic depiction of the region $\mathscr{D}=\underline{\mathcal{N}}_\varepsilon\cup \mathcal{N}_1$, now in the $(\hat{\tau}, \rho)$ coordinates. The last slice argument from $\scri^-$ is best carried out in a representation of spatial infinity where past and future null infinities are not horizontal. In region $(a)$, Rendall's theorem ensures the existence of a solution in a neighbourhood of the initial cut $\mathscr{I}^-\cap \underline{\mathcal{B}}_\varepsilon$. This neighbourhood contains a spacelike hypersurface of constant $\hat{\tau}$. Starting from this hypersurface, in region $(b)$ we can use the standard local existence result for symmetric hyperbolic systems to extend the solution. As this extension of the solution possesses Cauchy horizon in $\underline{\mathcal{N}}_\varepsilon$, we solve supplementary characteristic initial value problems in regions $(b')$ and $(b'')$ starting from data on $\scri^-$ and $\underline{\mathcal{B}}_\varepsilon$, and the induced data on the Cauchy horizon. The estimates in $\underline{\mathcal{N}}_\varepsilon$ then ensure that the solution can be extended up to $\hat{\mathcal{S}}_{-1+\varepsilon}$. Once we are in the upper domain $\mathcal{N}_1$, we proceed in a similar fashion. In region $(c)$ the assumption of the existence of a \emph{last slice} is in contradiction with the the estimates in terms of the data on $\hat{\mathcal{S}}_{-1+\varepsilon}$.}
\label{Fig:LastSliceArgument}
\end{figure}

As for the resolution of the asymptotic characteristic initial value problem in \Cref{Section:CharacteristicProblem}, here it is more convenient to consider the massless spin-$s$ equations in the more general gauge discussed in \Cref{Appendix:MoreGeneralGauge}---rather than in the horizontal representation of \Cref{Section:CylinderMinkowski}---as in the latter the equations are singular on the whole of $\mathscr{I}^-$, not only at the critical set $\mathcal{I}^-$. In the more general gauge we have $x^0 = \hat{\tau} \kappa$, where $\kappa = \rho \mu$ and $\mu\not\equiv 1$, and the hypersurfaces of constant $\hat{\tau}$ 
give rise to a foliation of the lower domain $\underline{\mathcal{N}}_{\varepsilon}$, and indeed the region to the past of $\underline{\mathcal{B}}_\varepsilon$. Since $\scri^-$ is now given by $\hat{\tau} = - \mu^{-1}$ and $\mu(0) = 1$, the hypersurfaces $\hat{\mathcal{S}}_{\hat{\tau}}$ for $\hat{\tau} < -1$ terminate at $\scri^-$, whereas the ones for $ -1 < \hat{\tau} < 1$ terminate at $\mathcal{I}$. In particular, the hypersurface $\hat{\mathcal{S}}_{-1}$ terminates at $\mathcal{I}^-$---see \Cref{Fig:LastSliceArgument}. Moreover, there exists a hypersurface $\hat{\mathcal{S}}_{\tau_{\rho_\star}}$ which lies entirely in the past of $\underline{\mathcal{B}}_\varepsilon$, and which intersects $\scri^-$ at $\underline{\mathcal{B}}_\varepsilon\cap\mathscr{I}^-_{\rho_\star}$. We concentrate on the region $\underline{\mathcal{N}}_\varepsilon$, i.e. in what follows we refer to the parts of hypersurfaces $\hat{\mathcal{S}}$ in the future of $\underline{\mathcal{B}}_\varepsilon$.

As we are looking for solutions of the form \eqref{ExpansionLowerDomain} and the asymptotic expansion is formally known, we only need to ensure the existence of the remainder. We therefore consider an evolution system for the fields $\partial_\tau\phi_k^q$ for $m+3<s+q$. This system can be obtained by a repeated application of the derivative $\partial_\tau$ to the equations \cref{StandardEvolutionFirstGeneral,StandardEvolutionFirstGeneral}. The resulting system has the same structure as the original one, so it is a symmetric hyperbolic system for $\partial_\tau\phi_k^q$.

Now, for the asymptotic characteristic initial value problem for the fields $\partial_\tau\phi_k^q$, Rendall's theorem \cite{Ren90} ensures the existence of a solution in a neighbourhood $\mathcal{V}\subset \underline{\mathcal{N}}_\varepsilon\subset J^+(\underline{\mathcal{B}}_\varepsilon\cap\mathscr{I}^-)$ of the initial cut, region $(a)$ in \Cref{Fig:LastSliceArgument}. Next, there exists a value $\hat{\tau}_\delta < -1$ of the parameter $\hat{\tau}$ such that all leaves $\hat{\mathcal{S}}_{\hat{\tau}}$ with $\hat{\tau}_{\rho_\star} < \hat{\tau} < \hat{\tau}_\delta $ are contained entirely in the region $\mathcal{V}$. On one of these hypersurfaces, say $\hat{\mathcal{S}}_{\hat{\tau}_\circ}$, one can formulate a standard initial value problem for the symmetric hyperbolic system \eqref{MasslessSpinSSHS} implied by the spin-$s$ field equations. With data posed on $\hat{\mathcal{S}}_{\hat{\tau}_\circ}$, the standard theory of symmetric hyperbolic systems ensures the existence of a solution in $D^+({\hat{\mathcal{S}}_{\hat{\tau}_\circ}})$, region $(b)$ in \Cref{Fig:LastSliceArgument}. The solution in region $(b)$ possesses a Cauchy horizon $\mathcal{H}^+(\hat{S}_{\hat{\tau}_\circ}) = \partial D^+(\hat{\mathcal{S}}_{\hat{\tau}_\circ})$, depicted by the hypersurfaces separating the region $(b)$ from regions $(b')$ and $(b'')$. We extend the solution to $(b')$ and $(b'')$ by posing two further characteristic initial value problems: (i) one with data on $\mathscr{I}^-$ and the Cauchy horizon of $\mathcal{S}_{\tau_\circ}$, and (ii) a second one with data on $\underline{\mathcal{B}}_\varepsilon$ and the Cauchy horizon; the characteristic data on the Cauchy horizon is induced by the solution in the region $(b)$. For these characteristic initial value problems we again use Rendall's theorem to obtain a solution in a subset of the causal future of $\mathscr{I}^-\cap \mathcal{H}^+(\mathcal{S}_{\tau_\circ})$ (region $(b')$) and $\underline{\mathcal{B}}_\varepsilon\cap \mathcal{H}^+(\mathcal{S}_{\tau_\circ})$ (region $(b'')$), respectively. In this way we obtain the solution in a subdomain of $\underline{\mathcal{N}}_\varepsilon$ which contains a hypersurface $\hat{\mathcal{S}}_{\hat{\tau}}$ with $\hat{\tau}>\hat{\tau}_\circ$. This construction can then be repeated to obtain a larger extension domain for the solution.

Now assume that there exists $\hat{\tau}'_\ast\in(\hat{\tau}_{\rho_\star},-1)$ such that on the hypersurface $\hat{\mathcal{S}}_{\hat{\tau}'_\ast}$ the construction for the enlargement of the domain of existence as described in the previous paragraph fails. It is straightforward to show (as in the case of the upper existence domain) that this is then in contradiction with the estimates in \Cref{thm:radiation_field_estimate} and in \Cref{prop:bootstrap_bound}. In this way one ensures the existence of the solution up to the hypersurface $\hat{\mathcal{S}}_{-1}$.

Once one has reached the hypersurface $\hat{\mathcal{S}}_{-1}$, the domain extension procedure simplifies as now one only needs one supplementary characteristic initial value problem to complete the new slab---the one on the intersection of the Cauchy horizon and $\underline{\mathcal{B}}_\varepsilon$. Again, assuming the existence of a hypersurface $\hat{\mathcal{S}}_{\hat{\tau}_\ast}$  with $\hat{\tau}_\ast>-1$ beyond which it is no longer possible to formulate an initial value problem and extend the solution (region $(c)$ in \Cref{Fig:LastSliceArgument}) leads to a contradiction with the estimates in \Cref{thm:radiation_field_estimate} and \Cref{prop:bootstrap_bound}.

\section{Controlling the solutions from $\mathscr{I}^-$ to $\mathscr{I}^+$}
\label{Section:FromPastToFuture}
We have obtained two sets of estimates: 

\begin{itemize}
\item[(i)] estimates which control the solution on a spacelike hypersurface $\mathcal{S}_{-1+\varepsilon}$ in terms of initial data on $\mathscr{I}^-$ (\Cref{thm:radiation_field_estimate,prop:bootstrap_bound}), and
\item[(ii)] ones which control the solution up to $\mathscr{I}^+$ in terms of initial data on a spacelike hypersurface, say $\mathcal{S}_{-1+\varepsilon}$ (\Cref{thm:forward_estimates}). 
\end{itemize}

\noindent Each of the estimates (i) and (ii) allow the construction of a particular type of asymptotic expansion with its own requirements on the freely specifiable data and associated implications for the regularity of the solutions. In this section we show how these two sets of estimates can be stitched together to control the solutions to the massless spin-$s$ field equation in a neighbourhood of spatial infinity which contains parts of both $\mathscr{I}^-$ and $\mathscr{I}^+$ (\Cref{Fig:CylinderIntro}).

\begin{remark}
We refer to the domain in which estimates (i) hold as the lower domain, and the domain in which estimates (ii) hold as the upper domain. The parameters $m,\;p,\;q$ in the lower domain will be denoted by $m_-,\;p_-,\;q_-$, while those in the upper domain will be denoted by $m_+,\;p_+,\;q_+$.
\end{remark}


\subsection{From $\mathscr{I}^-$ to $\mathcal{S}_{-1+\varepsilon}$}
\Cref{thm:radiation_field_estimate} controls the regularity of the solution $\phi_k$ near $\scri^-$ in the sense that (setting $p_- = 0$, as remarked in \Cref{sec:expansions_near_past_null_infinity}) if the data on $\scri^-_{\rho_\star}\cup \underline{\mathcal{B}}_\varepsilon$ is such that
\[ 
(\partial_\tau^{q_-} \phi, \, \partial_\tau^{q_- + 1} \phi, \, \ldots, \, \partial_\tau^{q_- + m_- + 3} \phi ) \in H^{m_- + 3} \times H^{m_- +2} \times \dots \times L^2 
\]
for
\[ 
m_- + 3 < s + q_-, 
\]
then the solution is $\partial_\tau^{q_-} \phi \in H^{m_- + 3}(\underline{\mathcal{N}}_\varepsilon)$, and in particular for $0 < \alpha \leq \frac{1}{2}$
\[ 
\phi_k = \sum_{q' =0 }^{q_- - 1} \frac{1}{q'!}(\partial_\tau^{q'} \phi_k)|_{\scri^-} (\tau+1)^{q'} + C^{m_-, \alpha}(\underline{\mathcal{N}}_\varepsilon). 
\]

\subsection{From $\mathcal{S}_{-1+\varepsilon}$ to $\mathscr{I}^+$} 
On the other hand, \Cref{thm:forward_estimates} controls the regularity of the solution near $\scri^+$, in the sense that if the data on the Cauchy surface $\mathcal{S}_{-1+\varepsilon}$ is such that
\[ (\partial_\rho^{p_+} \phi, \, \partial_\rho^{p_+ + 1} \phi, \, \ldots, \, \partial_\rho^{p_+ + m_+ + 3} \phi) \in H^{m_+ + 3} \times H^{m_+ + 2} \times \dots \times L^2 \]
for
\[ p_+ \geq m_+ + s + 4, \]
then the solution is $\partial_\rho^{p_+} \phi \in H^{m_+ + 3}(\mathcal{N}_1)$, and in particular
\[ 
\phi_k = \sum_{p' = 0}^{p_+-1} \frac{1}{p'!} \phi_k^{(p')} \rho^{p'} + C^{m_+, \alpha}(\mathcal{N}_1). 
\]

\subsection{Prescribing the regularity at $\mathscr{I}^-$} 
For a desired regularity $m_+$ of the remainder at $\scri^+$, we may therefore put $p_+ = m_+ + \ceil{s} + 4$ for the minimum number of $\rho$-derivatives on $\mathcal{S}_{-1+\varepsilon}$ required by the estimate of \Cref{thm:forward_estimates}. In order to match the estimates in the two domains, in the lower domain we therefore require the remainder to be at least $C^{p_+}$, i.e. $m_- = p_+$. Combining the above bounds then gives
\begin{equation}
\label{bound_on_derivatives_past_null_infinity_by_remainder_regularity_in_future}
q_- \geq m_+ + 8. 
\end{equation}
Note that while in \eqref{bound_on_derivatives_past_null_infinity_by_remainder_regularity_in_future} the value $s$ of the spin cancels out, it still plays a role in length of the expansions of $\phi_k$ via $m_- = p_+ = m_+ + \ceil{s} + 4$. As $s$ increases, so does the length of the expansion before the remainder can be guaranteed to be $C^{m_+}$ near $\scri^+$.

\subsection{Main result}
We summarise the analysis of this article in the following theorem. 

\begin{theorem}
\label{MainTheorem}
Let real numbers $\rho_\star>0$, $\varepsilon>0$ and positive integers $m,\;q$ such that $q\geq m +8$ be given, and suppose that we have asymptotic characteristic initial data for the massless spin-$s$ equations on $\underline{\mathcal{B}}_\varepsilon\cup \mathscr{I}^-_{\rho_\star}$ such that
\[ 
(\partial_\tau^{q} \phi, \, \partial_\tau^{q + 1} \phi, \, \ldots, \, \partial_\tau^{q + m + 3} \phi ) \in H^{m + 3} \times H^{m +2} \times \dots \times L^2.
\]
Then in the domain $\mathscr{D} = \underline{\mathcal{N}}_\varepsilon \cup \mathcal{N}_1$ this data gives rise to a unique solution to the massless spin-$s$ equations \eqref{MasslessSpinS} which near $\scri^+$ admits the expansion
\[
\phi_k = \sum_{p' = 0}^{p-1} \frac{1}{p'!} \phi_k^{(p')} \rho^{p'} + C^{m, \alpha}(\mathcal{N}_1),
\]
where $p= m + \ceil{s} + 4$ and $0 < \alpha \leq \frac{1}{2}$. The coefficients $\phi_k^{(p')}$, $0\leq p'<p-1$, can be computed explicitly in terms of Jacobi polynomials in $\tau$ and spin-weighted spherical harmonics, and their regularity depends on the multipolar structure of the characteristic data at $\mathcal{I}^-$. In particular, the coefficients $\phi_k^{(p')}$ are generically polyhomogeneous at $\scri^+$.
\end{theorem}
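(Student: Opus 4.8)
The plan is to chain together the two families of estimates already established---the backward estimates of \Cref{thm:radiation_field_estimate,prop:bootstrap_bound} on the lower domain $\underline{\mathcal{N}}_\varepsilon$, and the forward estimates of \Cref{thm:forward_estimates} on the upper domain $\mathcal{N}_1$---across their common spacelike boundary $\mathcal{S}_{-1+\varepsilon}$, with parameters fixed by the matching performed in \Cref{Section:FromPastToFuture}. Explicitly, I would take $m_+ = m$ and $p = p_+ = m + \ceil{s} + 4$ in the upper domain, and $p_- = 0$, $q_- = q$, $m_- = p_+$ in the lower domain. The hypothesis $q \geq m + 8$ is precisely what renders the two Gr\"onwall sign conditions simultaneously admissible: it yields $m_- + 3 < s + q_-$ (using $\ceil{s} - s \in \{0, \tfrac{1}{2}\}$ for $s \in \tfrac12\mathbb{N}$), which is the negativity of the constant $C^{(2)}$ appearing in the proof of \Cref{thm:radiation_field_estimate}, while $p_+ > m_+ + s$ is automatic. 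Existence and uniqueness in each slab are supplied by the last-slice arguments already described, so the substance of the proof is the propagation of regularity and its transfer across $\mathcal{S}_{-1+\varepsilon}$.

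First I would work in $\underline{\mathcal{N}}_\varepsilon$. Feeding the assumed control $(\partial_\tau^q\phi, \ldots, \partial_\tau^{q+m+3}\phi) \in H^{m+3}\times\cdots\times L^2$ on $\underline{\mathcal{B}}_\varepsilon \cup \scri^-_{\rho_\star}$ into \Cref{prop:bootstrap_bound} (which bounds $\phi_0, \ldots, \phi_{2s-1}$) and then into \Cref{thm:radiation_field_estimate} (which closes the bootstrap for the radiation field $\phi_{2s}$) yields $\partial_\tau^q\phi_k \in H^{m_-+3}(\underline{\mathcal{N}}_\varepsilon)$ for all $0 \leq k \leq 2s$. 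Sobolev embedding followed by $q$-fold integration in $\tau$ then produces the expansion \eqref{ExpansionLowerDomain} with remainder $I^q(\partial_\tau^q\phi_k) \in C^{m_-,\alpha}(\underline{\mathcal{N}}_\varepsilon)$; the non-remainder terms are explicit in the data by \Cref{Appendix:ExpansionsNullInfinity}. In particular this controls $\phi_k$ and its derivatives on the slice $\mathcal{S}_{-1+\varepsilon}$.

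Next I would transfer this control to the upper domain and apply \Cref{thm:forward_estimates}. The crucial matching step is to verify that the regularity of the lower-domain solution on $\mathcal{S}_{-1+\varepsilon}$ (governed by $\tau$-derivatives, with $m_- = p_+$) furnishes finite data for the forward estimate, whose hypothesis is posed in terms of $\rho$-derivatives $\partial_\rho^{p_+}\phi_k$. Away from $\rho = 0$ the slice $\mathcal{S}_{-1+\varepsilon}$ is bounded away from both $\scri^\pm$ and the critical sets $\mathcal{I}^\pm$, so the spin-$s$ equations \eqref{AEqnMain}, \eqref{BEqnMain} are nondegenerate there and may be used to convert between $\tau$- and $\rho$-derivatives; near $\rho = 0$ one instead supplies the missing $\rho$-derivatives from the smooth transport structure along $\mathcal{I}$, reconciling the two expansions at $\mathcal{I}^-$. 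With the hypotheses of \Cref{thm:forward_estimates} verified, I obtain $\partial_\rho^p\phi_k \in H^{m+3}(\mathcal{N}_1)$, and the F-expansion machinery of \Cref{Section:ExpansionsUpperDomain} (cf. \eqref{FExpansion}, \eqref{Taylor_expansion_in_rho_in_bulk}) then gives $\phi_k = \sum_{p'=0}^{p-1}\tfrac{1}{p'!}\phi_k^{(p')}\rho^{p'} + C^{m,\alpha}(\mathcal{N}_1)$ on $\mathcal{N}_1$. Uniqueness of the global solution follows from uniqueness in each slab together with agreement along $\mathcal{S}_{-1+\varepsilon}$.

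It remains to justify the structure and regularity of the coefficients $\phi_k^{(p')} = \partial_\rho^{p'}\phi_k|_{\mathcal{I}}$. Since $\mathcal{I}$ is a total characteristic, these obey intrinsic transport equations along $\mathcal{I}$, which by the construction of \Cref{Appendix:SolutionJets} reduce to Jacobi-type ODEs in $\tau$; their solutions are built from Jacobi polynomials (the regular part) together with Jacobi functions of the second kind, the latter carrying the logarithmic singularities at $\tau = \pm 1$. The data for these ODEs at $\mathcal{I}^-$ is fixed by matching the $\mathcal{I}$-expansion with the $\scri^-$-expansion \eqref{ExpansionLowerDomain}, so the presence of the logarithmic (polyhomogeneous) terms at $\scri^+$ is dictated by the multipolar content of the characteristic data at $\mathcal{I}^-$, establishing the final assertions. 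I expect the matching across $\mathcal{S}_{-1+\varepsilon}$ to be the main obstacle: one must ensure the conversion between $\tau$- and $\rho$-regularity holds uniformly up to the degenerate set $\mathcal{I}^-$, which is exactly where the symmetric hyperbolicity of the system is lost and where the careful derivative counts $q \geq m+8$ and $m_- = p_+ = m + \ceil{s}+4$ are forced.
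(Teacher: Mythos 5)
Your proposal follows essentially the same route as the paper: the identical parameter matching $p_-=0$, $m_-=p_+=m+\ceil{s}+4$, with $m_-+3<s+q$ forcing $q\geq m+8$, the backward estimates plus bootstrap on $\underline{\mathcal{N}}_\varepsilon$, transfer across $\mathcal{S}_{-1+\varepsilon}$ into the forward estimates, last-slice existence, and the Jacobi ODE analysis on $\mathcal{I}$ for the coefficients $\phi_k^{(p')}$. Your explicit flagging of the $\tau$-to-$\rho$ derivative conversion on $\mathcal{S}_{-1+\varepsilon}$ is the same point the paper addresses only in a remark after \Cref{thm:forward_estimates}, so the arguments coincide in substance.
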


\section{Concluding remarks}
\label{Section:Conclusions}
We have constructed estimates and asymptotic expansions for massless spin-$s$ fields on Minkowski space which control the behaviour of the solutions at future null infinity $\mathscr{I}^+$ in terms of data prescribed at past null infinity $\mathscr{I}^-$. In particular, we have established  a connection between the regularity of the characteristic initial data at past null infinity and the regularity of expansions near future null infinity, showing that the linear analogue of the logarithms found by Friedrich \cite{Fri98a} persists despite high regularity assumptions at past null infinity. Of course, the regularity of the data at $\scri^-$ in our setting is measured in Sobolev spaces with respect to the F-coordinates $\tau$ and $\rho$, and we make no claim as to the non-existence of other (perhaps comparatively large) function spaces which may eliminate the logarithmic terms. Nonetheless, Friedrich's coordinates appear natural insofar as they are constructed from conformal invariants.

It is interesting to note that for every logarithmic term near $\scri^+$, there is a corresponding logarithmic term near $\scri^-$ (see \eqref{JacobiSolution:Logarithmic}), however it generally carries a different tempering polynomial in front, cf. $(1-\tau)^{p-k+s} \log(1-\tau)$ vs. $(1+\tau)^{p+k-s} \log(1+\tau)$. In a sense, this is the statement that singular behaviour hidden deeper in the expansions near $\scri^-$ may resurface at higher orders near $\scri^+$. It is also interesting to note that, at the level of the Cauchy data, polyhomogeneity is \emph{not needed} to produce polyhomogeneity in the solutions near $\scri^\pm$.  A range of examples will be presented in a subsequent paper.

Finally, we mention that our assumptions on the asymptotic characteristic initial data have been made on the basis of ease of presentation. Our theory can, however, applied to a more general setting, e.g. for polyhomogeneous data at $\mathscr{I}^-$.

\section*{Appendices}

\appendix

\section{Geometry of $\mathrm{SU}(2)$}
\label{Appendix:SU2}
In this appendix we recall standard  results about the geometry and representation theory of the Lie group
$\mathrm{SU}(2)$.

\subsection{Basic properties}
We make use of spinorial notation to denote elements
in $\mathrm{GL}(2,\mathbb{C})$, so that $t^\bmA{}_\bmB$, $\bmA, \bmB \in \{0, 1 \}$, denotes an invertible $2\times 2$ matrix. The subgroups $\mathrm{SL}(2, \mathbb{C})$ and $\mathrm{SU}(2, \mathbb{C})$ may then be defined by
\begin{eqnarray*}
&& \mathrm{SL}(2,\mathbb{C}) =\left\{  t^\bmA{}_\bmB \in \mathrm{GL}(2,\mathbb{C}) \; | \;
    \epsilon_{\bmA\bmB} t^\bmA{}_\bmC t^\bmB{}_\bmD =
   \epsilon_{\bmC\bmD}  \right\}, \\
&& \mathrm{SU}(2,\mathbb{C}) = \left\{ t^\bmA{}_\bmB \in \mathrm{SL}(2,\mathbb{C}) \; | \;
    \tau_{\bmA\bmA'} t^\bmA{}_\bmB \bar{t}^{\bmA'}{}_{\bmB'} =    \tau_{\bmB\bmB'} \right\},
 \end{eqnarray*}
where $\tau_{\bmA \bmA'}$ in our frame is simply the $2\times 2$ identity matrix. It is classical that the group $\mathrm{SU}(2)$ is diffeomorphic to the 3-sphere $\mathbb{S}^3$, and its elements $t^\bmA{}_\bmB$ may be written in the explicit form

\[ t^\bmA{}_\bmB = \frac{1}{\sqrt{1 + |\zeta|^2}} \left( \renewcommand{\arraystretch}{1.2} \begin{array}{cc} e^{i\alpha} & i e^{-i \alpha} \zeta \\ i e^{i\alpha} \bar{\zeta} & e^{-i\alpha} \end{array} \right), \]
where $\zeta = x + iy \in \mathbb{C}$ and $\alpha \in \mathbb{R}$ coordinatize $\mathrm{SU}(2) \simeq \mathbb{S}^3$. This representation makes manifest the fact that there exist $\mathrm{U}(1)$ orbits in $\mathrm{SU}(2)$ generated by $\alpha$, with the coordinates $(x, y)$ constant along these orbits. Quotienting out the $\mathrm{U}(1)$ subgroup returns the $2$-sphere of the rescaled spacetime, $\mathrm{SU}(2) / \mathrm{U}(1) \simeq \mathbb{S}^2$, as described in \Cref{Section:NullFrameNearI0}.

\subsection{The vector fields \texorpdfstring{$\bmX_\pm$ and $\bmX$}{}}
Consider the basis
\[
\bmu_1 = \frac{1}{2}
\left(
\begin{array}{cc}
0 & i \\
i & 0
\end{array}
\right), \qquad
\bmu_2 =\frac{1}{2}
\left(
\begin{array}{cc}
0 & -1 \\
1 & 0
\end{array}
\right), \qquad
\bmu_3 = \frac{1}{2}
\left(
\begin{array}{cc}
i & 0 \\
0 & -i
\end{array}
\right),
\]
of the real Lie algebra $\mathfrak{s}\mathfrak{u}(2)$ of $\mathrm{SU}(2)$, where
$\bmu_3$ is the generator of the $\mathrm{U}(1)$ subgroup. These obey the commutation relations
\[
[\bmu_i, \, \bmu_j] =\epsilon_{ijk} \bmu_k.
\]
 Denote by $\bmZ_1$,
$\bmZ_2$ and $\bmZ_3$ the left-invariant vector fields on $\mathrm{SU}(2)$
generated by $\bmu_1$, $\bmu_2$ and $\bmu_3$, respectively, via
the exponential map. These inherit the commutation relations
\[ [ \bmZ_i, \, \bmZ_j ] = \epsilon_{ijk} \bmZ_k. \]
We then pass to the complexified Lie algebra $\mathfrak{su}(2) + i \, \mathfrak{su}(2) = \mathfrak{sl}(2; \mathbb{C})$ by setting
\[
\bmX_+ \equiv -(\bmZ_2 +\mbox{i} \bmZ_1), \qquad \bmX_-\equiv -(\bmZ_2
-\mbox{i} \bmZ_1), \quad \text{and} \quad \bmX \equiv -2 \mbox{i} \bmZ_3.
\]
The vector fields $\bmX_\pm$ and $\bmX$ then satisfy the commutation relations
\[
[\bmX, \, \bmX_+] = 2\bmX_+, \qquad [\bmX, \, \bmX_-]=-2\bmX_-, \quad \text{and} \quad [\bmX_+, \, \bmX_-]=-\bmX.
\]
The vector fields $\bmX_+$ and $\bmX_-$ are complex conjugates of each
other in the sense that
\[
\overline{\bmX_+ \phi} = \bmX_- \bar{\phi}
\]
for any sufficiently smooth function $\phi : \mathrm{SU}(2) \to \mathbb{C}$. The Casimir element of $\mathrm{SU}(2)$ is given by
\[ \bmC \equiv 4( \bmZ_1^2 + \bmZ_2^2 + \bmZ_3^2 )= 2 \{ \bmX_+, \, \bmX_- \} -  \bmX^2, \]
where $\{ \bmX_+, \, \bmX_- \} = \bmX_+ \bmX_- + \bmX_- \bmX_+$ is the anticommutator of $\bmX_+$ and $\bmX_-$. The $2 \{ \bmX_+, \, \bmX_- \} = 4( \bmZ_1^2 + \bmZ_2^2$ ) part of the Casimir corresponds to the Laplacian on the $2$-sphere $\mathbb{S}^2$ when acting on functions $f$ independent of $\alpha$, i.e. ones that satisfy $\bmX f = 0$.

\subsubsection{Coordinate expressions}
The vector fields $\bmpartial_x$, $\bmpartial_y$, $\bmpartial_\alpha$ at the
identity in $\mathrm{SU}(2)$ coincide, respectively, with the generators $\bmu_1$,
$\bmu_2$ and $\bmu_3$ of
$\mathfrak{s}\mathfrak{u}(2)$. More generally, writing 
\[
P\equiv \frac{1}{2}(1+ |\zeta|^2),
\]
one may express the vector fields $\bmZ_i$ in terms of $\bmpartial_x$, $\bmpartial_y$, $\bmpartial_\alpha$ as
\begin{eqnarray*}
&& \bmZ_1 = (P \cos 2\alpha ) \bmpartial_x + (P \sin2\alpha) \bmpartial_y +
   \frac{1}{2}\big( x\sin 2\alpha - y \cos2\alpha \big)
   \bmpartial_\alpha, \\
&& \bmZ_2 = - (P \sin 2\alpha) \bmpartial_x+ (P\cos 2\alpha) \bmpartial_y
   + \frac{1}{2}\big(y \sin 2\alpha + x \cos 2\alpha
   \big)\bmpartial_\alpha, \\
&& \bmZ_3 = \frac{1}{2}\bmpartial_\alpha.
\end{eqnarray*}
Then
\begin{eqnarray*}
    && \bmX_+ = - 2 \mathrm{i} P e^{2 \alpha\mathrm{i}} \bmpartial_\zeta - \frac{1}{2} \bar{\zeta} e^{2 \alpha\mathrm{i}} \bmpartial_\alpha, \\
    && \bmX_- = 2 \mathrm{i} P e^{- 2 \alpha\mathrm{i}} \bmpartial_{\bar{\zeta}} - \frac{1}{2} \zeta e^{- 2 \alpha\mathrm{i}} \bmpartial_\alpha, \\
    && \bmX = - \mathrm{i} \bmpartial_\alpha.
\end{eqnarray*}
When acting on functions $f$ such that $\partial_\alpha f = 0$, in these coordinates the Casimir $\bmC$ has the form
\[ 
\bmC f = 2 \{ \bmX_+, \, \bmX_- \} f  = 4 (1+ |\zeta|^2)^2 \partial_\zeta \partial_{\bar{\zeta}} f, \]
which is precisely the Laplacian on $\mathbb{S}^2$ in stereographic coordinates $\zeta = x + \mathrm{i} y$.

\subsubsection{A technical lemma}
\label{Appendix:TechnicalLemmaSU2C}
For a given multi-index $\alpha =( \alpha_1,\alpha_2,\alpha_3)$ of
non-negative integers $\alpha_1$, $\alpha_2$ and $\alpha_3$, set
\[
\bmZ^\alpha \equiv \bmZ^{\alpha_1}_1 \bmZ^{\alpha_2}_2
\bmZ^{\alpha_3}_3 
\]
with the convention that $\bmZ^\alpha =1$ if $|\alpha| = \alpha_1+\alpha_2+\alpha_3 =0$. We record the following technical lemma from \cite{Fri03b}, which will be useful to us when performing estimates in the main text.

\begin{lemma}
\label{Lemma:IntegrationSU2C}
For any smooth complex-valued functions $\phi$, $\psi$ on
$\mathrm{SU}(2)$, $k \in \{1, 2, 3 \}$ and $m \in \mathbb{N}$, one has
\[
\sum_{|\alpha|=m} \big( \bmZ^\alpha \bmZ_k \phi \bmZ^\alpha \psi +
\bmZ^\alpha\phi \bmZ^\alpha \bmZ_k \psi\big) = \sum_{|\alpha| = m} \bmZ_k \left(
  \bmZ^\alpha \phi \bmZ^\alpha\psi \right).
\]
In particular, one has that
\begin{equation}
\label{FriedrichsHigherOrderIntegrationByParts}
\sum_{|\alpha|=m} \int_{\mathrm{SU}(2)} \big( \bmZ^\alpha \bmX_\pm
\phi \bmZ^\alpha \psi + \bmZ^\alpha \phi \bmZ^\alpha \bmX_\pm \psi
\big)\d \mu =0.
\end{equation}

\end{lemma}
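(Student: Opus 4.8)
The plan is to establish the pointwise identity first and then obtain the integral statement \eqref{FriedrichsHigherOrderIntegrationByParts} as a corollary. For the pointwise identity, I would first expand the right-hand side using the Leibniz rule for the first-order operator $\bmZ_k$,
\[
\sum_{|\alpha|=m}\bmZ_k\!\left(\bmZ^\alpha\phi\,\bmZ^\alpha\psi\right)=\sum_{|\alpha|=m}\left(\bmZ_k\bmZ^\alpha\phi\,\bmZ^\alpha\psi+\bmZ^\alpha\phi\,\bmZ_k\bmZ^\alpha\psi\right),
\]
and subtract the left-hand side. Since $\bmZ_k\bmZ^\alpha-\bmZ^\alpha\bmZ_k=[\bmZ_k,\bmZ^\alpha]$, the claim reduces to the vanishing of the commutator defect
\[
\sum_{|\alpha|=m}\left([\bmZ_k,\bmZ^\alpha]\phi\,\bmZ^\alpha\psi+\bmZ^\alpha\phi\,[\bmZ_k,\bmZ^\alpha]\psi\right)=0 .
\]
As $\phi$ and $\psi$ are arbitrary, this is precisely the assertion that the diagonal element $\Sigma_m\equiv\sum_{|\alpha|=m}\bmZ^\alpha\otimes\bmZ^\alpha$ of the tensor square of $U(\mathfrak{su}(2))$ is annihilated by the diagonal adjoint action $\mathrm{ad}_{\bmZ_k}(A\otimes B)=[\bmZ_k,A]\otimes B+A\otimes[\bmZ_k,B]$.

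The crux is that this summed object is controlled by the single first-order diagonal element $\delta\equiv\sum_{i=1}^{3}\bmZ_i\otimes\bmZ_i$. Using the product rule $(A\otimes B)(C\otimes D)=AC\otimes BD$ in the tensor-square algebra, the $m$-fold product $\delta^m$ collects precisely the terms $\bmZ_{i_1}\cdots\bmZ_{i_m}\otimes\bmZ_{i_1}\cdots\bmZ_{i_m}$ summed over all index strings, which is how the degree-$m$ summation is to be organised. It therefore suffices to treat the base case: from the commutation relations $[\bmZ_k,\bmZ_i]=\epsilon_{kil}\bmZ_l$ one finds
\[
\mathrm{ad}_{\bmZ_k}(\delta)=\sum_{i,l}\epsilon_{kil}\left(\bmZ_l\otimes\bmZ_i+\bmZ_i\otimes\bmZ_l\right)=0,
\]
the sum vanishing because $\epsilon_{kil}$ is antisymmetric under $i\leftrightarrow l$ while the parenthesis is symmetric. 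Since $\mathrm{ad}_{\bmZ_k}$ is an (inner) derivation of the algebra, $\mathrm{ad}_{\bmZ_k}(\delta^m)=0$ follows at once from $\mathrm{ad}_{\bmZ_k}(\delta)=0$, which yields the required pointwise identity.

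To deduce \eqref{FriedrichsHigherOrderIntegrationByParts}, I would integrate the pointwise identity over $\mathrm{SU}(2)$ against the Haar measure $\mu$. Because $\mathrm{SU}(2)\simeq\mathbb{S}^{3}$ is a compact, hence unimodular, group with no boundary, every left-invariant field $\bmZ_k$ is divergence-free with respect to $\mu$, so $\int_{\mathrm{SU}(2)}\bmZ_k F\,\d\mu=0$ for all smooth $F$; applying this with $F=\bmZ^\alpha\phi\,\bmZ^\alpha\psi$ annihilates the right-hand side after summation. The statement for $\bmX_\pm$ then follows by linearity: since $\bmX_\pm=-(\bmZ_2\pm i\bmZ_1)$, one forms the complex combination $-(\text{$k=2$ case})\mp i\,(\text{$k=1$ case})$ of the identities just obtained.

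The step I expect to demand the most care is the organisation of the degree-$m$ summation, namely verifying that the quadratic sum over the multi-indices $|\alpha|=m$ reassembles, through the commutators $[\bmZ_k,\bmZ^\alpha]$ and the non-commutativity of the $\bmZ_i$, into the genuinely $\mathrm{ad}$-invariant power $\delta^m$. Once this bookkeeping is secured, the antisymmetry cancellation at first order and the unimodularity argument for the Haar integral are both routine.
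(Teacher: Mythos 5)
The paper itself does not prove this lemma: it is quoted from \cite{Fri03b}, and only the $m=0$ case of \eqref{FriedrichsHigherOrderIntegrationByParts} is justified (in the remark following the statement, via unimodularity of $\mathrm{SU}(2)$). Your proposal therefore stands on its own. Its architecture is sound: reducing the pointwise identity to the vanishing of $\mathrm{ad}_{\bmZ_k}$ on $\Sigma_m=\sum_{|\alpha|=m}\bmZ^\alpha\otimes\bmZ^\alpha$, killing $\mathrm{ad}_{\bmZ_k}(\delta)$ for $\delta=\sum_i\bmZ_i\otimes\bmZ_i$ by the antisymmetry of $\epsilon_{kil}$, propagating to $\delta^m$ via the derivation property of $\mathrm{ad}_{\bmZ_k}=[\bmZ_k\otimes 1+1\otimes\bmZ_k,\cdot\,]$, and then integrating using the divergence-free property of left-invariant fields together with $\bmX_\pm=-(\bmZ_2\pm i\bmZ_1)$ are all correct steps.

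The gap is precisely the step you flagged, and it is not routine bookkeeping: the identification $\Sigma_m=\delta^m$ is \emph{false} under the convention of \Cref{Appendix:TechnicalLemmaSU2C}, where $\bmZ^\alpha=\bmZ_1^{\alpha_1}\bmZ_2^{\alpha_2}\bmZ_3^{\alpha_3}$ for a triple of non-negative integers $\alpha$. That sum has $\binom{m+2}{2}$ terms, each in the fixed order $1,2,3$ and with no multinomial multiplicities, whereas $\delta^m$ has $3^m$ terms indexed by ordered strings $(i_1,\dots,i_m)$; since the $\bmZ_i$ do not commute, the two cannot be reconciled by reordering, and even their principal symbols differ (the multiplicities $\binom{m}{\alpha}$ are missing). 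Concretely, for $m=2$ and $k=3$ one computes
\[
\sum_{|\alpha|=2}\bigl([\bmZ_3,\bmZ^\alpha]\otimes\bmZ^\alpha+\bmZ^\alpha\otimes[\bmZ_3,\bmZ^\alpha]\bigr)=\bmZ_2\bmZ_1\otimes(\bmZ_1^2-\bmZ_2^2)+(\bmZ_1^2-\bmZ_2^2)\otimes\bmZ_2\bmZ_1\neq 0,
\]
a nonzero bilinear differential operator, so under the literal multi-index convention the pointwise identity (and, after integrating by parts, the integrated one) already fails at $m=2$. The way out is to read $\bmZ^\alpha$ as an ordered word $\bmZ_{i_1}\cdots\bmZ_{i_m}$ with $\sum_{|\alpha|=m}$ running over all $3^m$ strings; under that reading $\Sigma_m=\delta^m$ holds by definition of the product in $U(\mathfrak{su}(2))^{\otimes 2}$, and your argument closes. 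You should therefore either adopt and state the ordered-string convention explicitly (which is what makes the lemma true and is the convention your proof implicitly assumes) or produce a different argument; as written, the proof rests on an identification that does not hold for the operators $\bmZ^\alpha$ as defined in the paper.
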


\begin{remark}
\Cref{Lemma:IntegrationSU2C} is an extension of the divergence theorem on $\mathrm{SU}(2)$. Indeed, for $m = 0$ the statement of \eqref{FriedrichsHigherOrderIntegrationByParts} reduces to 
\[
\int_{\mathrm{SU}(2)} \bmX_{\pm}(\phi \psi) \d \mu = 0.
\]
This, in turn, follows from the divergence theorem and the fact that left-invariant vector fields on unimodular Lie groups are divergence free.
\end{remark}

\subsection{The functions $T_m{}^j{}_k$}
\label{Section:Representations_SU2}

It will be useful to explicitly define the matrix elements of the unitary irreducible representations of $\mathrm{SU}(2)$ in the conventions of \cite{Fri86a}. The irreducible representations of $\mathrm{SU}(2)$ are uniquely labelled by the natural numbers $m \in \mathbb{N}$. For each $m \in \mathbb{N}$ there exists a unique unitary irreducible representation $T_m$, and the set of irreducible representations $\{ T_m \}_{m \in \mathbb{N}}$ contains all the unitary irreducible representations of $\mathrm{SU}(2)$. The dimension of each $T_m$ is $m+1$, and each $T_m$ is an eigenfunction of the Casimir operator $\bmC$ on $\mathrm{SU}(2)$ with eigenvalue $-m(m+2)$. The matrix elements of these representations are given by the complex-valued functions $T_m{}^j{}_k$ defined by
\[
\mathrm{SU}(2) \ni t^\bmA{}_\bmB \mapsto T_m{}^j{}_k
(t^\bmA{}_\bmB) \equiv \binom{m}{j}^{1/2} \binom{m}{k}^{1/2}
 t^{(\bmA_1}{}_{(\bmB_1}\cdots t^{\bmA_m)_j}{}_{\bmB_m)_k} 
\]
for $ j,\, k = 0, \,\ldots, \, m$,  $m=1,\, 2,\, 3,\ldots $, and
\[
T_0{}^0{}_0 =1,
\]
where the notation $(\bmA_1, \, \ldots, \, \bmA_m)_j$ indicates that the $\bmA$ indices are symmetrized, and then $j$ of them are set to $1$ and the remaining $m-j$ are set to $0$. The functions are real-analytic and the associated
representation $T_m$ is then given by
\[
\mathrm{SU}(2) \ni t^\bmA{}_\bmB \mapsto T_m (t^\bmA{}_\bmB) =
T_m{}^j{}_k(t^\bmA{}_\bmB) \in \mathrm{SU}(m+1).
\]
By the Schur orthogonality relations and the Peter--Weyl Theorem, the functions 
\[
\sqrt{m+1}\,T_m{}^j{}_k
\]
form a Hilbert basis for the Hilbert space $L^2(\mathrm{SU}(2), \mu)$ where $\mu$ denotes the normalised Haar
measure on $\mathrm{SU}(2)$. In particular, any complex analytic function $\phi$ on $\mathrm{SU}(2)$ admits the expansion
\[
\phi(t^\bmA{}_\bmB)= \sum^\infty_{m=0} \sum^m_{j=0} \sum^m_{k=0}
\phi_{m,k,j} T_m{}^j{}_k,
\]
with complex coefficients $\phi_{m,k,j}$ which decay rapidly as
$m\rightarrow \infty$. From the earlier definition, one can check that under complex conjugation we have
\[
\overline{T_m{}^j{}_k} =(-1)^{j+k} T_m{}^{m-j}{}_{m-k}.
\]
Moreover, one has for $0\leq j,\, k\leq m$, $m=0,\,1,2, \,\ldots$, that 
\begin{eqnarray*}
&& \bmX T_m{}^j{}_k = (m-2k) T_m{}^j{}_k, \\
&& \bmX_+ T_m{}^j{}_k = \beta_{m,k} T_m{}^j{}_{k-1},\\
&& \bmX_- T_m{}^j{}_k =  -\beta_{m,k+1} T_m{}^j{}_{k+1},
\end{eqnarray*}
where
\[
\beta_{m,k} \equiv \sqrt{k(m-k+1)}.
\]
In particular, the $2$-sphere Laplacian acts on the representations $T_m{}^j{}_k$ by
\[ 
\{ \bmX_+, \, \bmX_- \} T_m{}^j{}_k = (2k(k-m) - m ) T_m{}^j{}_k. 
\]
These formulae ensure that for a function $\phi$ with spin weight $\varsigma \in \frac{1}{2} \mathbb{N}$, i.e. satisfying
\[
\bmX \phi = 2 \varsigma \phi,
\]
the expansions above reduce to
\[
\phi = \sum_{m\leq |2\sigma|} \sum_{j=0}^m \phi_{m,j} T_m{}^j{}_{m/2-\sigma},
\]
where $m$ takes even values if $\sigma$ is an integer and odd values if
$\sigma$ is a half-integer \cite{Fri86a}. In particular, we note that for a given $k$ ($0 \leq k \leq 2s$), the component $\phi_k$ has spin weight $(k-s)$; that is, $\bmX \phi_k = 2(k-s) \phi_k$.

\begin{remark} The matrix coefficients $T_m{}^j{}_k$ as defined above are related to the perhaps more widely used spin-weighted spherical harmonics ${}_sY_{lm}$, as well as Wigner's $D$-matrices $\mathcal{D}^j_{m'm}$. One has
\begin{equation}
{}_s Y_{lm} = (-1)^{s+m} \sqrt{\frac{2l+1}{4 \pi}} T_{2l}{}^{l-m}{}_{l-s} 
\label{SpinWeightedHarmonicsCorrespondence}
\end{equation}
for $l \in \mathbb{N} \cup \{ 0 \}$, and $m, \, s \in \{ -l, -l+1, \, \ldots, \, l-1, l \}$ \cite{BeyDouFraWha12}.
Further,
\[ \mathcal{D}^j_{m' m} \propto T_{2j}{}^{j-m}{}_{j-m'}, \]
where $j \in \frac{1}{2} \mathbb{N}$ and $m', \, m \in \{ -j, -j+1 , \, \ldots, \, j - 1, j \}$, \cite{Wig59}.
\end{remark}

\section{Spin-$s$ equations}
\label{Appendix:SpinSEqns}

In this appendix we provide a derivation of the massless spin-$s$ equations in the F-gauge. Let
\[
\phi_{A_1\dots A_{2s}} =\phi_{(A_1\dots A_{2s})}
\]
denote a totally symmetric spinor of valence $2s$, with $s \in \frac{1}{2} \mathbb{N}$. The massless spin-$s$ equations for $\phi$ then read
\begin{equation}
\nabla^Q{}_{A'} \phi_{QA_1\dots A_{2s-1}}=0.
\label{SpinshalfEquation}
\end{equation}

\subsection{Hyperbolic reduction}
The system \eqref{SpinshalfEquation} is not manifestly symmetric hyperbolic, but a \emph{hyperbolic reduction} may be obtained by making use of the \emph{space-spinor} formalism (see e.g. \cite{CFEBook}, \S4) as follows. Let $\tau^{AA'}$ denote the spinorial counterpart of the timelike vector field $\tau^a$, with normalization $\tau_{AA'}\tau^{AA'}=2$. We define a spin dyad
\[
\{\epsilon_\bmA{}^A\} = \{ o^A,\, \iota^A  \}
\]
adapted to $\tau^{AA'}$ by requiring that
\[
\tau^{AA'}=o^A\bar{o}^{A'}
+\iota^A \bar{\iota}^{A'}.
\]
It then follows that 
\[
\tau_{AA'}\tau^{BA'}=\delta_A{}^B.
\]
Defining $\nabla_{AB}\equiv \tau_B{}^{A'}\nabla_{AA'}$, one has the decomposition
\[
\nabla_{AB} = \frac{1}{2}\epsilon_{AB}\mathcal{D}+\mathcal{D}_{AB},
\]
where 
\[
\mathcal{D} \equiv \tau^{AA'}\nabla_{AA'} \quad \text{and} \quad \mathcal{D}_{AB} \equiv \tau_{(A}{}^{A'}\nabla_{B)A'}
\]
denote, respectively, the so-called \emph{Fermi} and \emph{Sen} derivative operators. The operators $\mathcal{D}$ and $\mathcal{D}_{AB}$ correspond, respectively, to temporal and spatial parts of $\nabla_{AA'}$. Using this decomposition in equation \eqref{SpinshalfEquation}, one gets 
\[
\mathcal{D}\phi_{A_1\dots A_{2s}} -2 \mathcal{D}^Q{}_{A_1}\phi_{A_2\dots A_{2s} Q}=0.
\]
The above equation has two irreducible components: the totally symmetric part and the trace. That is, 
\begin{subequations}
\begin{eqnarray}
&& \mathcal{D}\phi_{A_1\dots A_{2s}} -2 \mathcal{D}^Q{}_{(A_1}\phi_{{A_2}\dots A_{2s})Q}=0, \label{SpinshalfEquationEvolution} \\
&& \mathcal{D}^{PQ} \phi_{PQA_1\dots A_{2s-2}}=0. \label{SpinshalfEquationConstraint}
\end{eqnarray}
\end{subequations}
Equations \eqref{SpinshalfEquationEvolution} and \eqref{SpinshalfEquationConstraint} will be referred to as the \emph{evolution} and \emph{constraint} equations, respectively. 

\begin{remark}
\label{Remark:PropagationConstraints}
It can be shown, through a standard propagation of constraints argument, that if \eqref{SpinshalfEquationConstraint} is satisfied on some initial hypersurface $\{\tau = 0\}$, then  it is also satisfied at later times whenever \eqref{SpinshalfEquationEvolution} holds.
\end{remark}

\begin{remark}
Note that the above decomposition into evolution and constraint equations fails when $s = \frac{1}{2}$. In this case $\phi_A$ has only two independent components and there are two corresponding equations,
\[ \mathcal{D} \phi_A - 2 \mathcal{D}^{Q}_{\phantom{Q}A} \phi_Q = 0, \]
so there are no constraints.
\end{remark}

\subsection{Transport equations along null geodesics}
\label{Appendix:TransportEqnsBasic}

Given a spinor $\mu_{AB\dots C}$, we denote its components with respect to the spin dyad $\{ \epsilon_\bmA{}^A\}$ by $\mu_{\bmA\bmB\dots \bmC}$. In particular, then
\begin{align*}
\nabla_{\bmA\bmA'}\mu_{\bmB\dots \bmC} &= \epsilon_\bmA{}^A\epsilon_{\bmA'}{}^A \epsilon_{\bmB}{}^B\dots\epsilon_{\bmC}{}^C \nabla_{AA'}\mu_{B\dots C} \\
& = \partial_{\bmA\bmA'} \mu_{\bmB\dots \bmC}  - \Gamma_{\bmA\bmA'}{}^\bmQ{}_\bmB \mu_{\bmQ\dots \bmC} - \dots -\Gamma_{\bmA\bmA'}{}^\bmQ{}_\bmD \mu_{\bmB\dots \bmQ},
\end{align*}
where 
\begin{equation}
\label{directionalderivatives}
\partial_{\bmA\bmA'} \equiv e_{\bmA\bmA'}{}^\mu \partial_\mu
\end{equation}
denotes the directional derivatives of the Newman--Penrose (NP) frame $\{ \bme_{\bmA\bmA'}\}$ associated to the spin dyad $\{\epsilon_\bmA{}^A \}$, and $\Gamma_{\bmA\bmA'}{}^\bmC{}_\bmD$ are the corresponding spin connection coefficients. In order to give more explicit expressions, it is convenient to define the following basis of symmetric $(0,2)$-spinors with unprimed indices,
\[
    \sigma^0_{AB} \equiv o_A o_B, \qquad \qquad \sigma^1_{AB} \equiv \iota_{(A} o_{B)}, \qquad \qquad \sigma^2_{AB} \equiv \iota_A \iota_B.
\]
The spinors $\sigma^i_{AB}$, $i \in \{0,1,2\}$, satisfy the orthogonality relations 
\begin{align*} \sigma^0_{AB} (\sigma^0)^{AB} = 0, && \sigma^0_{AB} (\sigma^1)^{AB} = 0, && (\sigma^0)^{AB} \sigma^2_{AB} = 1, \\
\sigma^1_{AB} (\sigma^1)^{AB} = - \frac{1}{2}, && \sigma^1_{AB} (\sigma^2)^{AB} = 0, && \sigma^2_{AB} (\sigma^2)^{AB} = 0.
\end{align*}
For higher valence spinors, we similarly define, for half-integer $s$, the basis
\[
\sigma^0_{A_1\dots A_{2s}} \equiv o_{(A_1}\dots o_{A_{2s})}, \quad \sigma^1_{A_1\dots A_{2s}} \equiv \iota_{(A_1}o_{A_2}\dots o_{A_{2s})}, \quad \dots \quad \sigma^{2s}_{A_1\dots A_{2s}} \equiv \iota_{(A_1}\dots \iota_{A_{2s})}. 
\]
By contracting with $\tau^{\bmA \bmA'}$, the directional derivatives \eqref{directionalderivatives} can be decomposed into temporal and spatial parts as
\[ \partial_{\bmA\bmA'} = \displaystyle\frac{1}{2}\tau_{\bmA\bmA'}\partial -\tau^\bmQ{}_{\bmA'}\partial_{\bmA\bmB}, \]
where we write $\partial\equiv \tau^{\bmA\bmA'}\partial_{\bmA\bmA'}$ and $\partial_{\bmA\bmB}\equiv \tau_{(\bmA}{}^{\bmA'}\partial_{\bmB)\bmA'}$.
In the particular case of the F-gauge used in the main text, one has that
\begin{eqnarray*}
&& \partial =\sqrt{2} \partial_\tau, \\
&& \partial_{\bmA\bmB} = \sqrt{2} \sigma^1_{\bmA\bmB} \left(-\tau \partial_\tau + \rho\partial_\rho \right) + \frac{1}{\sqrt{2}} \sigma^0_{\bmA\bmB} \bmX_+ - \frac{1}{\sqrt{2}} \sigma^2_{\bmA\bmB} \bmX_-, \\
&& \Gamma_{\bmA\bmA'\bmC\bmD}= -\frac{1}{\sqrt{2}}\tau_{\bmA\bmA'} \sigma^1_{\bmC\bmD}. 
\end{eqnarray*}
In particular, it follows that the spatial part $\Gamma_{\bmA \bmB \bmC \bmD} = \tau_{(\bmB}^{\phantom{\bmB}\bmB'} \Gamma_{\bmA) \bmB' \bmC \bmD}$ of the spin connection coefficients vanishes. Thus, the only non-trivial part of the connection is given by the \emph{acceleration} $f_{\bmA \bmB}\equiv -\tau^{\bmC\bmC'}\Gamma_{\bmC\bmC'\bmA\bmB}$. In fact it is easy to see that
\[
f_{\bmA \bmB} = \sqrt{2} \sigma^1_{\bmA \bmB}.
\]
As a consequence of $\Gamma_{\bmA \bmB \bmC \bmD} = 0$ and the fact that the components $\sigma^k_{\bmA_1\dots\bmA_{2s}}$ of $\sigma^k_{A_1\dots A_{2s}}$ are constants, it follows that
\[
\mathcal{D}_{\bmA\bmB} \sigma^k_{\bmA_1\dots \bmA_{2s}} = \partial_{\bmA\bmB} \sigma^k_{\bmA_1\dots\bmA_{2s}}=0,
\]
for all $0 \leq k \leq 2s$. On the other hand, a calculation shows that
\[
\mathcal{D}\sigma^k_{\bmA_1\dots \bmA_{2s}}=\frac{1}{\sqrt{2}}\sigma^k_{\bmA_1\dots \bmA_{2s}}.
\]
In order to write down the equations \eqref{SpinshalfEquationEvolution} and \eqref{SpinshalfEquationConstraint} in our gauge, we now expand $\phi_{A_1 \dots A_{2s}}$ in terms of the basis elements $\sigma^k_{A_1 \dots A_{2s}}$ as
\begin{equation}
\phi_{A_1 \dots A_{2s}} = \sum_{k=0}^{2s} (-1)^k \binom{2s}{k}  \phi_{2s-k} \sigma^k_{A_1 \dots A_{2s}},
\label{ExpansionSpinS}
\end{equation}
where
\[
\phi_0\equiv  \phi_{A_1\dots A_{2s}}o^{A_1}\dots o^{A_{2s}}, \quad \phi_1\equiv \phi_{A_1\dots A_{2s}}\iota^{A_1}o^{A_2}\dots o^{A_{2s}}, \quad \ldots  \quad \phi_{2s} \equiv \phi_{A_1\dots A_{2s}}\iota^{A_1}\dots \iota^{A_{2s}}
\]
are the components of the massless spinor $\phi_{A_1 \dots A_{2s}}$. That is, the component $\phi_k$ is obtained from $k$ contractions with $\iota^A$ and $2s-k$ contractions with $o^A$. By plugging in the expansion \eqref{ExpansionSpinS} into the equations \eqref{SpinshalfEquationEvolution} and \eqref{SpinshalfEquationConstraint}, one may derive the $2s+1$ scalar evolution equations and $2s-1$ scalar constraint equations satisfied by the components $\phi_k$, $0 \leq k \leq 2s$. These turn out to be
\begin{subequations}
\begin{eqnarray}
&& E_0\equiv (1-\tau) \partial_\tau \phi_0 +\rho\partial_\rho \phi_0 -\bmX_-\phi_1 - s\phi_0=0, \label{StandardEvolutionFirst}\\
&& E_k\equiv \partial_\tau \phi_k -\frac{1}{2}\bmX_+\phi_{k-1}-\frac{1}{2}\bmX_-\phi_{k+1}+(k-s)\phi_k=0, \qquad k=1, \, \dots, \, 2s-1,\label{StandardEvolutionMiddle}\\
&& E_{2s}\equiv (1+\tau) \partial_\tau \phi_{2s} -\rho\partial_\rho\phi_{2s} -\bmX_+ \phi_{2s-1} + s\phi_{2s}=0, \label{StandardEvolutionLast}
\end{eqnarray}
\end{subequations}
and
\[
C_k\equiv \tau \partial_\tau \phi_k -\rho\partial_\rho\phi_k +\frac{1}{2}\bmX_-\phi_{k+1}-\frac{1}{2}\bmX_+\phi_{k-1}=0, \qquad k=1, \, \dots, \, 2s-1.
\]

\begin{remark} Note that the equations $C_k = 0$ are termed \emph{constraints} despite containing $\partial_\tau$ derivatives. Indeed, the quantities $C_k$ are propagated as noted in \Cref{Remark:PropagationConstraints}.
\end{remark}

The analysis in the main text makes use of certain combinations of the above evolution and constraint equations. We set
\[ 
A_k \equiv E_{k+1} +C_{k+1} \quad \text{for} ~~ k = 0, \, \dots, \, 2s - 2 \quad \text{and} \quad A_{2s-1} \equiv E_{2s},  \]
and
\[ 
B_0 \equiv E_0 \quad \text{and} \quad B_k \equiv E_k-C_k \quad \text{for} ~~ k = 1, \, \dots, \, 2s - 1. 
\]
Explicitly, these are
\begin{subequations}
\begin{eqnarray}
&& A_k = (1+\tau)\partial_\tau \phi_{k+1} -\rho \partial_\rho \phi_{k+1} -\bmX_+ \phi_{k} +(k+1-s)\phi_{k+1} =0, \label{AEqnAppendix} \\
&& B_k = (1-\tau)\partial_\tau \phi_k + \rho\partial_\rho\phi_k -\bmX_-\phi_{k+1} +(k-s)\phi_k =0, \label{BEqnAppendix}
\end{eqnarray}
\end{subequations}
for $k=0, \, \dots, \, 2s-1$. The equations $A_k = 0$ and $B_k = 0$ are, respectively, transport equations along outgoing and incoming null geodesics in Minkowski space in the F-gauge (introduced in \Cref{Section:CylinderMinkowski}), and we shall refer to them as the \emph{outgoing equations} and \emph{incoming equations} respectively. A crucial feature of the outgoing and incoming equations is that they become degenerate at $\tau = -1$ and $\tau = + 1$ respectively. Further, we observe here that for a given $k$ the pair $(A_k, B_k)$ involves only the components $\phi_k$ and $\phi_{k+1}$.

\subsection{Wave equations}
\label{Appendix:WaveEquations}

It will be useful to note that the spinor $\phi_{A_1\dots A_{2s}}$ satisfies the wave equation. Applying $\nabla_P^{\phantom{P}A'}$ to equation \eqref{SpinshalfEquation} and using the decomposition 
\[
\nabla_{PA'} \nabla_Q{}^{A'}=\tfrac{1}{2}\epsilon_{PQ}\square +\square_{PQ},
\]
one finds that 
\[
\square \phi_{A_1\dots A_{2s-1} \textcolor{red}{P}}
+ 2\square_P{}^Q \phi_{QA_1\dots A_{2s-1}}=0,
\]
where $\square_{AB}\equiv \nabla_{A'(A}\nabla_{B)}{}^{A'}$ is the \emph{Penrose box} encoding the commutator of two $\nabla_{AA'}$ derivatives. Note that if one contracts $P$ with any of the remaining free indices in the above equation, the first term vanishes by the symmetry of $\phi_{A_1 \ldots A_{2s}}$, implying that the second term must also vanish. But, using the fact that on a spinor $\kappa^C$, $\Box_{AB}$ acts by $\Box_{AB} \kappa^C = X_{ABE}{}^C \kappa^E$ and $\Box_{A}{}^B \kappa_B = 3 \Lambda \kappa_A = \mathrm{R} \kappa_A / 8$, where $X_{ABCD} = \Psi_{ABCD} + \Lambda(\varepsilon_{AC} \varepsilon_{BD} + \varepsilon_{AD} \varepsilon_{BC})$, the second term may be expressed entirely in terms of products of $\phi$ with the Weyl spinor $\Psi_{ABCD}$. When $s>1$, one therefore discovers the well-known consistency condition\footnote{There are no such restrictions in the cases $s=1/2$ and $s=1$, i.e. for the Dirac and Maxwell fields. Moreover, while the spin-$2$ equation is unsatisfactory on a \emph{given} non-conformally flat spacetime, the Weyl spinor $\Psi_{ABCD}$ itself satisfies the spin-$2$ equation in \emph{dynamical} general relativity (in vacuum), when the above consistency condition becomes $\Psi_{ABM(C} \Psi_{D)}{}^{ABM} = 0$, which is trivial by the symmetries of $\Psi_{ABCD}$.} \cite{PenRin84} for higher spin fields,
\[ \phi_{ABM(C\ldots K} \Psi_{L)}{}^{ABM} = 0, \]
i.e. the requirement that the background spacetime be conformally flat.  Since we are working on conformally rescaled Minkowski space, this consistency condition is satisfied and one obtains the simple wave equation
\begin{equation}
\square \phi_{A_1\dots A_{2s}}=0.
\label{WaveEqnSpinS}
\end{equation}
Using the splitting
\[
\nabla_{AA'} =\frac{1}{2}\tau_{AA'}\mathcal{D} -\tau^Q{}_{A'} \mathcal{D}_{AQ},
\]
in the F-gauge the wave operator \eqref{WaveEqnSpinS} on $\phi_{A_1 \dots A_{2s}}$ may be written as
\[
\square \phi_{A_1\dots A_{2s}} =\frac{1}{2}\mathcal{D}^2 \phi_{A_1\dots A_{2s}}
 + \mathcal{D}^{AB}\mathcal{D}_{AB}\phi_{A_1\dots A_{2s}}- \sqrt{2} (\sigma^1)^{AB}\mathcal{D}_{AB}  \phi_{A_1\dots A_{2s}}.
 \]
Scalarising this equation and writing the derivative operators $\mathcal{D}$ and $\mathcal{D}_{AB}$ in terms of $\partial_\tau$ and $\partial_{AB}$, one finds, after a calculation, that
the components $\phi_k$ satisfy the wave equations
\[
\mathcal{W}_k[\bmphi] \equiv (1-\tau^2) \partial_\tau^2 \phi_k +2\tau\rho \partial_\tau\partial_\rho \phi_k -\rho^2\partial_\rho^2\phi_k -\frac{1}{2}\{\bmX_+, \bmX_-\}\phi_k + 2(s-k-\tau)\partial_\tau\phi_k + 2(s-k)^2\phi_k =0
\]
for $k=0,\, \dots, \, 2s$. For convenience, we also introduce the \emph{reduced wave operator} acting on a scalar $\zeta$ as
\begin{equation}
\blacksquare \zeta \equiv (1-\tau^2) \ddot\zeta +
2\tau\rho \dot{\zeta}' -\rho^2\zeta''
-2\tau \dot\zeta -\frac{1}{2}\{\bmX_+,  \bmX_-\} \zeta,
\label{FReducedWaveOperator}
\end{equation}
where we denote $\dot{\phantom{X}}\equiv \partial_\tau$ and $\phantom{X}^\prime\equiv \partial_\rho$. 
In this notation
\begin{equation}
\mathcal{W}_k[\bmphi] =\blacksquare \phi_k + \mathbf{L}\phi_k, \qquad k = 0,\, \dots, \, 2s,
\label{ReducedWaveEqnComponents}
\end{equation}
where $\mathbf{L}$ is a linear lower order operator such that $[\mathbf{L},\partial_\rho]=0$. 

\begin{remark}
\label{Remark:Data} In a standard (i.e. non-characteristic) initial value problem, the system \eqref{ReducedWaveEqnComponents} of wave equations needs to be supplemented with the initial data $(\phi_{k}^\star,\dot{\phi}_{k}^\star)$ where $\phi_{k}^\star \equiv \phi_k(\tau_\star)$ and $\dot{\phi}_{k}^\star \equiv \dot{\phi}_k(\tau_\star)$ for some initial hypersurface $\mathcal{S}_\star = \{\tau = \tau_* \}$,  $\tau_\star\in (-1,1)$. We observe that, since the system \eqref{ReducedWaveEqnComponents} arises from the first order system \eqref{StandardEvolutionFirst}--\eqref{StandardEvolutionLast}, here the time derivative part $\dot{\phi}_{k}^\star$ of the data is always expressible in terms of $\phi^\star_k$.
\end{remark}

\subsection{A more general gauge}
\label{Appendix:MoreGeneralGauge}

Certain arguments in the main text require a version of the F-gauge in which only the critical sets $\mathcal{I}^\pm$ are singular sets of the evolution equations, and $\scri^\pm$ is given by a non-horizontal hypersurface in a generalized coordinate plane $(\hat{\tau}, \rho)$.

Proceeding as in \Cref{Section:CylinderMinkowski}---but instead of writing $x^0 = \tau \rho$---we define a new time coordinate $\hat{\tau}$ by
\[
x^0 = \hat{\tau} \kappa, \quad \mbox{where} \quad \kappa = \rho \mu,
\]
and $\mu$ is a smooth function of $\rho$ such that $\mu(0)=1$, but $\mu \not\equiv 1$. The specific version of the F-gauge introduced in \Cref{Section:CylinderMinkowski} corresponds to the choice $\mu \equiv 1$. This more general choice of the coordinate $\hat{\tau}$ leads to the conformal factor
\begin{equation}
\hat{\Theta} = \frac{\rho}{\mu}\big(1-\mu^2 \hat{\tau}^2 \big) = \kappa^{-1}\Xi,
\label{ConformalFactorThetaHat}
\end{equation}
which, in turn, gives rise to the unphysical metric
\begin{eqnarray*}
&& \hat{\bmeta} = \hat{\Theta}^2 \tilde{\bmeta},\\
&& \phantom{\hat{\bmeta}}= \mathbf{d}\hat{\tau}\otimes \mathbf{d}\hat{\tau} + \frac{\hat{\tau}\kappa'}{\kappa}\big( \mathbf{d}\hat{\tau}\otimes\mathbf{d}\rho + \mathbf{d}\rho\otimes \mathbf{d}\hat{\tau} \big) -\frac{(1-\hat{\tau}^2\kappa^{\prime 2})}{\kappa^2}\mathbf{d}\rho\otimes\mathbf{d}\rho - \frac{1}{\mu^2}\bmsigma.
\end{eqnarray*}
This conformal metric is supplemented with the following choice of frame:
\begin{eqnarray*}
&& \hat{\bme}_{\bmzero\bmzero'} = \frac{1}{\sqrt{2}}\big((1-\kappa'\hat{\tau})\big) \bmpartial_{\hat{\tau}} + \kappa \bmpartial_\rho, \\
&& \hat{\bme}_{\bmone\bmone'} = \frac{1}{\sqrt{2}}\big((1+\kappa'\hat{\tau})\big) \bmpartial_{\hat{\tau}} - \kappa \bmpartial_\rho, \\
&& \hat{\bme}_{\bmzero\bmone'} =- \frac{1}{\sqrt{2}}\mu \bmX_+, \\
&& \hat{\bme}_{\bmone\bmzero'} =-\frac{1}{\sqrt{2}}\mu \bmX_-,
\end{eqnarray*}
with associated non-vanishing spin connection coefficients
\[
\hat{\Gamma}_{\bmzero\bmzero'\bmzero\bmone}=\hat{\Gamma}_{\bmone\bmone'\bmzero\bmone}=-\frac{1}{2\sqrt{2}} \kappa' \quad \text{and} \quad \hat{\Gamma}_{\bmzero\bmone'\bmone\bmone}=\hat{\Gamma}_{\bmone\bmzero'\bmzero\bmzero}=\frac{1}{\sqrt{2}}\rho\mu'.
\]
The above is equivalent to the expression
\[
\Gamma_{\bmA\bmA'\bmC\bmD}= \frac{1}{\sqrt{2}}\rho\mu' \tau^\bmB{}_{\bmA'}\epsilon_{\bmA\bmC}\sigma^1_{\bmB\bmD}+\frac{1}{\sqrt{2}}\rho\mu'\tau_{\bmD\bmA'}\sigma^1_{\bmA\bmC}-\frac{1}{\sqrt{2}} (\mu+\rho\mu')\tau_{\bmA\bmA'}\sigma^1_{\bmC\bmD}.
\]
The space spinor counterpart of the above expressions is given by
\begin{eqnarray*}
&& \bmpartial =\sqrt{2} \bmpartial_\tau, \\
&& \bmpartial_{\bmA\bmB} = \sqrt{2} \sigma^1_{\bmA\bmB} \left(-\tau \kappa'  \bmpartial_\tau + \kappa\bmpartial_\rho \right) + \frac{1}{\sqrt{2}} \sigma^0_{\bmA\bmB} \bmX_+ - \frac{1}{\sqrt{2}} \sigma^2_{\bmA\bmB} \bmX_-, \\
&& \Gamma_{\bmA\bmB\bmC\bmD} = -\frac{1}{\sqrt{2}}\rho \mu' (\epsilon_{\bmA\bmC}\sigma^1_{\bmB\bmD} + \epsilon_{\bmB\bmD}\sigma^1_{\bmA\bmC}) -\frac{1}{\sqrt{2}}(\mu+\rho \mu')\epsilon_{\bmA\bmB}\sigma^1_{\bmC\bmD}.
\end{eqnarray*}
In particular, the acceleration is given by
\[
f_{\bmA\bmB}=\sqrt{2}(\mu +\rho\mu') \sigma^1_{\bmA\bmB}.
\]

Now, recalling that the conformal factor used in \Cref{Section:CylinderMinkowski} is given by $\Theta = \rho^{-1} \Xi$,  it follows by comparison with equation \eqref{ConformalFactorThetaHat} that
\[
\hat{\bmeta} = \varpi^2 \bmeta, \qquad \varpi\equiv  \frac{1}{\mu}.
\]
The associated transformation of the antisymmetric spinor is then given by
\[
\epsilon_{AB} =\mu \hat{\epsilon}_{AB},
\]
with a scaling of the spin basis given by 
\begin{equation}
o_A = \mu^{1/2} \hat{o}_A, \qquad \iota_A = \mu^{1/2} \hat{\iota}_A.
\label{DemocraticScalingSpinBasis}
\end{equation}
The unphysical spin-$s$ fields are related to the physical one by
\[
\phi_{A_1\cdots A_{2s}} = \Theta^{-1} \tilde{\phi}_{A_1\cdots A_{2s}}, \qquad \hat{\phi}_{A_1\cdots A_{2s}} = \hat{\Theta}^{-1} \tilde{\phi}_{A_1\cdots A_{2s}},
\]
so that, in fact, one has
\[
\hat{\phi}_{A_1\cdots A_{2s}} = \mu \phi_{A_1\cdots A_{2s}}.
\]
The scaling \eqref{DemocraticScalingSpinBasis} then implies that 
\[
\hat{\phi}_i = \mu^{s+1}\phi_i, \qquad i=0, \, \ldots, \, 2s.
\]

\begin{remark}
{Observe that given that $\mu$ is assumed to be a smooth function of $\rho$, it follows that the regularity of the components of the spin-$s$ field is not affected by the rescaling.}
\end{remark}

\noindent Following an approach similar to the one used in \Cref{Appendix:TransportEqnsBasic}, one obtains the equations
\begin{subequations}
\begin{eqnarray}
&& \hspace{-5mm} E_0\equiv (1-\kappa'\tau) \partial_\tau \phi_0 +\kappa\partial_\rho \phi_0 -\mu\bmX_-\phi_1 - s(\mu+\rho\mu')\phi_0=0, \label{StandardEvolutionFirstGeneral}\\
&& \hspace{-5mm} E_k\equiv \partial_\tau \phi_k -\frac{1}{2}\mu\bmX_+\phi_{k-1}-\frac{1}{2}\mu\bmX_-\phi_{k+1}+(k-s)(\mu+\rho\mu')\phi_k=0, \qquad k=1, \, \dots, \, 2s-1,\label{StandardEvolutionMiddleGeneral}\\
&& \hspace{-5mm} E_{2s}\equiv (1+\kappa'\tau) \partial_\tau \phi_{2s} -\kappa\partial_\rho\phi_{2s} -\mu\bmX_+ \phi_{2s-1} + s(\mu+\rho\mu')\phi_{2s}=0, \label{StandardEvolutionLastGeneral}
\end{eqnarray}
\end{subequations}
and
\[
C_k\equiv \kappa'\tau \partial_\tau \phi_k -\kappa\partial_\rho\phi_k +\frac{1}{2}\mu\bmX_-\phi_{k+1}-\frac{1}{2}\mu\bmX_+\phi_{k-1}+ s\mu'\phi_k=0, \qquad k=1, \, \dots, \, 2s-1.
\]

\section{F-expansions}
\label{Appendix:SolutionJets}

In this appendix we provide a detailed overview of the construction of \emph{F-expansions} for the solutions to the spin-$s$ field equations: expansions which exploit the cylinder at spatial infinity $\mathcal{I}$ being a total characteristic of the evolution equations \cite{Fri98a,Val03a,Fri04}.

\subsection{Interior equations on \texorpdfstring{$\mathcal{I}$}{I}}

The total characteristic nature of the cylinder at spatial infinity for the spin-$s$ equations is reflected in the fact that the reduced wave operator $\blacksquare$---essentially the principal and sub-principal parts of the full wave operator $\Box$ acting on $\phi_{A_1 \ldots A_{2s}}$ (see \Cref{Appendix:WaveEquations})---reduces, upon evaluation on $\mathcal{I}$, to the interior operator $\accentset{\circ}{\blacksquare}\equiv\blacksquare|_{\mathcal{I}}$, which acts on a scalar 
$\zeta$ by
\[
\accentset{\circ}{\blacksquare} \zeta \equiv (1-\tau^2) \ddot\zeta
-2\tau \dot\zeta -\frac{1}{2}\{ \bmX_+, \, \bmX_- \} \zeta,
\]
where $\dot{\zeta} = \partial_\tau \zeta$ and $\{\bmX_+, \, \bmX_- \}$ denotes the anticommutator of $\bmX_+$ and $\bmX_-$. Computing the commutator
\[ [ \partial_\rho^p, \, \blacksquare] \zeta =
2\tau p \partial_\rho^p\dot{\zeta} -2 p\rho \partial_\rho^p \zeta' - p(p-1)\partial_\rho^p\zeta \]
and using \cref{ReducedWaveEqnComponents}, we obtain
\[
\blacksquare \partial^p_\rho \phi_k + 2\tau p \partial^p_\rho\dot{\phi}_k-2 \rho p \partial^p_\rho \phi'_s -p(p-1)\partial_\rho^p\phi_k + \mathbf{L}\partial^p_\rho\phi_k=0, 
\]
where $\mathbf{L}$ is a first order differential operator which commutes with $\partial_\rho$. Evaluating on $\mathcal{I}$, one therefore finds the following set of interior equations on $\mathcal{I}$ for all $p \geq 0$,
\begin{equation}
\label{InteriorEquationsOnCylinder}
\accentset{\circ}{\blacksquare} \phi^{(p)}_k + 2\tau p \dot{\phi}^{(p)}_k -p(p-1)\phi^{(p)}_k + \mathbf{L}\phi^{(p)}_k=0. 
\end{equation}
The equations \eqref{InteriorEquationsOnCylinder} are linear, and in fact solvable explicitly in terms of a basis of functions on $\mathrm{SU}(2)$.

\subsection{Expansions in terms of $T_m{}^j{}_k$}
\label{Section:Expansions_In_Harmonics}

Given that the spin weight of component $\phi_k$ is $(k-s$), and writing $m=2q$, we look, for $p \geq |s-k|$, for solutions $\phi_k^{(p)}$ to \eqref{InteriorEquationsOnCylinder} of the form
\begin{equation}
\phi_k^{(p)}= \sum_{q=|s-k|}^{p} \sum_{j=0}^{2q} a_{k,p;q,j} T_{2q}{}^j{}_{q+s-k},
\label{ExpansionHarmonics}
\end{equation}
where the coefficients $a_{k,p;q,j}$ are functions of $\tau$.

\begin{remark}
This Ansatz for the coefficients $\phi_k^{(p)}$ is motivated by analogy to the analysis in \cite{Fri98a,Val03a} of time symmetric initial data sets for the Einstein field equations admitting a conformal metric which is analytic at spatial infinity. Depending on the particular application at hand, the Ansatz can be suitably generalised. For example, the analysis in \cite{MagVal22} of BMS charges for spin-1 and spin-2 fields considers, e.g. for the coefficient $\phi^{(0)}_2$ of the spin-2 field, the expansion
\[
\phi_2^{(0)}= \sum_{q=0}^{\infty} \sum_{j=0}^{2q} a_{2,0;q,j} T_{2q}{}^j{}_{q+s-2}.
\]
In that particular case one finds that the coefficient $a_{2,0;q,j}$ decomposes into a sum of a regular part and a part which has logarithmic divergences at $\tau=\pm1$. The part of the solution with logarithmic terms can be eliminated by fine-tuning the initial data.
\end{remark}

\begin{remark}
In the case $s=\frac{1}{2}$ the expansion \eqref{ExpansionHarmonics} takes the particular form
\[
\phi_k^{(p)}=\sum_{q=|k-\frac{1}{2}|}^{p} \sum_{j=0}^{2q} a_{k,p;q,j} T_{2q}{}^j{}_{q+\frac{1}{2}-k}, \qquad k \in \{0, 1 \},
\]
with the understanding that $q$ is a proper half-integer (i.e. it does not simplify to an integer). In particular, then the indices of $T_i{}^j{}_k$ are always integers. Comparing the above expansion with the relation \eqref{SpinWeightedHarmonicsCorrespondence} shows that in this case one has expansions in term of the harmonics ${}_{\pm \frac{1}{2}}Y_{lm}$. The expansions for fields with higher half-integer spins are analogous.
\end{remark}

From the expansion \eqref{ExpansionHarmonics} and equation \eqref{InteriorEquationsOnCylinder}, it follows then that the coefficient $a_{k,p;q,j}$ satisfies the ODE
\begin{equation}
    (1-\tau^2)\ddot{a}_{k,p;q,j} +2 \big((p-1)\tau +s-k  \big)\dot{a}_{k,p;q,j} + \big( q^2+q -p^2-p  \big)a_{k,p;q,j} =0,
\label{JacobiEqn:Raw}
\end{equation}
where the integers $(k,p,q,j)$ are such that
\[
0 \leq k \leq 2s, \qquad |s-k|\leq p, \qquad |s-k|\leq q \leq p, \quad \text{and} \quad 0\leq j\leq 2q. 
\]
Equation \eqref{JacobiEqn:Raw} is an example of a \emph{Jacobi ordinary differential equation}. Jacobi equations are usually parametrised in the form
\begin{equation}
    D_{(n,\alpha,\beta)}a \equiv (1-\tau^2) \ddot{a} +\big(\beta-\alpha -(\alpha+\beta+2)\tau \big)\dot{a} + n(n+\alpha+\beta) a =0.
    \label{JacobiEqn:Model}
\end{equation}
A direct comparison between equations \eqref{JacobiEqn:Raw} and \eqref{JacobiEqn:Model} gives
\begin{subequations}
\begin{eqnarray}
&& \alpha = -p +(k-s), \label{Jacobi:alpha}\\
&& \beta = -p -(k-s), \label{Jacobi:beta} \\
&& n= n_1\equiv p+q, \qquad \mbox{or} \qquad n=n_2\equiv p-q-1. \label{Jacobi:n}
\end{eqnarray}
\end{subequations}
As we shall see in the following subsection, the qualitative nature of the solutions to equation \eqref{JacobiEqn:Raw} differs depending on whether $|s-k|\leq q<p$ or $q=p$; the case $q=p$ corresponds to the harmonic which acquires logarithmic singularities at $\tau = \pm 1$.


\subsubsection{Properties of the Jacobi differential equation}
\label{Section:JacobiODE}

An extensive discussion of the solutions of the Jacobi equation can be found in the monograph \cite{Sze78} from which we borrow a number of identities. The solutions to  \eqref{JacobiEqn:Model} are given by the \emph{Jacobi polynomials} $P_n^{(\alpha, \beta)}(\tau)$, of degree $n$, defined by
\[
P_n^{(\alpha,\beta)}(\tau) \equiv \sum_{l=0}^n
\binom{n+\alpha}{l}\binom{n+\beta}{n-l}\left( \frac{\tau-1}{2}\right)^{n-l}\left(\frac{\tau+1}{2}\right)^l,
\]
where for $z \in \mathbb{C}$, $r \in \mathbb{N}$ the binomial coefficient is defined by 
\[ \binom{z}{r} = \begin{cases} \displaystyle\frac{\Gamma(z+1)}{\Gamma(r+1)\Gamma(z-r+1)} & r \geq 0, \\ \hfil 0 & r < 0. \end{cases}
\]
In particular, one has
\[
P_0^{(\alpha,\beta)}(\tau)=1,
\]
and
\[
P_n^{(\alpha,\beta)}(-\tau) =(-1)^n P_n^{(\beta,\alpha)}(\tau).
\]
The differential operator defined by \eqref{JacobiEqn:Model} exhibits the following symmetries,
\begin{subequations}
\begin{eqnarray}
&&\hspace{-1.5cm} D_{(n,\alpha,\beta)} \left( \left( \frac{1-\tau}{2}
   \right)^{-\alpha}a(\tau)  \right) = \left( \frac{1-\tau}{2}
   \right)^{-\alpha} D_{(n+\alpha,-\alpha,\beta)} a(\tau), \label{JacobiEqnIdentity1}\\
&&\hspace{-1.5cm} D_{(n,\alpha,\beta)} \left( \left( \frac{1+\tau}{2}
    \right)^{-\beta}a(\tau)  \right) = \left( \frac{1+\tau}{2}
    \right)^{-\beta} D_{(n+\beta,\alpha,-\beta)} a(\tau), \label{JacobiEqnIdentity2}\\
&&\hspace{-1.5cm} D_{(n,\alpha,\beta)} \left( \left( \frac{1-\tau}{2}
   \right)^{-\alpha}\left( \frac{1+\tau}{2}
    \right)^{-\beta}a(\tau)  \right) = \left( \frac{1-\tau}{2}
   \right)^{-\alpha}\left( \frac{1+\tau}{2}
    \right)^{-\beta} D_{(n+\alpha+\beta,-\alpha,-\beta)} a(\tau), \label{JacobiEqnIdentity3}
\end{eqnarray}
\end{subequations}
which hold for $|\tau|<1$, arbitrary $C^2$-functions $a(\tau)$, and
arbitrary real values of the parameters $\alpha$, $\beta$, $n$. An alternative definition of the Jacobi polynomials, convenient for
verifying when the functions vanish identically, is given by
\[
P_n^{(\alpha,\beta)}(\tau) = \frac{1}{n!}\sum_{k=0}^n c_k \left( \frac{\tau-1}{2} \right)^k,
\]
with
\begin{eqnarray*}
&& \hspace{-1cm}c_0\equiv (\alpha+1)(\alpha+2)\cdots(\alpha +n), \\
&& \hspace{2cm}\vdots \\
&& \hspace{-1cm} c_k \equiv \frac{n!}{k!(n-k)!}(\alpha+k+1)(\alpha+k+2)\cdots
   (\alpha+n) \\
&& \hspace{45pt} \times (n+1 +\alpha +\beta) (n+2+\alpha+\beta)\cdots
   (n+k+\alpha+\beta),\\
&& \hspace{2cm}\vdots \\
&& \hspace{-1cm}c_n\equiv (n+1+\alpha+\beta)(n+2+\alpha+\beta) \cdots (2n+\alpha+\beta).
\end{eqnarray*}

\subsubsection{Solutions for $|s-k|\leq q <p$}
In the case $|s-k|\leq q <p$, one has from direct inspection of the formulae above, that the polynomial $P_{n_1}^{(\alpha,\beta)}(\tau)$ with $(n_1,\,\alpha,\,\beta)$ as given by \cref{Jacobi:alpha,Jacobi:beta,Jacobi:n} vanishes identically, while 
\[
Q_2(\tau) \equiv P_{n_2}^{(\alpha,\beta)}(\tau)
\]
gives a polynomial of degree $n_2$. A further non-trivial solution can be written down using the identity \eqref{JacobiEqnIdentity1}; one finds a polynomial of degree $n_1$ given by
\[
Q_1(\tau) \equiv \left(\frac{1-\tau}{2}  \right)^{p-k+s}P_{q+k-s}^{(-\alpha,\beta)}(\tau).
\]
Since $n_2<n_1$, the solutions $Q_1$ and $Q_2$ are linearly independent. Yet another solution can be obtained using identity \eqref{JacobiEqnIdentity2}, namely
\[
Q_3(\tau)\equiv \left( \frac{1+\tau}{2} \right)^{p+k-s}P_{q-k+s}^{(\alpha,-\beta)}(\tau),
\]
which, again, is a polynomial of degree $n_1$. It can be verified that $Q_1$ and $Q_3$ are also linearly independent. Making use of these solutions one can write the general solution to equation \eqref{JacobiEqn:Raw}, for $|s-k|\leq q <p$, in the symmetric form
\begin{equation}
\label{a_coefficient_q_less_p}
a_{k,p;q,j}(\tau) = \mathfrak{c}_{k,p;q,j}\left(\frac{1-\tau}{2}  \right)^{p-k+s}P_{q+k-s}^{(-\alpha,\beta)}(\tau) + \mathfrak{d}_{k,p;q,j}\left( \frac{1+\tau}{2} \right)^{p+k-s}P_{q-k+s}^{(\alpha,-\beta)}(\tau),
\end{equation}
with $\mathfrak{c}_{_{k,p;q,j}}$ and $\mathfrak{d}_{_{k,p;q,j}}$ constants to be determined from the initial conditions. In particular, we have the following lemma:

\begin{lemma}
For $|s-k|\leq q <p$, the solutions to the Jacobi equation \eqref{JacobiEqn:Raw} are analytic at $\tau=\pm 1$.
\end{lemma}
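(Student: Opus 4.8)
The plan is to read the conclusion directly off the closed-form general solution \eqref{a_coefficient_q_less_p} derived just above, which expresses $a_{k,p;q,j}$ as the linear combination $\mathfrak{c}_{k,p;q,j} Q_1 + \mathfrak{d}_{k,p;q,j} Q_3$ of the two solutions
\[
Q_1(\tau) = \left( \frac{1-\tau}{2} \right)^{p-k+s} P_{q+k-s}^{(-\alpha,\beta)}(\tau), \qquad Q_3(\tau) = \left( \frac{1+\tau}{2} \right)^{p+k-s} P_{q-k+s}^{(\alpha,-\beta)}(\tau)
\]
produced by the symmetries \eqref{JacobiEqnIdentity1}--\eqref{JacobiEqnIdentity2}. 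Since a finite linear combination of functions analytic at a point is again analytic there, it suffices to show that $Q_1$ and $Q_3$ are each analytic at $\tau = +1$ and at $\tau = -1$.

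First I would record the numerology forced by the standing hypothesis $|s-k| \leq q < p$. Expressing everything through $k-s$, this gives $k - s \leq |k-s| \leq q \leq p-1$, whence the two prefactor exponents
\[
p - k + s = p - (k-s), \qquad p + k - s = p + (k-s)
\]
are \emph{strictly positive}, while the two Jacobi degrees $q + k - s = q + (k-s)$ and $q - k + s = q - (k-s)$ are non-negative integers (using $|k-s| \leq q$). Hence $P_{q+k-s}^{(-\alpha,\beta)}$ and $P_{q-k+s}^{(\alpha,-\beta)}$ are genuine polynomials, and neither prefactor contributes a negative power. Near $\tau = +1$ the factor $(1+\tau)^{p+k-s}$ in $Q_3$ is smooth and non-vanishing, so $Q_3$ is analytic there, while $Q_1$ is analytic at $\tau = +1$ precisely because its prefactor carries the \emph{non-negative} power $p-k+s$; the roles of $Q_1$ and $Q_3$ are exchanged at $\tau = -1$. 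For integer spin the exponents $p \mp (k-s)$ are non-negative integers, so $Q_1$ and $Q_3$ are honest polynomials (entire functions) and the claim is immediate.

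It is worth cross-checking this against the Fuchsian structure of \eqref{JacobiEqn:Raw}: at the regular singular points $\tau = \pm 1$ the indicial roots are $\{0,\, p \mp (k-s)\}$, and the only way analyticity could fail is through a logarithm in the larger-exponent solution when the two roots differ by a non-negative integer. That mechanism is excluded here by the very existence of the two independent closed forms $Q_1, Q_3$, which are available exactly in the range $q < p$; when $q = p$ the structure degenerates and a solution of the second kind carrying $\log(1 \mp \tau)$ appears, the resonant case treated next. The one point demanding care, and which I expect to be the main obstacle, is the half-integer-spin case: there $k - s$ is a half-integer, so the positive exponents $p \mp (k-s)$ are half-integers and $Q_1, Q_3$ are not literally polynomials but half-integer powers of $(1 \mp \tau)$ times polynomials. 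In that regime the substantive content of the lemma is the \emph{absence of logarithmic branch terms} at $\tau = \pm 1$ — the feature that genuinely fails at $q = p$ — with true real-analyticity being the integer-spin specialization.
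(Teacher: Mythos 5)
Your proposal follows the same route as the paper: the text offers no argument beyond exhibiting $Q_1$ and $Q_3$, asserting they are ``polynomials of degree $n_1$'', and writing the general solution \eqref{a_coefficient_q_less_p} as their span, so reading analyticity off that closed form is exactly the intended proof. Your verification of the numerology ($p\mp(k-s)>0$ strictly, degrees $q\pm(k-s)\geq 0$) is correct and is all that is needed in the integer-spin case. The caveat you raise at the end is not a defect of your argument but a genuine catch against the paper's own statement: for $s\in\mathbb{Z}+\tfrac12$ the exponents $p\mp(k-s)$ are proper half-integers (while $q$, being congruent to $k-s$ mod $1$, is then automatically $<p$), so $Q_1$ and $Q_3$ are \emph{not} polynomials, and a half-integer power such as $(1-\tau)^{p-k+s}$ is not analytic at $\tau=1$; one can check directly (e.g.\ $s=\tfrac12$, $k=0$, $p=1$, $q=\tfrac12$, where the general solution is a combination of $(1-\tau)^{1/2}(\tau+2)$ and $(1+\tau)^{3/2}$) that the generic solution is then only finitely differentiable at each endpoint. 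So the lemma as literally stated holds only for integer spin, and your reformulation --- absence of logarithmic branch terms, with the indicial exponents $\{0,\,p\mp(k-s)\}$ governing the finite regularity --- is the correct content in the half-integer case. One small slip worth fixing: in your middle paragraph you assert $Q_1$ is analytic at $\tau=+1$ ``precisely because its prefactor carries the non-negative power $p-k+s$''; positivity of the exponent gives continuity, not analyticity, unless the exponent is an integer, which is exactly the distinction your final paragraph then draws --- state the integer-spin restriction already at that point rather than only at the end.
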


\begin{remark}
A direct inspection of the formulae given above show that they do not give non-vanishing solutions if $p=q$.
\end{remark}

 \subsubsection{Solutions for $p=q$}
\label{Appendix:LogarithmicTerms}
 
 In order to obtain solutions in the case $p=q$, we make use of identity \eqref{JacobiEqnIdentity3} with $n=n_1$ and look for solutions of the form
 \[
 a_{k,p;p,j}(\tau)= \left(\frac{1-\tau}{2}\right)^{-\alpha}\left(\frac{1+\tau}{2} \right)^{-\beta} b(\tau),
 \]
 with $b(\tau)$ satisfying the equation
 \[
 D_{(0,p-k+2,p+k-s)}b(\tau) = (1-\tau^2) \ddot{b}(\tau) + 2\big(k-s -(p+1)\tau \big)\dot{b}(\tau)=0.
 \]
This can be integrated to give
 \begin{equation}
 b(\tau) = \mathfrak{c}_{k,p;p,j} + \mathfrak{d}_{k,p;p,j} \int_0^\tau \frac{\mathrm{d}\varsigma}{(1+\varsigma)^{p-s+k+1}(1-\varsigma)^{p+s-k+1}},
 \label{Solution:b}
 \end{equation}
 with $\mathfrak{c}_{_{k,p;p,j}}$ and $\mathfrak{d}_{k,p;p,j}$ constants. 
 Thus, the general solution to \eqref{JacobiEqn:Raw} for $p=q$ can be written as 
 \begin{equation}
 a_{k,p;p,j}(\tau)=\left( \frac{1-\tau}{2}\right)^{p-k+s}\left(\frac{1+\tau}{2} \right)^{p+k-s} \left(\mathfrak{c}_{k,p;p,j} + \mathfrak{d}_{k,p;p,j} \int_0^\tau \frac{\mathrm{d}\varsigma}{(1+\varsigma)^{p-s+k+1}(1-\varsigma)^{p+s-k+1}}\right).
 \label{JacobiSolution:Logarithmic}
 \end{equation}
 Now, expanding the integrand of \eqref{Solution:b} in partial fractions, one sees that $ a_{k,p;p,j}(\tau)$ contains terms of the form
 \[
 (1-\tau)^{p-k+s}\log(1-\tau) \quad \text{and} \quad (1+\tau)^{p+k-s}\log(1+\tau),
 \]
 which are, respectively, $C^{p-k+s-1}$ and $C^{p+k-s-1}$ at $\tau=\pm 1$. These are the only singular terms in the solution \eqref{JacobiSolution:Logarithmic}. The rest of the solution is polynomial in $\tau$, and thus analytic at $\tau=\pm 1$. The solutions in the case $p=q$ are therefore not smooth at the critical sets $\mathcal{I}^\pm$ of the conformal boundary. 
 
 \begin{remark}
The logarithmic divergences in \eqref{JacobiSolution:Logarithmic} can be set to vanish by fine-tuning initial data so that $\mathfrak{d}_{k,p;p,j}=0$. 
 \end{remark}

 \section{Expansions near $\mathscr{I}^-$}
 \label{Appendix:ExpansionsNullInfinity}
 
 The approach in the main text makes use of expansions not only near $\mathcal{I}$ but also near $\mathscr{I}^-$. The expansions near $\scri^-$ are computed from characteristic data. While the construction of the expansions near $\mathcal{I}$ in \Cref{Appendix:SolutionJets} is fairly general, it turns out that to construct the expansions near $\scri^-$ it is convenient to require the fields to possess a certain amount of decay towards the critical set  $\mathcal{I}^-$.

\subsection{Leading order terms}
We begin by observing that the component $\phi_0$ encodes the freely specifiable characteristic initial data on $\mathscr{I}^-$. This can be easily seen by evaluating the equations $A_k=0$ at $\mathscr{I}^- = \{ \tau = -1 \}$. One gets
\[
\rho \partial_\rho \phi_{k+1}|_{\scri^-} + \bmX_+\phi_k|_{\scri^-} -(k+1-s)\phi_{k+1}|_{\scri^-}=0, \qquad 0 \leq k \leq 2s - 1,
\]
so that the components $\phi_{k}$ for $1 \leq k \leq 2s$ can be computed in a hierarchical manner by solving ODEs along the generators of $\mathscr{I}^-$. 

We assume from the outset that $\phi_0|_{\scri^-}$ is bounded as $\rho \to 0$. In particular, $\bmX_+ \phi_0 = \mathcal{O}(1)$ as $\rho \to 0$. We then set
\[ 
\phi_1|_{\scri^-} = - \rho^{-(s-1)} \int_0^\rho \varrho^{s-2} \bmX_+ \phi_0|_{\scri^-} (\varrho) \d \varrho, 
\]
which, by the above assumption, satisfies $ \phi_1|_{\scri^-} = \mathcal{O}(1)$ as $\rho \to 0$. This therefore gives a solution to the above equation with $k=0$, which is continuous on $\scri^-_{\rho_\star} \cup \mathcal{I}^-$. We write down the other components hierarchically to obtain
\[
\phi_{k+1}|_{\scri^-} = - \rho^{-(s-k-1)} \int_0^\rho \varrho^{s-k-2} \bmX_+ \phi_k |_{\scri^-} (\varrho) \d \varrho 
\]
for $0 \leq k \leq 2s - 1$. We observe that all components $\phi_k|_{\scri^-}$ defined in this way inherit the decay rate towards $\mathcal{I}^-$ prescribed for $\phi_0$. For instance, if $\phi_0|_{\scri^-} \sim \rho^\gamma$ for some $\gamma \geq 0$, then $\phi_k|_{\scri^-} \sim \rho^\gamma$ for all $0 \leq k \leq 2s$.
 With the knowledge of all components $\phi_k|_{\scri^-}$, $0 \leq k \leq 2s$, one can then proceed to compute $(\partial_\tau \phi_0)|_{\mathscr{I}^-}$. For this we evaluate the equation $B_0=0$ at $\tau=-1$ to obtain
\[
2(\partial_\tau \phi_0)|_{\mathscr{I}^-} +\rho\partial_\rho \phi_0|_{\mathscr{I}^-} -\bmX_-\phi_1|_{\mathscr{I}^-}-s \phi_0|_{\mathscr{I}^-}=0,
\]
which is just an algebraic equation for $(\partial_\tau \phi_0)|_{\mathscr{I}^-}$. Note that, with reference to the decay $\gamma$ assumed for $\phi_0|_{\scri^-}$, the time derivative term decays at the same rate, $\partial_\tau \phi_0|_{\scri^-} \sim \rho^\gamma$.

\subsection{Higher order terms}
\label{Appendix:Expansion_Near_Past_Null_Infinity}

More generally, suppose $(\partial^q_\tau \phi_0)|_{\scri^-}$ is known for some $q \geq 1$. Taking $q$ $\tau$-derivatives of $A_k = 0$ and evaluating at $\mathscr{I}^-$, one finds 
\[
\rho\partial_\rho (\partial^q_\tau \phi_{k+1})|_{\mathscr{I}^-}+\bmX_+ (\partial^q_\tau \phi_{k})|_{\mathscr{I}^-} -(k+1+q-s) (\partial^q_\tau \phi_{k+1})|_{\mathscr{I}^-} =0,
\]
$0 \leq k \leq 2s-1$, and so one can compute the derivatives $(\partial^q_\tau \phi_{k+1})|_{\mathscr{I}^-}$ by solving ODEs along the generators of $\mathscr{I}^-$. Specifically, we set
\[ 
\partial_\tau^q\phi_{k+1}|_{\scri^-} = - \rho^{k+1+q-s} \int_0^\rho \varrho^{-(k+2+q-s)} \bmX_+ (\partial_\tau^q \phi_k )|_{\scri^-}(\varrho) \d \varrho 
\]
for $0 \leq k \leq 2s- 1$. Now, once $(\partial^q_\tau \phi_{k+1})|_{\mathscr{I}^-}$ for $ 0 \leq k \leq 2s$ are known, one may use the condition
\[
(\partial^q_\tau B_0)|_{\mathscr{I}^-}=0
\]
to solve for $(\partial^{q+1}_\tau \phi_{0})|_{\mathscr{I}^-}$. Observe that, as before, the decay rate of all derivatives $(\partial_\tau^q \phi_k)|_{\scri^-}$ is inherited from the decay rate prescribed for $\phi_0$, e.g. $(\partial_\tau^q \phi_k)|_{\scri^-} \sim \rho^\gamma$ with reference to the above example. In summary, one obtains a formal expansion near $\scri^-$ of the form
\[
\phi_k = \sum_{q'=0}^{q-1} \frac{1}{q'!} (\partial^{q'}_\tau \phi_k)|_{\mathscr{I}^-}(1 + \tau)^{q'} + I^q_k, 
\]
$0 \leq k \leq 2s$, where $I^q_k$ is the remainder at order $q \geq 0$ for the component $\phi_k$.

 

\printbibliography

@string{cmp = {Comm. Math. Phys.}}

@string{cqg = {Class. Quantum Grav.}}

@string{jmp = {J. Math. Phys.}}

@string{prl = {Phys. Rev. Lett.}}

@string{prs = {Proc. Roy. Soc. Lond. A}}

@article{BarStrohmaier2020,
title={Local Index Theory for {L}orentzian Manifolds},
author={B\"ar, C. and Strohmaier, A.},
journal={arXiv:2012.01364, to appear in Annales Scientifiques de l'\'Ecole Normale Sup\'erieure},
year={2020}
}

@article{BarGauduchonMoroianu2005,
title={Generalized cylinders in semi-{R}iemannian and spin geometry},
author={B\"ar, C. and Gauduchon, P. and Moroianu, A.},
journal={Mathematische Zeitschrift},
volume={249},
pages={545--580},
year={2005}
}

@article{ChruscielCabetWafo2014,
title={On the characteristic initial value problem for nonlinear symmetric hyperbolic systems, including {E}instein equations},
author={Cabet, A. and Chru\'sciel, P. T. and Wafo, R. Tagne},
journal={Dissertationes Mathematicae},
volume={515},
pages={1--72},
year={2016}
}

@article{BaezSegalZhou1990,
author={Baez, J.C. and Segal, I.E. and Zhou, Z.-F.},
title={The global {G}oursat problem and scattering for nonlinear wave equations},
journal={Journal of Functional Analysis},
volume={93},
pages={239--269},
year={1990}
}

@article{Taujanskas2018,
author={Taujanskas, G.},
title={Conformal scattering of the {M}axwell-scalar field system on de {S}itter space},
journal={Journal of Hyperbolic Differential Equations},
volume={16},
number={04},
pages={743--791},
year={2019}
}

@article{Friedlander1967,
	Author = {Friedlander, F. G.},
	Date-Added = {2020-06-27 16:38:11 +0100},
	Date-Modified = {2020-06-27 16:38:11 +0100},
	Journal = {Proceedings of the Royal Society of London. Series A, Mathematical and Physical Sciences},
	Pages = {264--278},
	Publisher = {The Royal Society},
	Title = {On the Radiation Field of Pulse Solutions of the Wave Equation {III}},
	Url = {https://royalsocietypublishing.org/doi/10.1098/rspa.1967.0134},
	Volume = {299},
	Year = {1967},
	Bdsk-Url-1 = {https://royalsocietypublishing.org/doi/10.1098/rspa.1967.0134}}

@article{Friedlander1964,
	Author = {Friedlander, F. G.},
	Date-Added = {2020-06-27 16:38:11 +0100},
	Date-Modified = {2020-06-27 16:38:11 +0100},
	Journal = {Proceedings of the Royal Society of London. Series A, Mathematical and Physical Sciences},
	Pages = {386--394},
	Publisher = {The Royal Society},
	Title = {On the Radiation Field of Pulse Solutions of the Wave Equation {II}},
	Url = {https://royalsocietypublishing.org/doi/10.1098/rspa.1964.0111},
	Volume = {279},
	Year = {1964},
	Bdsk-Url-1 = {https://royalsocietypublishing.org/doi/10.1098/rspa.1964.0111}}

@article{Friedlander1962,
	Author = {Friedlander, F. G.},
	Date-Added = {2020-06-27 16:38:11 +0100},
	Date-Modified = {2020-06-27 16:38:11 +0100},
	Journal = {Proceedings of the Royal Society of London. Series A, Mathematical and Physical Sciences},
	Number = {1336},
	Pages = {53--65},
	Publisher = {The Royal Society},
	Title = {On the Radiation Field of Pulse Solutions of the Wave Equation},
	Url = {https://royalsocietypublishing.org/doi/abs/10.1098/rspa.1962.0162},
	Volume = {269},
	Year = {1962},
	Bdsk-Url-1 = {https://royalsocietypublishing.org/doi/abs/10.1098/rspa.1962.0162}}

@book{LaxPhillips1967,
	Address = {New York and London},
	Author = {Lax,Peter D. and Phillips,Ralph S.},
	Date-Added = {2020-06-27 16:38:11 +0100},
	Date-Modified = {2020-06-27 16:38:11 +0100},
	Keywords = {LaxPhillips1967},
	Language = {English},
	Pages = {276},
	Publisher = {Academic Press},
	Title = {Scattering theory},
	Year = {1967}}

@article{LaxPhillips1964,
	Author = {Lax,Peter D. and Phillips,Ralph S.},
	Date-Added = {2020-06-27 16:38:11 +0100},
	Date-Modified = {2020-06-27 16:38:11 +0100},
	Isbn = {0002-9904},
	Journal = {Bull. Amer. Math. Soc.},
	Keywords = {LaxPhillips1964},
	Language = {English},
	Month = {01},
	Number = {1},
	Pages = {130-142},
	Title = {Scattering theory},
	Url = {https://projecteuclid.org:443/euclid.bams/1183525789},
	Volume = {70},
	Year = {1964},
	Bdsk-Url-1 = {https://projecteuclid.org:443/euclid.bams/1183525789}}

@article{Fried80,
    author = {F. G. Friedlander},
    journal = {Mathematical Proceedings of the Cambridge Philosophical Society},
    volume = {88},
    number = {3},
    year = {1980},
    pages = {483--515},
    title = {Radiation fields and hyperbolic scattering theory}
}

@article{MagVal22,
	author = {M. Magdy Ali Mohamed and J. A. Valiente Kroon},
	journal = {J. Math. Phys.},
	pages = {052502},
	title = {Asymptotic charges for spin-1 and spin-2 fields at the critical sets of null infinity},
	volume = {63},
	year = {2022}}

@misc{NicTau22,
	author = {J.-P. Nicolas and G. Taujanskas},
	howpublished = {In {\tt arXiv arXiv:2211.14579}},
	title = {Conformal Scattering of Maxwell Potentials},
	year = {2022}}

@article{MinMacVal22,
	author = {M. Minucci and R. P. Macedo and J. A. {Valiente Kroon}},
	date-added = {2023-03-10 09:16:23 +0000},
	date-modified = {2023-03-10 09:17:34 +0000},
	journal = {J. Math. Phys.},
	pages = {082501},
	title = {The Maxwell-scalar field system near spatial infinity},
	volume = {63},
	year = {2022}}

@article{MasNic12,
	author = {L. Mason and J.-P. Nicolas},
	journal = {J. Geom. Phys.},
	pages = {867},
	title = {Peeling of Dirac and Maxwell fields on a Schwarzschild background},
	volume = {62},
	year = {2012}}

@article{Pae18,
	author = {T.-T. Paetz},
	date-added = {2023-03-10 09:09:20 +0000},
	date-modified = {2023-03-10 09:10:54 +0000},
	journal = {J. Math. Phys.},
	pages = {102501},
	title = {On the smoothness of the critical sets of the cylinder at spatial infinity in vacuum spacetimes},
	volume = {59},
	year = {2018}}

@article{GasVal22,
	author = {E. Gasper\'{i}n and J. A. {Valiente Kroon}},
	date-added = {2022-01-13 08:58:06 +0000},
	date-modified = {2022-01-13 08:59:27 +0000},
	journal = {Class. Quantum Grav.},
	pages = {015014},
	title = {Staticity and regularity for zero rest-mass fields near spatial infinity on flat spacetime},
	volume = {39},
	year = {2022}}

@article{MagVal21,
	author = {M. {Magdy Ali Mohammed} and J.A. {Valiente Kroon}},
	date-added = {2021-07-14 08:35:48 +0000},
	date-modified = {2022-01-13 08:54:50 +0000},
	howpublished = {In {\tt arXIv:arXiv:2103.02389}},
	journal = {Class. Quantum Grav.},
	pages = {165015},
	title = {A comparison of Ashtekar's and Friedrich's formalisms of spatial infinity},
	volume = {38},
	year = {2021}}

@book{Wig59,
	author = {E. P. Wigner},
	publisher = {Academic Press},
	title = {Group theory and its application to the quantum mechanics of atomic spectra},
	year = {1959}}

@article{HinVas20,
	author = {P. Hintz and A. Vasy},
	date-added = {2018-10-11 17:37:38 +0000},
	date-modified = {2022-01-13 21:21:20 +0000},
	howpublished = {In {\tt arXiv 1711.00195}},
	journal = {Annals of PDE},
	number = {2},
	title = {A global analysis proof of the stability of Minkowski space and the polyhomogeneity of the metric},
	volume = {6},
	year = {2020}}

@article{Luk12,
	author = {J. Luk},
	date-added = {2017-05-02 09:54:53 +0000},
	date-modified = {2017-05-02 10:20:37 +0000},
	journal = {Int. Math. Res. Not.},
	pages = {4625},
	title = {On the local existence for the characteristic initial value problem in General Relativity},
	volume = {20},
	year = {2012}}

@article{BeyDouFraWha12,
	author = {F. Beyer and G. Doulis and J. Frauendiener and B. Whale},
	date-added = {2014-06-02 09:27:35 +0100},
	date-modified = {2014-06-02 09:29:25 +0100},
	journal = {Class. Quantum Grav.},
	pages = {245013},
	title = {Numerical space-times near space-like and null infinity. The spin-2 system on Minkowski space},
	volume = {29},
	year = {2012}}

@article{ChrPae12,
	author = {P. T. Chru\'{s}ciel and T.-T. Paetz},
	date-added = {2014-04-10 07:16:33 +0000},
	date-modified = {2014-04-10 07:17:46 +0000},
	journal = {Class. Quantum Grav.},
	pages = {145006},
	title = {The many ways of the characteristic Cauchy problem},
	volume = {29},
	year = {2012}}

@article{ChrPae13b,
	author = {P. T. Chru\'{s}ciel and T.-T. Paetz},
	date-added = {2014-03-24 08:05:55 +0000},
	date-modified = {2015-02-10 20:35:21 +0000},
	journal = {Class. Quantum Grav.},
	pages = {235037},
	title = {Solutions of the vacuum Einstein equations with initial data on past null infinity},
	volume = {30},
	year = {2013}}

@book{CFEBook,
	author = {J. A. {Valiente Kroon}},
	date-added = {2014-01-24 18:12:07 +0000},
	date-modified = {2016-08-01 20:08:24 +0000},
	howpublished = {Cambridge University Press ---in preparation},
	publisher = {Cambridge University Press},
	title = {Conformal Methods in General Relativity},
	year = {2016}}

@article{Kat75b,
	author = {T. Kato},
	date-added = {2013-08-28 18:03:28 +0200},
	date-modified = {2013-08-28 18:04:56 +0200},
	journal = {Lect. Notes Math.},
	pages = {25},
	title = {Quasi-linear equations of evolution, with applications to partial differential equations},
	volume = {448},
	year = {1975}}

@article{Hor90,
	author = {L. H\"{o}rmander},
	date-added = {2012-04-07 19:09:36 +0200},
	date-modified = {2012-04-07 19:12:14 +0200},
	journal = {J. Funct. Anal.},
	keywords = {Characteristic initial},
	local-url = {file://localhost/Users/Juan/Documents/collection%20of%20articles/Hor90.pdf},
	pages = {270},
	title = {A Remark on the Characteristic Cauchy Problem},
	volume = {93},
	year = {1990},
	bdsk-file-1 = {YnBsaXN0MDDSAQIDBFxyZWxhdGl2ZVBhdGhZYWxpYXNEYXRhXxAjLi4vY29sbGVjdGlvbiBvZiBhcnRpY2xlcy9Ib3I5MC5wZGZPEQGkAAAAAAGkAAIAAAxNYWNpbnRvc2ggSEQAAAAAAAAAAAAAAAAAAAAAAAAAQkQAAf////8JSG9yOTAucGRmAAAAAAAAAAAAAAAAAAAAAAAAAAAAAAAAAAAAAAAAAAAAAAAAAAAAAAAAAAAAAAAAAAAAAAAA/////wAAAAAAAAAAAAAAAAABAAIAAAogY3UAAAAAAAAAAAAAAAAAFmNvbGxlY3Rpb24gb2YgYXJ0aWNsZXMAAgBULzpVc2VyczpqdWFudmFsaWVudGVrcm9vbjpEcm9wYm94Ok1hYyAoMik6RG9jdW1lbnRzOmNvbGxlY3Rpb24gb2YgYXJ0aWNsZXM6SG9yOTAucGRmAA4AFAAJAEgAbwByADkAMAAuAHAAZABmAA8AGgAMAE0AYQBjAGkAbgB0AG8AcwBoACAASABEABIAUlVzZXJzL2p1YW52YWxpZW50ZWtyb29uL0Ryb3Bib3gvTWFjICgyKS9Eb2N1bWVudHMvY29sbGVjdGlvbiBvZiBhcnRpY2xlcy9Ib3I5MC5wZGYAEwABLwAAFQACABj//wAAAAgADQAaACQASgAAAAAAAAIBAAAAAAAAAAUAAAAAAAAAAAAAAAAAAAHy}}

@article{AshHan78,
	author = {A. Ashtekar and R. O. Hansen},
	journal = jmp,
	pages = 1542,
	title = {A unified treatment of null and spatial infinity in general relativity. {I}. {U}niversal structure, asymptotic symmetries, and conserved quantities at spatial infinity},
	volume = 19,
	year = 1978}

@article{Fri03b,
	author = {H. Friedrich},
	journal = cqg,
	pages = {101},
	title = {Spin-2 fields on Minkowski space near space-like and null infinity},
	volume = {20},
	year = {2003}}

@inproceedings{Fri04,
	author = {H. Friedrich},
	booktitle = {50 years of the Cauchy problem in general relativity},
	editor = {P. T. Chru\'{s}ciel and H. Friedrich},
	optnote = {Also in gr-qc/0304003},
	publisher = {Birkhausser},
	title = {Smoothness at null infinity and the structure of initial data},
	year = {2004}}

@article{Fri86a,
	author = {H. Friedrich},
	date-modified = {2010-10-12 21:10:55 +0100},
	journal = cmp,
	local-url = {file://localhost/Users/Juan/Documents/collection%20of%20articles/Fri86a.pdf},
	pages = {35},
	title = {On purely radiative space-times},
	volume = {103},
	year = {1986},
	bdsk-file-1 = {YnBsaXN0MDDSAQIDBFxyZWxhdGl2ZVBhdGhZYWxpYXNEYXRhXxAkLi4vY29sbGVjdGlvbiBvZiBhcnRpY2xlcy9Gcmk4NmEucGRmTxEBqgAAAAABqgACAAAMTWFjaW50b3NoIEhEAAAAAAAAAAAAAAAAAAAAAAAAAEJEAAH/////CkZyaTg2YS5wZGYAAAAAAAAAAAAAAAAAAAAAAAAAAAAAAAAAAAAAAAAAAAAAAAAAAAAAAAAAAAAAAAAAAAAAAP////8AAAAAAAAAAAAAAAAAAQACAAAKIGN1AAAAAAAAAAAAAAAAABZjb2xsZWN0aW9uIG9mIGFydGljbGVzAAIAVS86VXNlcnM6anVhbnZhbGllbnRla3Jvb246RHJvcGJveDpNYWMgKDIpOkRvY3VtZW50czpjb2xsZWN0aW9uIG9mIGFydGljbGVzOkZyaTg2YS5wZGYAAA4AFgAKAEYAcgBpADgANgBhAC4AcABkAGYADwAaAAwATQBhAGMAaQBuAHQAbwBzAGgAIABIAEQAEgBTVXNlcnMvanVhbnZhbGllbnRla3Jvb24vRHJvcGJveC9NYWMgKDIpL0RvY3VtZW50cy9jb2xsZWN0aW9uIG9mIGFydGljbGVzL0ZyaTg2YS5wZGYAABMAAS8AABUAAgAY//8AAAAIAA0AGgAkAEsAAAAAAAACAQAAAAAAAAAFAAAAAAAAAAAAAAAAAAAB+Q==}}

@article{Fri98a,
	author = {H. Friedrich},
	date-modified = {2010-10-12 21:12:34 +0100},
	journal = {J. Geom. Phys.},
	local-url = {file://localhost/Users/Juan/Documents/collection%20of%20articles/Fri98a.pdf},
	pages = 83,
	title = {Gravitational fields near space-like and null infinity},
	volume = 24,
	year = 1998,
	bdsk-file-1 = {YnBsaXN0MDDSAQIDBFxyZWxhdGl2ZVBhdGhZYWxpYXNEYXRhXxAkLi4vY29sbGVjdGlvbiBvZiBhcnRpY2xlcy9Gcmk5OGEucGRmTxEBqgAAAAABqgACAAAMTWFjaW50b3NoIEhEAAAAAAAAAAAAAAAAAAAAAAAAAEJEAAH/////CkZyaTk4YS5wZGYAAAAAAAAAAAAAAAAAAAAAAAAAAAAAAAAAAAAAAAAAAAAAAAAAAAAAAAAAAAAAAAAAAAAAAP////8AAAAAAAAAAAAAAAAAAQACAAAKIGN1AAAAAAAAAAAAAAAAABZjb2xsZWN0aW9uIG9mIGFydGljbGVzAAIAVS86VXNlcnM6anVhbnZhbGllbnRla3Jvb246RHJvcGJveDpNYWMgKDIpOkRvY3VtZW50czpjb2xsZWN0aW9uIG9mIGFydGljbGVzOkZyaTk4YS5wZGYAAA4AFgAKAEYAcgBpADkAOABhAC4AcABkAGYADwAaAAwATQBhAGMAaQBuAHQAbwBzAGgAIABIAEQAEgBTVXNlcnMvanVhbnZhbGllbnRla3Jvb24vRHJvcGJveC9NYWMgKDIpL0RvY3VtZW50cy9jb2xsZWN0aW9uIG9mIGFydGljbGVzL0ZyaTk4YS5wZGYAABMAAS8AABUAAgAY//8AAAAIAA0AGgAkAEsAAAAAAAACAQAAAAAAAAAFAAAAAAAAAAAAAAAAAAAB+Q==}}

@inproceedings{Fri98b,
	author = {H. Friedrich},
	booktitle = {Proceedings of the {G}{R}-15 conference},
	editor = {N. Dadhich and J. Narlinkar},
	optaddress = {Pune, India},
	pages = {153},
	publisher = {Inter-University Centre for Astronomy and Astrophysics},
	title = {Einstein's equation and geometric asymptotics},
	year = 1998}

@inproceedings{Ger76,
	author = {R. Geroch},
	booktitle = {Asymptotic structure of spacetime},
	editor = {E. P. Esposito and L. Witten},
	optaddress = {New York},
	publisher = {Plenum Press},
	title = {Asymptotic structure of space-time},
	year = 1976}

@book{Joh91,
	author = {F. John},
	publisher = {Springer},
	title = {Partial differential equations},
	year = {1991}}

@article{MasNic04,
	author = {L. J. Mason and J. -P. Nicolas},
	journal = {J, Hyp. Diff. Eq.},
	pages = {197},
	title = {Conformal scattering and the Goursat problem},
	volume = {1},
	year = {2004}}

@article{Pen63,
	author = {R. Penrose},
	journal = prl,
	pages = {66},
	title = {Asymptotic properties of fields and space-times},
	volume = {10},
	year = {1963}}

@article{Pen65a,
	author = {R. Penrose},
	journal = prs,
	pages = 159,
	title = {Zero rest-mass fields including gravitation: asymptotic behaviour},
	volume = 284,
	year = 1965}

@book{PenRin84,
	author = {R. Penrose and W. Rindler},
	optaddress = {Cambridge},
	publisher = {Cambridge University Press},
	title = {Spinors and space-time. {V}olume 1. {T}wo-spinor calculus and relativistic fields},
	year = 1984}

@book{PenRin86,
	author = {R. Penrose and W. Rindler},
	optaddress = {Cambridge},
	publisher = {Cambridge University Press},
	title = {Spinors and space-time. {V}olume 2. {S}pinor and twistor methods in space-time geometry},
	year = 1986}

@article{Ren90,
	author = {A. D. Rendall},
	journal = prs,
	pages = {221},
	title = {Reduction of the characteristic initial value problem to the Cauchy problem and its application to the Einstein equations},
	volume = {427},
	year = {1990}}

@book{Sze78,
	author = {G. Szeg\"{o}},
	publisher = {A{M}{S}},
	series = {{A}{M}{S} {C}olloq. {P}ub.},
	title = {Orthogonal polynomials},
	volume = {23},
	year = {1978}}

@inproceedings{Val03a,
	author = {Valiente Kroon, J. A.},
	booktitle = {The Conformal Structure of Spacetimes: Geometry, Numerics, Analysis},
	date-modified = {2012-04-07 19:40:59 +0200},
	editor = {J. Frauendiner and H. Friedrich},
	local-url = {file://localhost/Users/Juan/Documents/collection%20of%20articles/Val03a.pdf},
	pages = {135},
	publisher = {Springer},
	series = {Lecture Notes in Physics},
	title = {Polyhomogeneous expansions close to null and spatial infinity},
	year = {2002},
	bdsk-file-1 = {YnBsaXN0MDDSAQIDBFxyZWxhdGl2ZVBhdGhZYWxpYXNEYXRhXxAkLi4vY29sbGVjdGlvbiBvZiBhcnRpY2xlcy9WYWwwM2EucGRmTxEBqgAAAAABqgACAAAMTWFjaW50b3NoIEhEAAAAAAAAAAAAAAAAAAAAAAAAAEJEAAH/////ClZhbDAzYS5wZGYAAAAAAAAAAAAAAAAAAAAAAAAAAAAAAAAAAAAAAAAAAAAAAAAAAAAAAAAAAAAAAAAAAAAAAP////8AAAAAUERGIENBUk8AAQACAAAKIGN1AAAAAAAAAAAAAAAAABZjb2xsZWN0aW9uIG9mIGFydGljbGVzAAIAVS86VXNlcnM6anVhbnZhbGllbnRla3Jvb246RHJvcGJveDpNYWMgKDIpOkRvY3VtZW50czpjb2xsZWN0aW9uIG9mIGFydGljbGVzOlZhbDAzYS5wZGYAAA4AFgAKAFYAYQBsADAAMwBhAC4AcABkAGYADwAaAAwATQBhAGMAaQBuAHQAbwBzAGgAIABIAEQAEgBTVXNlcnMvanVhbnZhbGllbnRla3Jvb24vRHJvcGJveC9NYWMgKDIpL0RvY3VtZW50cy9jb2xsZWN0aW9uIG9mIGFydGljbGVzL1ZhbDAzYS5wZGYAABMAAS8AABUAAgAY//8AAAAIAA0AGgAkAEsAAAAAAAACAQAAAAAAAAAFAAAAAAAAAAAAAAAAAAAB+Q==}}

@article{Val04a,
	author = {Valiente Kroon, J. A.},
	journal = cmp,
	pages = {133},
	title = {A new class of obstructions to the smoothness of null infinity},
	volume = {244},
	year = 2004}

@article{Val04d,
	author = {Valiente Kroon, J. A.},
	date-modified = {2014-08-11 11:11:46 +0000},
	journal = cmp,
	optnote = {in {\tt gr-qc/0309016}},
	pages = {211},
	title = {Does asymptotic simplicity allow for radiation near spatial infinity?},
	volume = {251},
	year = {2004}}

@article{Val07b,
	author = {Valiente Kroon, J. A.},
	date-modified = {2010-08-29 21:32:23 +0100},
	journal = prs,
	local-url = {file://localhost/Users/Juan/Documents/collection%20of%20articles/Val07b.pdf},
	pages = {2609},
	title = {The Maxwell field on the Schwarzschild spacetime: behaviour near spatial infinity},
	volume = {463},
	year = {2007},
	bdsk-file-1 = {YnBsaXN0MDDSAQIDBFxyZWxhdGl2ZVBhdGhZYWxpYXNEYXRhXxAkLi4vY29sbGVjdGlvbiBvZiBhcnRpY2xlcy9WYWwwN2IucGRmTxEBqgAAAAABqgACAAAMTWFjaW50b3NoIEhEAAAAAAAAAAAAAAAAAAAAAAAAAEJEAAH/////ClZhbDA3Yi5wZGYAAAAAAAAAAAAAAAAAAAAAAAAAAAAAAAAAAAAAAAAAAAAAAAAAAAAAAAAAAAAAAAAAAAAAAP////8AAAAAAAAAAAAAAAAAAQACAAAKIGN1AAAAAAAAAAAAAAAAABZjb2xsZWN0aW9uIG9mIGFydGljbGVzAAIAVS86VXNlcnM6anVhbnZhbGllbnRla3Jvb246RHJvcGJveDpNYWMgKDIpOkRvY3VtZW50czpjb2xsZWN0aW9uIG9mIGFydGljbGVzOlZhbDA3Yi5wZGYAAA4AFgAKAFYAYQBsADAANwBiAC4AcABkAGYADwAaAAwATQBhAGMAaQBuAHQAbwBzAGgAIABIAEQAEgBTVXNlcnMvanVhbnZhbGllbnRla3Jvb24vRHJvcGJveC9NYWMgKDIpL0RvY3VtZW50cy9jb2xsZWN0aW9uIG9mIGFydGljbGVzL1ZhbDA3Yi5wZGYAABMAAS8AABUAAgAY//8AAAAIAA0AGgAkAEsAAAAAAAACAQAAAAAAAAAFAAAAAAAAAAAAAAAAAAAB+Q==}}

@article{Val09a,
	author = {Valiente Kroon, J. A.},
	journal = {J. Hyp. Diff. Eqns.},
	pages = {229},
	title = {Estimates for the Maxwell field near the spatial and null infinity of the Schwarzschild spacetime},
	volume = {6},
	year = {2009}}

\end{document}